\newcommand{\qedhere}{\qed}
\let\merge\undefined
\newcommand{\rref}[2][]{\prettyref{#2}}
\newcommand\footnoteref[1]{\protected@xdef\@thefnmark{\ref*{#1}}\@footnotemark}
\spnewtheorem*{notation}{Notation}{\itshape}{\rmfamily}
\let\oldsubsubsection\subsubsection
\renewcommand*{\subsubsection}[1]{\oldsubsubsection{#1.}}
\newcommand{\citeDLCHP}[1][]{%
	\ifthenelse{\equal{#1}{}}
	{\cite{BriegerMP2023}}{\cite[#1]{BriegerMP2023}}\xspace}
\newcommand{\citeReport}{\cite{BriegerMP2023unisubs}}
\newif\iflongversion\longversionfalse
\newif\ifreport\reporttrue
\newcommand{\reportonly}[1]{\ifreport#1\else\fi{}}
\begin{document}

\title{Uniform Substitution for Dynamic Logic with Communicating Hybrid Programs}
\titlerunning{Uniform Substitution for Communicating Hybrid Programs}

\author{Marvin Brieger\inst{1}
\orcidID{0000-0001-9656-2830}, Stefan Mitsch\inst{2}\orcidID{0000-0002-3194-9759}, Andr\'e Platzer\inst{2,3}\orcidID{0000-0001-7238-5710}}

\institute{
  LMU Munich, Germany \\
  \and
  Carnegie Mellon University, Pittsburgh, USA \\
  \and
  Karlsruhe Institute of Technology, Germany \\
  \email{marvin.brieger@sosy.ifi.lmu.de}\quad \email{smitsch@cs.cmu.edu}\quad \email{platzer@kit.edu}
}

\maketitle

\begin{abstract}
	This paper introduces a uniform substitution calculus for \dLCHP, the dynamic logic of communicating hybrid programs.
	Uniform substitution enables parsimonious prover kernels by using axioms instead of axiom schemata.
	Instantiations can be recovered from a single proof rule responsible for soundness-critical instantiation checks rather than being spread across axiom schemata in side conditions.
	Even though communication and parallelism reasoning are notorious for necessitating subtle soundness-critical side conditions, uniform substitution when generalized to \dLCHP manages to limit and isolate their conceptual overhead.
	Since uniform substitution has proven to simplify the implementation of hybrid systems provers substantially,
	uniform substitution for \dLCHP paves the way for a parsimonious implementation of theorem provers for hybrid systems with communication and parallelism.
\end{abstract}

\keywords{Uniform substitution \and 
Parallel programs \and Differential dynamic logic \and Assumption-commitment reasoning \and CSP}

\section{Introduction} \label{sec:intro}

\begin{wrapfigure}[7]{r}{.3\textwidth}
	\begin{center}
		\vspace*{-3em}
		\begin{prooftree*}
			\Axiom{$[ \alpha ] \varphi$}
		
			\Axiom{$[ \beta ] \psi$}
		
			\RightLabel{\sidecondition[color=dark]{($\star\star$)}}
			\BinaryInf{$[ \alpha \parOp \beta ] (\varphi \wedge \psi)$}
		\end{prooftree*}
	\end{center}
	\vspace*{-1em}
	\caption{The proof rule is only sound under subtle side conditions \sidecondition[color=dark]{($\star\star$)}.}
	\label{fig:everyone}
\end{wrapfigure}

Hybrid systems and parallel systems are notoriously subtle to analyze.
Combining both not only culminates these subtleties
but is further complicated because parallel hybrid systems are interlocked by synchronization in a  shared global time.
The \emph{dynamic logic of communicating hybrid programs} \dLCHP \citeDLCHP tames the complexity of parallel hybrid systems providing a compositional proof calculus that disentangles reasoning into purely discrete, continuous, and communication pieces.
However, the calculus is subject to schematic side conditions whose implementation is generally error-prone
causing large soundness-critical code bases \cite{DBLP:journals/jar/Platzer17}.
In particular, compositional reasoning about parallelism as in the idealized proof rule in \rref{fig:everyone}
holds the challenge to exhaustively characterize \emph{all} side conditions required to make \emph{all} instances of this proof rule sound.
Proof systems for discrete parallelism \cite{AcHoare_Zwiers,Xu1997,deRoever2001,AptdeBoerOlderog10,OwickiGries1976,LevinGries1981} already have complicated side conditions, but complexity only increases with continuous interactions in shared global time.

In order to compositionally support compositional reasoning for parallel hybrid systems, this paper generalizes Church's uniform substitution \cite{Church1956} and develops a uniform substitution calculus \cite{DBLP:journals/jar/Platzer17,DBLP:conf/cade/Platzer18,DBLP:conf/cade/Platzer19} for \dLCHP.
Uniform substitution modularizes the calculus itself enabling its parsimonious implementation.
Although applicable to discrete parallelism,
the \dLCHP development resolves the inherent challenge that parallel hybrid systems always synchronize in time.

Uniform substitution adopts a finite list of concrete formulas as axioms instead of an infinite set of formulas via axiom schemata with side conditions. 
This enables theorem provers without the extensive algorithmic checks otherwise required for each schema to sort out unsound instances.
Thanks to the proof rule~\RuleName{US} for uniform substitution,
only sound instances derive from the axioms
such that the parallel composition rule in \dLCHP could be adopted almost literally as above, but with all the soundness-critical checking encapsulated solely in rule~\RuleName{US}.
Thanks to \RuleName{US}'s checking, parallel systems reasoning even reduces to a single parallel injection axiom
\([ \alpha ] \psi \rightarrow [ \alpha \parOp \beta ] \psi\) that merely describes the preservation of property $\psi$ of one parallel component $\alpha$ in the parallel system $\alpha \parOp \beta$.
Proofs about $\alpha \parOp \beta$ reduce to a sequence of property embeddings with this axiom from local abstractions of the subcomponents, which combine soundly due to \RuleName{US}.

Soundness checks in uniform substitution are ultimately determined by the binding structures as identified in the static semantics.
The development of uniform substitution for \dLCHP is, therefore, grounded in the following key observation:
Communication and parallelism both cause additional binding structure that needs attention in the substitution process performed by rule \RuleName{US}:
\begin{enumerate}[label=(B \Roman*), leftmargin=0em,itemindent=!, align=left, itemsep=.5em]
	\item Expressions depend on communication along (co)finite channel sets (besides finitely many free variables),
	which, by the core substitution principle \cite{Church1956}, must not be introduced free into contexts where they are written. \label{itm:channels}%
	\item Subprograms in a parallel context need to be restricted in the variables and channels written as compositional proof rules for parallelism require local abstractions of subprograms not depending on the internals of the context \cite{deRoever2001}.
	\label{itm:abstraction}
\end{enumerate}

Grounded in the need for abstraction \ref{itm:abstraction},
\([ \alpha ] \psi \rightarrow [ \alpha \parOp \beta ] \psi\) can only be adopted as a sound axiom schema if $\alpha$ and $\beta$ do not share state,
and if program~$\beta$ does not interfere with the contract $\psi$, \iest
\begin{enumerate*}[label={(\roman*)}]
	\item $\psi$ has no free variables bound by~$\beta$ 
	(with exceptions), and
	\item $\psi$ does not depend on communication channels written by $\beta$ (except for channels joint with $\alpha$).
\end{enumerate*}
This extensive side condition would need nontrivial soundness-critical implementations of \dLCHP axiom schemata.
Still, uniform substitution can be lifted with only small changes locally checking for clashes with written channels, and prohibited variables or channels.

The modularity of uniform substitution is the key to the parsimonious implementation \cite{DBLP:series/lncs/MitschP20} of the theorem prover \keymaerax \cite{DBLP:conf/cade/FultonMQVP15} for differential dynamic logic \dL and differential game logic \dGL \cite{DBLP:journals/tocl/Platzer15},
thus paving the way for a straightforward theorem prover implementation of \dLCHP.
Since \dLCHP conservatively generalizes \dL \citeDLCHP,
its uniform substitution calculus inherits the complete \cite{DBLP:journals/jacm/PlatzerT20} axiomatic treatment of differential equation invariants \cite{DBLP:journals/jar/Platzer17}.
\ifreport\else
All proofs are in \citeReport.
\fi{}

\section{Dynamic Logic of Communicating Hybrid Programs} \label{sec:dlCHP}

This section briefly recaps \dLCHP \citeDLCHP, 
the dynamic logic of communicating hybrid programs (CHPs).
It combines hybrid programs \cite{DBLP:journals/jar/Platzer08} with CSP-style communication and parallelism \cite{Hoare1978}.
By assumption-commitment (ac) reasoning \cite{Misra1981,AcHoare_Zwiers,AcSemantics_Zwiers}, \dLCHP allows compositional verification of parallelism in \dL.
For uniform substitution, function and predicate symbols, and program constants are added.

\subsection{Syntax} \label{sec:sytax}

The set of variables $\V = \RVar \cup \NVar \cup \TVar$ has real ($\RVar$), integer ($\NVar$), and trace ($\TVar$) variables.
For each $x \in \RVar$, the differential symbol $x'$ is in $\RVar$, too.
The designated variable $\globalTime \in \RVar$ represents the shared global time.
The set of channel names is~$\Chan$.
By convention $x, y \in \RVar$, $\intVar \in \NVar$, $\historyVar \in \TVar$, $\ch{} \in \Chan$, and $\arbitraryVar \in \V$.
Channel set $\cset \subseteq \Chan$ is (co)finite.
Vectorial expressions are denoted $\exprvec$.
Moreover, $\fsymb[some]$, $\fsymb[alt, some]$ are
$\anysort$-valued function symbols and
$\psymb, \psymb[alt], \psymb[opt]$ are predicate symbols,
where argument sorts are annotated by $\_ : \anysort_1, \ldots, \anysort_k$.
Finally, $\pconst, \pconst[alt]$ are program constants.

\begin{definition}[Terms] \label{def:syntax_terms}
	Terms consist of real $(\Rtrm)$, integer $(\Itrm)$, channel $(\Ctrm)$, and trace $(\Ttrm)$ terms, and are defined by the grammar below, where $\rp, \rp_1, \rp_2 \in \polynoms{\rationals}{\RVar} \subset \Rtrm$ are polynomials in~$\RVar$:
	\begin{align*}
		\Rtrm: &&\re_1, \re_2 & \cceq x \mid \fsymb[real](\cset, \exprvec) \mid \re_1 + \re_2 \mid \re_1 \cdot \re_2 \mid (\rp)' \mid \val{\te} \mid \stamp{\te} \\
		\Itrm: &&\ie_1, \ie_2 & \cceq \intVar \mid \fsymb[int](\cset, \exprvec) \mid \ie_1 + \ie_2 \mid \len{\te} \\
		\Ctrm: &&\ce_1, \ce_2 &  \cceq \fsymb[chan](\cset, \exprvec) \mid \chan{\te} \\
		\Ttrm: &&\te_1, \te_2 & \cceq \historyVar \mid \fsymb[trace](\cset, \exprvec) \mid \comItem{ch, \rp_1, \rp_2} \mid \te_1 \cdot \te_2 \mid \te \downarrow \cset \mid \at{\te}{\ie}
	\end{align*}
\end{definition}

Real terms are polynomials in $\RVar$ enriched with 
function symbols~$\fsymb[real](\cset, \exprvec)$ (including constants $\const[rational] \in \rationals$)
only depending on communication along channels~$\cset$ and terms $\exprvec$, differential terms $(\rp)'$, and $\val{\te}$ and $\stamp{\te}$,
which access the value and the timestamp of the last communication in~$\te$, respectively. 
By convention, $\rp \in \polynoms{\rationals}{\RVar}$ denotes a pure polynomial in~$\RVar$ without $(\cdot)'$, $\val{\cdot}$, and $\stamp{\cdot}$ as they occur in programs. 
For simplicity, we do not define 
$\polynoms{\rationals}{\RVar} \subset \Rtrm$
as a fifth term sort but use the convention that function symbols $\fsymb[alt, real]$ can only be replaced with  $\polynoms{\rationals}{\RVar}$-terms.
Integer terms are variables $\intVar$, 
function symbols $\fsymb[int](\cset, \exprvec)$ (including constants $0$, $1$), addition, and length~$\len{\te}$ of trace term~$\te$.%
\footnote{Omitting multiplication results in decidable Presburger arithmetic \cite{Presburger1931}.}
The function symbol $\fsymb[chan](\cset, \exprvec)$ includes constants $\ch{} \in \Chan$, and $\chan{\te}$ is channel access.
Trace terms record the communication history of programs.
They encompass variables $\historyVar$, function symbols~$\fsymb[trace](\cset, \exprvec)$ (including the empty trace~$\epsilon$), communication items $\comItem{\ch{}, \rp_1, \rp_2}$ with value $\rp_1$ and timestamp $\rp_2$, projection $\te \downarrow \cset$ onto channels $\cset$, and access $\at{\te}{\ie}$ of the $\ie$-th item in $\te$.
Where useful, $\fsymb[builtin](\exprvec)$ denotes built-in function symbols of fixed interpretation, \eg~$\usarg + \usarg$.

\dLCHP's context-sensitive program and formula syntax presumes notions of free and bound variables (\rref{sec:static_semantics}) defined on the context-free syntax:

\begin{definition}[Programs] \label{def:syntax_chps}
	\emph{Communicating hybrid programs} are defined by the following grammar,
	where $\rp \in \polynoms{\rationals}{\RVar}$ is a polynomial in $\RVar$ and $\chi \in \FolRA$ is a formula of first-order real-arithmetic.
	In $\alpha \parOp \beta$, the subprograms must not share state but can share time and history, \iest 
	$\SBV(\alpha) \cap \SBV(\beta) \subseteq \{\globalTime,\globalTime'\} \cup \TVar$.%
	\footnote{
		Previous work \citeDLCHP disallows reading of variables bound in parallel as their change is not observable.
		This restriction is conceptually desirable but not soundness-critical.
		Here we drop it for simplicity, but it could be maintained by \RuleName{US} as well.
	}
	\begin{align*}
		\alpha, \beta \cceq \;
		& \rpconst{}{} \mid x \ceq \rp \mid x \ceq * \mid \test{} \mid \evolution{}{} \mid \alpha \seq \beta \mid \alpha \cup \beta \mid \repetition{\alpha} \mid \\
		& \send{}{}{} \mid \receive{}{}{} \mid \alpha \parOp \beta
	\end{align*}
\end{definition}

The program constant $\rpconst{}{}$ restricts the written channels to $\cset \subseteq \Chan$ and the bound variables to $\varvec \subseteq \RVar \cup \TVar$,
where $\cset$ and $\varvec$ are (co)finite.
Instead of $\rpconst[free={\RVar\cup\TVar}]{\cset}{\varvec}$, write $\pconst$ if $\cset$ and $\varvec$ can be arbitrary.
Assignment $x \ceq \rp$ updates~$x$ to~$\rp$,
nondeterministic assignment $x \ceq *$ assigns an arbitrary real value to~$x$,
and the test $\test{}$ does nothing if $\chi$ holds and aborts the computation otherwise.
The continuous evolution $\evolution{}{}$ follows the ODE $x' = \rp$ for any duration as long as formula $\chi$ is not violated.
The global time $\globalTime$ evolves with every continuous evolution according to ODE $\globalTime' = 1$.
Sequential composition $\alpha \seq \beta$ executes~$\beta$ after~$\alpha$, choice $\alpha \cup \beta$ executes $\alpha$ or $\beta$ nondeterministically,~$\repetition{\alpha}$ repeats $\alpha$ zero or more times,
$\send{}{}{}$ sends $\rp$ along channel $\ch{}$,
and $\receive{}{}{}$ receives a value into variable $x$ along channel $\ch{}$.
The trace variable $\historyVar$ records communication.
Finally, $\alpha \parOp \beta$ executes $\alpha$ and $\beta$ in parallel synchronized in global time $\globalTime$. 

\newcommand{\controllerName}{\progtt{ct}}
\newcommand{\vehicleName}{\progtt{ve}}
\newcommand{\controller}{\repetition{\controllerName}}
\newcommand{\vehicle}{\repetition{\vehicleName}}

\newcommand{\velo}[1]{v_\text{#1}}
\newcommand{\vtar}[1]{v^\text{tr}_\text{#1}}
\newcommand{\acceleration}{a_\vehicleName}

\newcommand{\maxvelo}{V}
\newcommand{\maxdiff}{\delta_V}

\newcommand{\vrange}[1]{0 \le #1 \le \maxvelo}
\newcommand{\vsafe}[2]{\vrange{#1} \wedge \abs{#1 - #2} \le \maxdiff}

\begin{example}
	\label{ex:prog}
	The program $\controller \parOp \vehicle$ models a simplified cruise control \cite{DBLP:conf/ifm/MullerMRSP16}.
	The vehicle~$\vehicleName$ repeatedly receives a target velocity $\vtar{\vehicleName}$ from the controller $\controllerName$ along channel $\ch{tar}$.
	The target $\vtar{\controllerName}$ sent by $\controllerName$ is in range $[0,\maxvelo]$. 
	Hence, $\vehicleName$'s velocity~$\velo{\vehicleName}$ stays in range $[0, \maxvelo]$ within the $\epsilon > 0$ time units till the next communication if $\velo{\vehicleName} \in [0,\maxvelo]$ held initially.
	The evolution $\evolution{t' = 1}{non}$ allows passage of time in $\controllerName$.
	\begin{align*}
		\controllerName & \equiv 
			\vtar{\controllerName} \ceq * \seq 
			\test{(\vrange{\vtar{\controllerName}})} \seq 
			\send{\ch{tar}}{}{\vtar{\controllerName}} \seq \evolution{t' = 1}{non} \\
		\vehicleName & \equiv
			\receive{\ch{tar}}{}{\vtar{\vehicleName}} \seq 
			\acceleration \ceq \frac{\vtar{\vehicleName} - \velo{\vehicleName}}{\epsilon} \seq 
			t_0 \ceq \globalTime \seq 
			\evolution{\velo{\vehicleName}' = \acceleration}{\globalTime - t_0 \le \epsilon}
	\end{align*}
\end{example}

\begin{definition}[Formulas] \label{def:syntax_formulas}
	Formulas are defined by the grammar below 
	for relations $\sim$,
	terms $\expr_1, \expr_2 \in \Trm$ of equal sort,
	and $\arbitraryVar \in \V$.
	Moreover, the ac-formulas are unaffected by state change in $\alpha$,
	\iest $(\SFV(\A) \cup \SFV(\C)) \cap \SBV(\alpha) \subseteq \TVar$.
	\begin{align*}
		\varphi, \psi, \A, \C \cceq \; 
			& \expr_1 \sim \expr_2 \mid \psymb(\cset, \exprvec) \mid 
			\neg \varphi \mid \varphi \wedge \psi 
			\mid \fa{\arbitraryVar} \varphi 
			\mid [ \alpha ] \psi \mid [\alpha] \ac \psi
	\end{align*}
\end{definition}

The formulas combine first-order dynamic logic with ac-reasoning.
Predicate symbols $\psymb(\cset, \exprvec)$ depend on channels $\cset$ and terms $\exprvec$.
The ac-box $[ \alpha ] \ac \psi$ expresses that $\C$ holds after each communication event and~$\psi$ in the final state, for all runs of $\alpha$ whose incoming communication satisfies~$\A$.
Other connectives $\vee$, $\rightarrow$, $\leftrightarrow$ and quantifiers $\ex{\arbitraryVar} \varphi \equiv \neg \fa{\arbitraryVar} \neg \varphi$ can be derived.
The relations $\sim$ include~$=$ for all term sorts, $\ge$ on real and integer terms, and prefixing $\preceq$ on trace terms.

By convention, the predicate symbol $\psymb[alt,real]$ can only be replaced with formulas of first-order real arithmetic.
It serves as placeholder for tests $\chi$ in CHPs.

\newcommand{\ccpre}{{\varphi}}
\newcommand{\ccsafe}{{\psi_\text{safe}}}

\begin{example}
	\label{ex:safety}
	The cruise control from \rref{ex:prog} is safe if its velocity stays in range $[0, \maxvelo]$.
	This can be expressed with the formula $\ccpre \rightarrow [ \controller \parOp \vehicle ] \ccsafe$,
	where $\ccsafe \equiv \vrange{\velo{\vehicleName}}$ and $\ccpre \equiv \ccsafe \wedge \epsilon > 0 \wedge \maxvelo > 0$.
\end{example}

\subsection{Semantics} \label{sec:semantics}

A \emph{trace} $\trace = (\trace_1, ..., \trace_k)$ is a finite chronological sequence of communication events $\trace_i = \comItem{\ch{}_i, \semConst_i, \duration_i}$,
where $\ch{}_i \in \Chan$, and $\semConst_i \in \reals$ is the communicated value, and $\duration_i \in \reals$ is a timestamp such that $\duration_i < \duration_j$ for $1 \le i < j \le k$.
A \emph{recorded trace} $\trace = (\trace_1, ..., \trace_k)$ additionally carries a trace variable $\historyVar_i \in \TVar$ with each event, \iest $\trace_i = \comItem{\historyVar_i, \ch{}_i, \semConst_i, \duration_i}$.
For variable $\arbitraryVar \in \V_\anysort$ and $\anysort \in \{\reals, \naturals, \traces\}$, let $\type(\arbitraryVar) = \anysort$.
A \emph{state} $\pstate{v}$ maps each $\arbitraryVar \in \V$ to a value $\pstate{v}(\arbitraryVar) \in \type(\arbitraryVar)$.
The sets of traces, recorded traces, and states are denoted $\traces$, $\recTraces$, and $\states$, respectively.

For $\semConst \in \type(\arbitraryVar)$, the state $\pstate{v} \subs{\arbitraryVar}{\semConst}$ is the modification of $\pstate{v}$ at $\arbitraryVar$ to $\semConst$.
For $\trace \in \recTraces$, the trace $\trace(\historyVar) \in \traces$ is obtained from the subsequence of $\trace$ carrying $\historyVar \in \TVar$ by removing the carried variable.
\emph{State-trace concatenation} $\stconcat{\pstate{v}}{\trace} \in \states$ for $\trace \in \recTraces$,
appends $\trace(\historyVar)$ to $\pstate{v}$ at~$\historyVar$ for all $\historyVar \in \TVar$.
The \emph{projection} $\trace \downarrow \cset$ of (recorded) trace~$\trace$ is the subsequence of all communication events in $\trace$ whose channel is in $\cset \subseteq \Chan$.
The \emph{state projection} $\pstate{v} \downarrow \cset \in \states$ modifies $\pstate{v}$ at $\historyVar$ to $\pstate{v}(\historyVar) \downarrow \cset$ for all $\historyVar \in \TVar$.

An \emph{interpretation} $\inter$ assigns a function 
$\interOf{\fsymb[some] : \anysort_1, \ldots, \anysort_k} : \bigtimes_{i=1}^k \anysort_i \rightarrow \anysort$
to each function symbol $\fsymb[some]$ 
that is smooth in all real-valued arguments if $\anysort = \reals$,
and a relation $\interOf{\psymb : \anysort_1, \ldots, \anysort_k} \subseteq \bigtimes_{i=1}^k \anysort_i$ to each $k$-ary predicate symbol $\psymb$.

\begin{definition}[Term Semantics]
	The \emph{valuation} $\sem{\expr}{\lstate{v}} \in \reals \cup \naturals \cup \Chan \cup \traces$ of term~$\expr$ in interpretation $\inter$ and state $\pstate{v}$ is defined as follows: 
	\begin{align*}
		\sem{\arbitraryVar}{\lstate{v}} 
			& = \pstate{v}(\arbitraryVar) \\
		\sem{\fsymb(\cset, \expr_1, ..., \expr_k)} {\lstate{v}} 
			& = \interOf{\fsymb}(\sem{\expr_1}{\lstate[opt]{v}}, ..., \sem{\expr_k}{\lstate[opt]{v}}) 
			&&\sidecondition[color=black]{where $\pstate[alt]{v} = \pstate{v} \downarrow \cset$} \\
		\sem{\fsymb[builtin](\expr_1, \ldots, \expr_k)}{\lstate{v}} 
			& = \fsymb[builtin](\sem{\expr_1}{\lstate{v}}, \ldots, \sem{\expr_k}{\lstate{v}}) 
			&&\sidecondition[color=black]{for builtin $\fsymb[builtin] \in \{\usarg+\usarg, \usarg\downarrow\cset, \ldots\}$} \\
		\sem{(\rp)'}{\lstate{v}} 
			& = \sum_{\mathclap{x \in \RVar}} \pstate{v}(x') \frac{\partial \sem{\rp}{\lstate{v}}}{\partial x}
	\end{align*}
\end{definition}

The projection $\pstate[alt]{v} = \pstate{v} \downarrow \cset$ ensures that $\fsymb(\cset, \exprvec)$ only depends on $\cset$,
\iest the communication in $\pstate{v}$ along channels $\cset^\complement$ does not matter.
The differentials~$(\rp)'$ have a semantics describing the local rate of change of $\rp$ \cite{DBLP:journals/jar/Platzer17}.

The denotational semantics of CHPs \citeDLCHP combines \dL's Kripke semantics \cite{DBLP:journals/jar/Platzer17} with a 
linear history semantics \cite{AcSemantics_Zwiers} and a global notion of time. 
Denotations are subsets of $\pDomain = \states \times \recTraces \times \botop{\states}$ with $\botop{\states} = \states \cup \{\bot\}$.
Final state $\bot$ marks an unfinished computation,
\iest it still can be continued or was aborted due to a failing test.
If ($\pstate[pre]{w} = \bot$ and $\trace[pre] \preceq \trace$),
where $\preceq$ is the prefix relation on traces,
or $(\trace[pre], \pstate[pre]{w}) = (\trace, \pstate{w})$,
then $\observable[pre]$ is a prefix of $\observable$ written
$\observable[pre] \preceq \observable$.
Since (even empty) communication of unfinished computations is still observable,
denotations~$\denotation \subseteq \pDomain$ of CHPs are prefix-closed and total,
\iest $\computation \in \denotation$ and $\observable[pre] \preceq \observable$ implies $\computation[pre] \in \denotation$, and $\pLeast \subseteq \denotation$ with $\pLeast = \states \times \{\epsilon\} \times \{\bot\}$.
Moreover, all $\computation \in \denotation$ are chronological,
\iest $\statetime{\pstate{v}} \le \statetime{\pstate{w}}$ and when $\trace = (\trace_1, \ldots, \trace_k) \neq \epsilon$ and let $\statetime{\trace_i} = \statetime{(\comItem{\historyVar_i, \ch{}_i, \semConst_i, \duration_i})} = \duration_i$, then $\statetime{\pstate{v}} \le \statetime{\trace_1}$ and if $\pstate{w} \neq \bot$, then $\statetime{\trace_k} \le \statetime{\pstate{w}}$.
Note that $\trace$ is chronological as all traces are.

The interpretation $\interOf{\rpconst{}{}} \subseteq \pDomain$ of a program constant $\rpconst{}{}$ is a prefix-closed and total set of chronological computations that 
\begin{enumerate*}[label=(\roman*)]
	\item only communicate along (write) channels $\cset$ and\label{itm:inter_1}
	\item only bind variables $\varvec$\label{itm:inter_2}.
\end{enumerate*}
More precisely, for all $\computation \in \interOf{\rpconst{}{\varvec}}$, we have 
\ref*{itm:inter_1} $\trace \downarrow \cset^\complement = \epsilon$, and 
\ref*{itm:inter_2} $\pstate{v} = \pstate{w}$ on $\TVar$ and $\stconcat{\pstate{w}}{\trace} = \pstate{v}$ on $\varvec^\complement$.
For $\denotation, \denotation[alt] \subseteq \pDomain$,
we define $\botop{\denotation} = \{ (\pstate{v}, \trace, \bot) \mid \computation \in \denotation \}$,
and $\computation \in \denotation \continuation \denotation[alt]$ if $(\pstate{v}, \trace_1, \pstate{u}) \in \denotation$ and $(\pstate{u}, \trace_2, \pstate{w}) \in \denotation[alt]$ exist with $\trace = \trace_1 \cdot \trace_2$.
For states $\pstate{w}_\alpha, \pstate{w}_\beta$, the merged state $\pstate{w}_\alpha \merge \pstate{w}_\beta$ is $\bot$ if one of the substates $\pstate{w}_\alpha$ or $\pstate{w}_\beta$ is~$\bot$.
Otherwise, $\pstate{w}_\alpha \merge \pstate{w}_\beta = \pstate{w}_\alpha$ on $\SBV(\alpha)$ and $\pstate{w}_\alpha \merge \pstate{w}_\beta = \pstate{w}_\beta$ on $\SBV(\alpha)^\complement$
(or, equivalently by syntactic well-formedness, on $\SBV(\beta)^\complement$ and $\SBV(\beta)$, respectively).
If $\cset$ is the set of all channel names occurring in $\alpha$,
we write $\trace \downarrow \alpha$ for $\trace \downarrow \cset$.

\begingroup
\allowdisplaybreaks
\begin{definition}[Program semantics]\label{def:programsemantics}
	Given an interpretation $\inter$, the \emph{semantics} $\sem{\alpha}{\inter} \subseteq \pDomain$ of a CHP $\alpha$ is defined as follows,
	where $\pLeast = \states \times \{\epsilon\} \times \{\bot\}$ 
	and~$\vDash$ denotes the satisfaction relation (\rref{def:formulaSemantics}): 
	\begingroup
	\begin{align*}%
		%
		%
		& \sem{\rpconst{}{}}{\inter}
		= \interOf{\rpconst{}{}} \\
		&\sem{x \ceq \rp}{\inter}
		= \pLeast \cup \{ (\pstate{v}, \epsilon, \pstate{w}) \mid \pstate{w} = \pstate{v} \subs{x}{\semConst} \text{ where } \semConst = \sem{\rp}{\lstate{v}} \} \\
		&\sem{x \ceq *}{\inter}
		= \pLeast \cup \{ (\pstate{v}, \epsilon, \pstate{w}) \mid \pstate{w} = \pstate{v} \subs{x}{\semConst} \text{ where } \semConst \in \reals \} \\
		&\sem{\test{}}{\inter}
		= \pLeast \cup \{ (\pstate{v}, \epsilon, \pstate{v}) \mid \lstate{v} \vDash \chi \} \\
		&\sem{\evolution{}{}}{\inter} 
		= \pLeast \cup \big\{ (\pstate{v}, \epsilon, \odeSolution(\duration)) \mid
			\pstate{v} = \odeSolution(0) \text{ on } \{\globalTime', x'\}^\complement \text{, and } \odeSolution(\zeta) = \odeSolution(0) \\
			&\quad \text{on } \{ x, x', \globalTime, \globalTime'\}^\complement\text{, and }
			\lsolution(\zeta) \vDash \globalTime' = 1 \wedge x' = \rp \wedge \chi
			\text{ for all } \zeta \in [0, \duration] \text{ and} \\
			&\quad 
			\text{a solution } \odeSolution : [0, \duration] \rightarrow \states \text{ with } \odeSolution(\zeta)(\arbitraryVar') = \solutionDerivative{\odeSolution}{\arbitraryVar}(\zeta) \text{ for } \arbitraryVar \in \{x,\globalTime\} \big\} \\
		%
		%
		&\sem{\send{}{}{}}{\inter}
		= \{ \computation \mid \observable \preceq \semCom{\semConst}{\pstate{v}} \text{ where } \semConst = \sem{\rp}{\lstate{v}} \} \\
		&\sem{\receive{}{}{}}{\inter} 
		= \{ \computation \mid \observable \preceq \semCom{\semConst}{\pstate{v} \subs{x}{\semConst}} \text{ where } \semConst \in \reals \} \\
		%
		%
		&\sem{\alpha \cup \beta}{\inter}
		= \sem{\alpha}{\inter} \cup \sem{\beta}{\inter} \\
		&\sem{\alpha \seq \beta}{\inter} 
		= \sem{\alpha}{\inter} \closedComposition \sem{\beta}{\inter} 
		\equalsdef \botop{(\sem{\alpha}{\inter})} \cup (\sem{\alpha}{\inter} \continuation \sem{\beta}{\inter})\\
		&\sem{\repetition{\alpha}}{\inter} 
		= \bigcup_{n \in \naturals} (\sem{\alpha}{\inter})^n 
		= \bigcup_{n \in \naturals} \sem{\alpha^n}{\inter} 
		\sidecondition[color=black]{\quad where $\alpha^0 \equiv \test{\true}$ and $\alpha^{n+1} = \alpha \seq \alpha^n$} \\
		&\sem{\alpha_1 \parOp \alpha_2}{\inter}
		= \Bigg\{ (\pstate{v}, \trace, \pstate{w}_{\alpha_1} \merge \pstate{w}_{\alpha_2})
		\;\bigg\vert\; 
		\begin{aligned}
			&\computation[proj={\alpha_j}] \in \sem{\alpha_j}{\inter} \text{ for } j = 1, 2 \text{, and } \\
			&\pstate{w}_{\alpha_1} = \pstate{w}_{\alpha_2} \text{ on } \{ \globalTime, \globalTime' \} \text{, and }
			\trace = \trace \downarrow (\alpha_1{\parOp}\alpha_2)
		\end{aligned} 
		\Bigg\}
	\end{align*}
	\endgroup
\end{definition}
\endgroup

The semantics is indeed constructed prefix-closed, total, and chronological.
Communication $\trace$ of $\alpha_1 \parOp \alpha_2$ is implicitly characterized via its subsequences for the subprograms.
By $\trace = \trace \downarrow (\alpha_1 \parOp \alpha_2)$, there is no non-causal communication.
Joint communication and the whole computation are synchronized in global time by the projections and by $\pstate{w}_{\alpha_1} = \pstate{w}_{\alpha_2}$ on $\{ \globalTime, \globalTime' \}$, respectively.
Likewise, by projection, communication is synchronously recorded by trace variables.

\begin{definition}[Formula semantics]\label{def:formulaSemantics}
	The \emph{satisfaction} $\lstate{v} \vDash \phi$ of a \dLCHP formula~$\phi$ in interpretation $\inter$ and state $\pstate{v}$ is inductively defined as follows:
	\begin{enumerate}
		\item $\lstate{v} \vDash \expr_1 {\sim} \expr_2$ if $\sem{\expr_1}{\lstate{v}} \sim \sem{\expr_2}{\lstate{v}}$ \sidecondition[color=black]{\quad where $\sim$ is any relation symbol}
		\item $\lstate{v} \vDash \psymb(\cset, \expr_1, \ldots, \expr_k)$ if $(\sem{\expr_1}{\lstate[opt]{v}}, \ldots, \sem{\expr_k}{\lstate[opt]{v}}) \in \interOf{\psymb}$
		\sidecondition[color=black]{\quad where $\pstate[alt]{v} = \pstate{v} \downarrow \cset$}
		\item $\lstate{v} \vDash \varphi \wedge \psi$ if $\lstate{v} \vDash \varphi$ and $\lstate{v} \vDash \psi$
		\item $\lstate{v} \vDash \neg \varphi$ if $\lstate{v} \nvDash \varphi$, \iest it is not the case that $\lstate{v} \vDash \varphi$
		\item $\lstate{v} \vDash \fa{\arbitraryVar} \varphi$ if $\lstate{v} \subs{\arbitraryVar}{d} \vDash \varphi$ for all $d \in \type(\arbitraryVar)$
		\item \label{itm:dynBoxSem}
		$\lstate{v} \vDash [ \alpha ] \psi$ if $\stconcat{\lstate{w}}{\trace}\vDash \psi$ for all $\computation \in \sem{\alpha}{\inter}$ with $\pstate{w} \neq \bot$ 
		\item \label{itm:acBoxSem}
		$\lstate{v} \vDash [ \alpha ] \ac \psi$ if for all $\computation \in \sem{\alpha}{\inter}$ the following conditions hold:
		\begin{align}
			\vspace{-.7em}
			& \assCommit{\lstate{v}}{\trace} \vDash \A \text{ implies } \stconcat{\lstate{v}}{\trace} \vDash \C \tag{commit} \label{eq:commit} \\
			&\big( \assPost{\lstate{v}}{\trace} \vDash \A \text{ and } \pstate{w} \neq \bot \big) \text{ implies } \stconcat{\lstate{w}}{\trace} \vDash \psi \tag{post} \label{eq:post}
		\end{align}%
		Where $U \vDash \varphi$ for a set of interpretation-state pairs $U$ and any formula $\varphi$ if $\lstate{v} \vDash \varphi$ for all $\lstate{v} \in U$. 
		In particular, $\emptyset \vDash \varphi$.
	\end{enumerate}
\end{definition}

In \rref{itm:dynBoxSem} and \ref{itm:acBoxSem}, reachable worlds are built from states $\pstate{v}$ and $\pstate{w}$, and communication $\trace$, as change of state \emph{and} communication are observable.
The strict prefix $\prec$ for the assumption in case \acCommit in \rref{itm:dynBoxSem} excludes (when $\A \equiv \C$) the circularity that commitment $\C$ can be shown in states where it is assumed.

\subsection{Static Semantics} \label{sec:static_semantics}

In the uniform substitution process, checks of free and bound variables, as well as accessed and written channels, separate sound from unsound axiom instantiations.
As parallelism requires fine-grained control over channels,
the static semantics for \dL \cite{DBLP:journals/jar/Platzer17} is lifted to a communication-aware static semantics for \dLCHP.
It uses accessed channels to characterize the subsequence of a communication trace influencing truth of a formula even more precisely than free variables.

To precisely grasp free and bound variables, and accessed and written channels, 
\rref{def:staticSemantics} gives a semantic characterization.
In this section, formulas are considered truth-valued,
\iest $\sem{\phi}{\lstate{v}} = \mathbf{tt}$ if $\lstate{v} \vDash \phi$ and $\sem{\phi}{\lstate{v}} = \mathbf{ff}$ if $\lstate{v} \nvDash \phi$.

\begin{definition}[Static semantics]
	\label{def:staticSemantics}
	For term or formula $\expr$, and program $\alpha$,
	free variables $\SFV(\expr)$ and $\SFV(\alpha)$, bound variables $\SBV(\alpha)$, accessed channels $\SCN(\expr)$, and written channels $\SCN(\alpha)$ form the static semantics.
	\begin{align*}
		\SFV(\expr) 
			& = \{ \arbitraryVar \in \V \mid \eexists \inter, \pstate{v}, \pstate[alt]{v} \text{ such that } \pstate{v} = \pstate[alt]{v} \text{ on } \{ \arbitraryVar \}^\complement \text{ and } \sem{\expr}{\lstate{v}} \neq \sem{\expr}{\lstate[opt]{v}} \} \\
		\SCN(\expr) 
			& = \{ \ch{} \in \Chan \mid \eexists \inter, \pstate{v}, \pstate[alt]{v} \text{ such that } \pstate{v} \downarrow \{ \ch{} \}^\complement = \pstate[alt]{v} \downarrow \{ \ch{} \}^\complement \text{ and } \sem{\expr}{\lstate{v}} \neq \sem{\expr}{\lstate[opt]{v}} \} \\
		\SFV(\alpha) 
			& = \{ \arbitraryVar \in \V \mid \eexists \inter, \pstate{v}, \pstate[alt]{v}, \trace, \pstate{w} \text{ such that } \pstate{v} = \pstate[alt]{v} \text{ on } \{ \arbitraryVar \}^\complement \text{ and } \computation \in \sem{\alpha}{\inter} \text{,} \\[-3pt]
			& \quad\,\,\,\,\,\, \text{and there is no } \computation[alt] \in \sem{\alpha}{\inter} \text{ such that } \trace[alt] = \trace \text{ and } \pstate{w} = \pstate[alt]{w} \text{ on } \{ \arbitraryVar \}^\complement \} \\
		\SBV(\alpha)
			& = \{ \arbitraryVar \in \V \mid \eexists \inter, \computation \in \sem{\alpha}{\inter} \text{ such that } \pstate{w} \neq \bot \text{ and } (\stconcat{\pstate{w}}{\trace})(\arbitraryVar) \neq \pstate{v}(\arbitraryVar) \} \\
		\SCN(\alpha)
			& = \{ \ch{} \in \Chan \mid \eexists \inter, \computation \in \sem{\alpha}{\inter} \text{ such that } \trace \downarrow \{\ch{}\} \neq \epsilon \}
	\end{align*}
\end{definition}

The already subtle static semantics of hybrid systems \cite{DBLP:journals/jar/Platzer17}
becomes even more subtle with communication and parallelism.
For example, CHPs (silently) synchronize with the global time $\globalTime$,
which is free and bound in ODEs,
and the differential~$\globalTime'$ is bound,
\iest $\globalTime \in \SFV(\evolution{}{})$ and $\globalTime, \globalTime' \in \SBV(\evolution{}{})$ if the evolution has a run of non-zero duration,
regardless of whether $\globalTime$ occurs in~$x$.
Since reachable worlds of CHPs 
consist of communication \emph{and} state,
bound variables $\SBV(\alpha)$ of program $\alpha$ compare $\pstate{v}$ with the state-trace concatenation $\pstate{w} \cdot \trace$ instead of missing $\trace$.
Consequently, $\historyVar \in \SBV(\send{}{}{}) \subseteq \SFV(\send{}{}{})$,
which also reflects that the initial communication never gets lost.
\reportonly{All proofs for this section and computable overapproximations of the static semantics  are in \rref{app:staticSemantics}.}

\begin{lemma}[Bound effect property] \label{lem:boundEffect}
	The sets $\SBV(\alpha)$ and $\SCN(\alpha)$ are the smallest sets with the \emph{bound effect property for program} $\alpha$. 
	That is, $\pstate{v} = \pstate{w}$ on~$\TVar$ and $\pstate{v} = \stconcat{\pstate{w}}{\trace}$ on $\SBV(\alpha)^\complement$ if $\pstate{w} \neq \bot$, and $\trace \downarrow \SCN(\alpha)^\complement = \epsilon$ for all $\computation \in \sem{\alpha}{\inter}$.
\end{lemma}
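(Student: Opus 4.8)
The plan is to establish two things separately: that $\SBV(\alpha)$ and $\SCN(\alpha)$ have the bound effect property, and that they are the least sets that do. Of the three conjuncts, the two mentioning $\SBV(\alpha)$ and $\SCN(\alpha)$ are essentially unfoldings of their semantic definitions, whereas $\pstate{v} = \pstate{w}$ on $\TVar$ is the only conjunct requiring a structural induction on $\alpha$.

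For the property itself, fix $\computation \in \sem{\alpha}{\inter}$ with $\pstate{w} \neq \bot$. If $\arbitraryVar \in \SBV(\alpha)^\complement$, then by the definition of $\SBV(\alpha)$ no interpretation and computation witnesses a change of $\arbitraryVar$ under state-trace concatenation, so $(\stconcat{\pstate{w}}{\trace})(\arbitraryVar) = \pstate{v}(\arbitraryVar)$; ranging over $\arbitraryVar$ gives $\pstate{v} = \stconcat{\pstate{w}}{\trace}$ on $\SBV(\alpha)^\complement$. Likewise every $\ch{} \in \SCN(\alpha)^\complement$ has $\trace \downarrow \{\ch{}\} = \epsilon$ by the definition of $\SCN(\alpha)$, and since $\trace \downarrow C = \epsilon$ as soon as $\trace \downarrow \{\ch{}\} = \epsilon$ for every $\ch{} \in C$, this yields $\trace \downarrow \SCN(\alpha)^\complement = \epsilon$ (in fact without using $\pstate{w} \neq \bot$, matching that $\SCN(\alpha)$ ranges over all computations). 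For $\pstate{v} = \pstate{w}$ on $\TVar$ I would induct on $\alpha$: for a program constant this is clause \ref*{itm:inter_2} of its interpretation; for $x \ceq \rp$, $x \ceq {*}$, $\evolution{}{}$, $\send{}{}{}$, and $\receive{}{}{}$ the final state differs from $\pstate{v}$ only at real variables; for $\test{\chi}$ the final state is $\pstate{v}$ itself; for $\alpha \seq \beta$, $\alpha \cup \beta$, and $\repetition{\alpha}$ the claim propagates through $\closedComposition$, $\cup$, and finite iteration by the induction hypothesis (using that initial states are never $\bot$); and for $\alpha_1 \parOp \alpha_2$ the merged final state agrees on each trace variable with $\pstate{w}_{\alpha_1}$ or with $\pstate{w}_{\alpha_2}$, both of which agree with $\pstate{v}$ on $\TVar$ by the induction hypothesis, since $\pstate{w}_{\alpha_1} \merge \pstate{w}_{\alpha_2} \neq \bot$ only if both constituents are.

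For minimality I would argue by contraposition straight from the definitions. Let $S$ and $C$ be sets satisfying the bound effect property in place of $\SBV(\alpha)$ and $\SCN(\alpha)$. Any $\arbitraryVar \in \SBV(\alpha)$ comes with a witnessing $\inter$ and $\computation$ with $\pstate{w} \neq \bot$ and $(\stconcat{\pstate{w}}{\trace})(\arbitraryVar) \neq \pstate{v}(\arbitraryVar)$; were $\arbitraryVar \in S^\complement$, the property for $S$ would force equality, so $\arbitraryVar \in S$ and hence $\SBV(\alpha) \subseteq S$. Symmetrically, any $\ch{} \in \SCN(\alpha)$ has a witness with $\trace \downarrow \{\ch{}\} \neq \epsilon$, which contradicts $\ch{} \in C^\complement$ together with $\trace \downarrow C^\complement = \epsilon$, so $\SCN(\alpha) \subseteq C$. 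The main obstacle is the parallel case of the $\TVar$ induction: one has to unpack the semantics of the projected subcomputations $\computation[proj={\alpha_1}]$, $\computation[proj={\alpha_2}]$ and of the merge operator $\merge$ carefully enough to apply the induction hypothesis to both subprograms from the common initial state $\pstate{v}$; everything else reduces to routine checks against the definitions of \rref{sec:semantics} and \rref{sec:static_semantics}.
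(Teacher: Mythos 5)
Your proposal is correct and follows essentially the same route as the paper: the $\SBV$ and $\SCN$ conjuncts are read off directly from their semantic definitions, the $\TVar$ conjunct is handled by structural induction on $\alpha$ (which the paper dispatches in one line, whereas you spell out the cases, all correctly), and minimality is the same contrapositive use of the witnesses supplied by the definitions of $\SBV(\alpha)$ and $\SCN(\alpha)$.
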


By the following \emph{communication-aware} coincidence property, 
terms and formulas only depend on their free variables,
which for trace variables can be further refined to the subtraces whose channels are accessed.
This subtrace-level precision is crucial in the soundness proof of the parallel injection axiom as it allows to drop $\beta$ from $[ \alpha \parOp \beta ] \psi$ only if $\beta$ does not write channels of $\psi$ that are not also written by $\alpha$.
The signature $\sigof{\cdot}$ of an expression denotes all occurring symbols.

\begin{lemma}[Coincidence for terms and formulas] 
	\label{lem:termCoincidence}
	The sets $\SFV(\expr)$ and $\SCN(\expr)$ are the smallest sets with the \emph{communication-aware 
	coincidence property for term or formula} $\expr$.
	That is, if $\pstate{v} \downarrow \SCN(\expr) = \pstate[alt]{v} \downarrow \SCN(\expr)$ on $\SFV(\expr)$ and $\inter = \inter[alt]$ on~$\sigof{\expr}$,
	then $\sem{\expr}{\lstate{v}} = \sem{\expr}{\lstate[alt]{v}}$.
	In particular, for formula $\phi$: $\lstate{v} \vDash \phi$ iff $\lstate[alt]{v} \vDash \phi$.
\end{lemma}

Programs communicate but do \emph{not} depend on the recorded history, thus the coincidence property for programs is not communication-aware.
However, programs can produce the same communication starting from coinciding states.

\begin{lemma}[Coincidence for programs] \label{lem:programCoincidence}
	The set $\SFV(\alpha)$ is the smallest set with the 
	\emph{coincidence property for program} $\alpha$.
	That is, if $\pstate{v} = \pstate[alt]{v}$ on $\varset \supseteq \SFV(\alpha)$,
	and $\inter = \inter[alt]$ on $\sigof{\alpha}$, 
	and $\computation \in \sem{\alpha}{\inter}$,
	then $\computation[alt] \in \sem{\alpha}{\inter[alt]}$ exists such that $\pstate{w} = \pstate[alt]{w}$ on $\varset$,
	and $\trace = \trace[alt]$,
	and ($\pstate{w} = \bot$ iff $\pstate[alt]{w} = \bot$).
\end{lemma}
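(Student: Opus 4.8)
\emph{Proof proposal.}
The plan is to prove the two halves of the claim in turn: \textbf{(A)}~$\SFV(\alpha)$ enjoys the communication-aware coincidence property for $\alpha$ (so every superset does too), and \textbf{(B)}~every set enjoying it contains $\SFV(\alpha)$. Part~(B) is read straight off the semantic definition of $\SFV(\alpha)$: if $\arbitraryVar \in \SFV(\alpha)$ is witnessed by $\inter, \pstate{v}, \pstate[alt]{v}, \trace, \pstate{w}$ and some $\varset$ with the coincidence property omitted $\arbitraryVar$, then $\varset \subseteq \{\arbitraryVar\}^\complement$, so applying the coincidence property with $\inter$ in the role of both interpretations and with $\pstate{v}, \pstate[alt]{v}$ (which coincide on $\{\arbitraryVar\}^\complement \supseteq \varset$) would yield exactly the computation $\computation[alt] \in \sem{\alpha}{\inter}$ that the witnesses exclude --- a contradiction, so $\arbitraryVar \in \varset$.

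For Part~(A) I would \emph{not} argue about $\SFV(\alpha)$ directly, since the semantic free-variable set of a program, unlike that of a term or formula, is not compositional: it can shrink under $\cup$ when one branch dominates the other (as for $\test{} \cup \test{\true}$), which would break the inductive step. Instead I would introduce the familiar \emph{syntactic} over-approximation $\mathsf{FV}(\cdot)$ by structural recursion --- $\mathsf{FV}(\rpconst{}{}) = \V$; $\mathsf{FV}(x \ceq \rp) = \SFV(\rp)$; $\mathsf{FV}(x \ceq *) = \emptyset$; $\mathsf{FV}(\test{}) = \SFV(\chi)$; $\mathsf{FV}(\evolution{}{}) = \{x, \globalTime\} \cup \SFV(\rp) \cup \SFV(\chi)$; $\mathsf{FV}(\send{}{}{}) = \{\historyVar, \globalTime\} \cup \SFV(\rp)$; $\mathsf{FV}(\receive{}{}{}) = \{\historyVar, \globalTime\}$; $\mathsf{FV}(\alpha \seq \beta) = \mathsf{FV}(\alpha) \cup \mathsf{FV}(\beta)$, and likewise $\mathsf{FV}(\alpha \cup \beta)$, $\mathsf{FV}(\repetition{\alpha})$, and $\mathsf{FV}(\alpha \parOp \beta)$ as the union of the $\mathsf{FV}$ of the immediate subprograms (for $\seq$ one may subtract from $\mathsf{FV}(\beta)$ the variables bound on every run of $\alpha$, which sharpens the bound but is not needed here) --- and prove by structural induction on $\alpha$ that $\mathsf{FV}(\alpha)$ has the coincidence property for $\alpha$. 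The point of this \emph{closed} recursion is that $\mathsf{FV}(\alpha \cup \beta) \supseteq \mathsf{FV}(\alpha), \mathsf{FV}(\beta)$ (and similarly for the other composites), so the hypothesis $\varset \supseteq \mathsf{FV}(\alpha \cup \beta)$ already delivers the $\varset \supseteq \mathsf{FV}(\alpha)$ and $\varset \supseteq \mathsf{FV}(\beta)$ that the induction hypothesis requires. The program-constant case is immediate: $\mathsf{FV}(\rpconst{}{}) = \V$ forces $\pstate{v} = \pstate[alt]{v}$, and $\inter = \inter[alt]$ on $\sigof{\alpha} \ni \rpconst{}{}$ gives $\sem{\rpconst{}{}}{\inter} = \sem{\rpconst{}{}}{\inter[alt]}$. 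The assignment, test, send, and receive cases reduce to the term and formula coincidence lemmas (\rref{lem:termCoincidence}, \rref{lem:formulaCoincidence}) and the bound-effect property (\rref{lem:boundEffect}), plus the observation that the timestamp recorded in the communication item is read off $\globalTime$ alone. The continuous case needs no uniqueness of solutions: keep the given solution on the coordinates $\{x, x', \globalTime, \globalTime'\}$, set it to $\pstate[alt]{v}$ on all others, and re-validate the ODE, the evolution-domain constraint, and the derivative conditions with the term and formula coincidence lemmas. The composite cases $\seq, \cup, \repetition{\cdot}, \parOp$ thread the induction hypothesis through the denotational operators $\closedComposition$, $\cup$, iterated composition, and the parallel merge, relaying agreement on $\varset$ along the intermediate states and propagating the $\bot$-vs-non-$\bot$ status.

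Part~(A) is finished by trimming $\mathsf{FV}(\alpha)$ down to $\SFV(\alpha)$. From the coincidence property of $\mathsf{FV}(\alpha)$ and Part~(B), $\SFV(\alpha) \subseteq \mathsf{FV}(\alpha)$. Moreover, if a set $\varset$ has the coincidence property and $\arbitraryVar \notin \SFV(\alpha)$, then $\varset \setminus \{\arbitraryVar\}$ also has it: for any $W \supseteq \varset \setminus \{\arbitraryVar\}$, either $\arbitraryVar \in W$, whence $W \supseteq \varset$ and the property of $\varset$ applies to $W$ directly, or $\arbitraryVar \notin W$, whence one first resets $\arbitraryVar$ to the value it has in $\pstate{v}$, applies the property of $\varset$ to $W \cup \{\arbitraryVar\} \supseteq \varset$, and then strips $\arbitraryVar$ off again using the semantic definition of $\SFV(\alpha)$ for the interpretation at hand. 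Removing the variables of $\mathsf{FV}(\alpha) \setminus \SFV(\alpha)$ from $\mathsf{FV}(\alpha)$ one after another thus shows that $\SFV(\alpha)$ itself has the coincidence property; with Part~(B) it is therefore the smallest set that does.

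I expect the structural induction of Part~(A) --- and above all its parallel-composition case --- to be the main obstacle: there one splits $\computation$ into its two subcomputations via the projections, applies the induction hypothesis to each, and must then check that the new subcomputations still synchronize in global time (for which it is essential that $\globalTime \in \mathsf{FV}(\alpha_j)$, hence $\globalTime \in \varset$, whenever $\alpha_j$ can move $\globalTime$), that their recomposed trace is again $\trace$ and still satisfies $\trace = \trace \downarrow (\alpha_1 \parOp \alpha_2)$, and that merging the new final states preserves agreement on $\varset$. The continuous case is the other delicate point, owing to the interplay between a solution, its differentials $x'$ and $\globalTime'$, and the coordinates held fixed outside $\{x, x', \globalTime, \globalTime'\}$.
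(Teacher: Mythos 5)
Your proposal is sound, but it takes a markedly different and considerably longer route than the paper. The paper never performs a structural induction on $\alpha$ for the state-coincidence part: it fixes $\pstate{v}$ and $\pstate[alt]{v}$, observes that they differ only on a set $S \subseteq \SFV(\alpha)^\complement$, and walks from one state to the other one variable at a time, each single-variable step being justified directly by the contrapositive of the semantic definition of $\SFV(\alpha)$ (a matching computation with the same trace and $\bot$-status must exist, else the changed variable would belong to $\SFV(\alpha)$); only the interpretation-agreement part, $\sem{\alpha}{\inter} = \sem{\alpha}{\inter[alt]}$ when $\inter = \inter[alt]$ on $\sigof{\alpha}$, is delegated to a small induction on $\alpha$. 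Your concern that the semantic $\SFV$ is not compositional (your $\test{} \cup \test{\true}$ example is correct) is therefore real but moot: the paper's argument never needs compositionality. Indeed, your own final ``trimming'' paragraph is exactly this one-variable-at-a-time argument, and had you started the trimming from the trivially coincident set $\V$ rather than from $\FV(\alpha)$, the entire structural induction of your Part~(A) --- including the delicate $\parOp$ and ODE cases you rightly flag as the main obstacles --- would become unnecessary. What your detour buys is a computable syntactic over-approximation $\FV(\alpha)$, which the paper also relies on but imports from prior work rather than reproving here. Note that both your trimming and the paper's set-induction add or remove one variable at a time and so formally reach only finite difference sets, although $\varset^\complement$ (respectively $\FV(\alpha) \setminus \SFV(\alpha)$) is typically infinite; this is a limitation shared with the paper's own proof, not a defect specific to your write-up. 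Your minimality argument in Part~(B) coincides with the paper's.
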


\section{Uniform Substitution for \dLCHP} \label{sec:unisubs}

\newcommand{\smapsto}{\kern-2pt\mapsto\kern-2pt}

In \dLCHP, a uniform substitution \cite{DBLP:journals/jar/Platzer17} $\usubs$ maps function and predicate symbols to terms (of equal sort) and formulas, respectively, while substituting the arguments of the symbol for their placeholders in the replacement, 
and program constants are mapped to CHPs.
For example, $\usubs = \{ \fsymb(\usarg) \smapsto \usarg + 1, \pconst \smapsto \receive{}{}{v} \seq \evolution{x' = v}{non} \}$
replaces all occurrences of function symbol $\fsymb$ with $\usarg + 1$ 
while the reserved $0$-ary function symbol $\usarg$ marks the positions for the parameter of $\fsymb$ in the replacement.
Moreover, $\usubs$ replaces the program constant $\pconst$ with the program $\receive{}{}{v} \seq \evolution{x' = v}{non}$.

The key to sound uniform substitution is that new free variables must not be introduced into a context where they are bound \cite{Church1956}. 
In the presence of communication,
likewise, \emph{new channel access must not be introduced into contexts} where the channel is written \ref{itm:channels}.
For parallelism, substitution \emph{must not reveal internals} of the parallel context to the local abstraction of a subprogram \ref{itm:abstraction},
and must not violate state disjointness.
The one-pass approach \cite{DBLP:conf/cade/Platzer19} used for \dLCHP postpones these checks \emph{and} simply applies the substitution recursively while collecting written variables and channels as taboo set,
thus operates linearly in the input.
Clashes between the taboo, and new free variables and channel access are only checked locally at the replacement site.
Likewise, clashes between the permitted channels and variables of a program constant, and its replacement program are checked locally.
\reportonly{All proofs for this section are in \rref{app:soundness}.}

The substitution operator $\usInOutOp{}{}{\alpha}$ for program $\alpha$ takes an input taboo $\jointTaboo \subseteq \V \cup \Chan$ and a parallel context $\parallelCtx \subseteq \V$,
and returns, if defined, the substitution result and a set of output taboos $\jointOut \subseteq \V \cup \Chan$.
For terms and formulas, the substitution operator $\usTaboo{}$ only takes a taboo $\jointTaboo \subseteq \V \cup \Chan$ as input.
The substitution process clashes,
\iest prevents unsound instantiation,
if it were to introduce a free variable or accessed channel into a context where it is bound \ref{itm:channels} \emph{or} if it were to write variables and channels violating abstraction \ref{itm:abstraction}.
Moreover, substitution preserves well-formedness of programs and formulas,
\iest substitution clashes if replacements were to violate well-formedness. 

\newcommand{\sidetext}[1]{&& \text{#1}}
\newcommand{\HRule}[1]{\par
  \vspace*{\dimexpr-\parskip-\baselineskip+#1}
  \noindent\rule{\linewidth}{.4pt}\par
  \vspace*{\dimexpr-\parskip-1.5\baselineskip+#1}}

\begin{figure}[ht!]
	\vspace*{-2em}
	\begin{minipage}{\textwidth}
		\setlength{\jot}{-1.5pt}
		\begin{align*}
			%
			%
			\usTabooOp{}{\arbitraryVar} 
				& \equiv \arbitraryVar
				\sidetext{for $\arbitraryVar \in \V$} \\
			\usTabooOp{}{\fsymb(\cset, \expr)} 
				& 
					\equiv \usAuxOp{\expr}{\usubs \fsymb (\usarg)}
					\sidetext{if $(\SFV(\usubs \fsymb(\usarg)) \cup \SCN(\usubs \fsymb(\usarg))) \cap \jointTaboo = \emptyset$} \\
			\usTabooOp{}{\fsymb[builtin](\expr_1, \ldots, \expr_k)}
				& \equiv \fsymb[builtin](\usTabooOp{}{\expr_1}, \ldots, \usTabooOp{}{\expr_k}) 
				\sidetext{for built-in $\fsymb[builtin] \in \{ \usarg+\usarg, \usarg\downarrow\cset, \ldots \}$} \\
			\usTabooOp{}{(\rp)'} 
				& \equiv (\usTabooOp{\V \cup \Chan}{\rp})'
		\end{align*}%
		\HRule{-.3em}%
		\begin{align*}
			%
			%
			\usTabooOp{}{\expr_1 \sim \expr_2} 
				& \equiv \usTabooOp{}{\expr_1} \sim \usTabooOp{}{\expr_2} && \\
			\usTabooOp{}{\psymb(\cset, \expr)}
				& \equiv \usAuxOp{\expr}{\usubs \psymb(\usarg)}
				\sidetext{if $(\SFV(\usubs \psymb(\usarg)) \cup \SCN(\usubs \psymb(\usarg))) \cap \jointTaboo = \emptyset$} \\
			\usTabooOp{}{\neg \varphi}
				& \equiv \neg \usTabooOp{}{\varphi} \\
			\usTabooOp{}{\varphi \wedge \psi}
				& \equiv \usTabooOp{}{\varphi} \wedge \usTabooOp{}{\psi} \\
			\usTabooOp{}{\fa{\arbitraryVar} \varphi}
				& \equiv \fa{\arbitraryVar} \usTabooOp{\jointTaboo\cup\{\arbitraryVar\}}{\varphi} \\
			\usTabooOp{}{[ \alpha ] \psi} 
				& \equiv [ \usInOutOp{\jointTaboo, \emptyset}{}{\alpha} ] \usTabooOp{\jointOut}{\psi} \\
			\usTabooOp{}{[ \alpha ] \ac \psi} 
				& \equiv [ \usInOutOp{\jointTaboo, \emptyset}{}{\alpha} ] \acpair{\usTabooOp{\jointOut}{\A}, \usTabooOp{\jointOut}{\C}} \usTabooOp{\jointOut}{\psi}
		\end{align*}
		\HRule{-.3em}
		\begin{align*}
			\usInOutOp{}{\jointTaboo\cup\SBV(\usubs \pconst)\cup\SCN(\usubs \pconst)}{\rpconst{}{}} 
					& \equiv \usubs \pconst 
					\sidecondition[tag]{if $\SBV(\usubs \pconst) \subseteq \varvec$ and $\SCN(\usubs a) = \cset$} 
					\\[.1em]
			%
			%
			\usInOutOp{}{\jointTaboo\cup\{x\}}{x \ceq \rp} 
				& \equiv x \ceq \usTabooOp{\jointTaboo \cup \parallelCtx}{\rp} \\[.1em]
			\usInOutOp{}{\jointTaboo\cup\{x\}}{x \ceq *} 
				& \equiv x \ceq * 
				\\[.1em]
			\usInOutOp{}{\jointTaboo}{\test{\chi}} 
				& \equiv \;\test{\usTabooOp{\jointTaboo \cup \parallelCtx}{\chi}} 
				\\[.1em]
			\usInOutOp{}{\jointOut}{\evolution{}{}} 
				& \equiv \evolution
					{x' = \usTabooOp{\jointTaboo\cup\parallelCtx}{\rp}}
					{\usTabooOp{\jointTaboo\cup \parallelCtx}{\chi}} 
				\sidecondition[tag]{with $\jointOut=\jointTaboo\cup\odeBoundVars$} \\
			%
			%
			\usInOutOp{}{\jointTaboo\cup\{\ch{}, \historyVar\}}{\send{}{}{}} 
				& \equiv \send{}{}
					{\usTabooOp{\jointTaboo\cup\parallelCtx}{\rp}} 
					\\[.1em]
			\usInOutOp{}{\jointTaboo\cup\{\ch{}, \historyVar, x\}}{\receive{}{}{}} 
				& \equiv \receive{}{}{} 
				\\[.1em]
			%
			%
			\usInOutOp{}{\jointOut_1 \cup \jointOut_2}{\alpha \cup \beta} 
				& \equiv \usInOutOp{}{\jointOut_1}{\alpha} \cup \usInOutOp{}{\jointOut_2}{\beta} 
				\\[.1em]
			\usInOutOp{}{\jointOut_2}{\alpha \seq \beta}
				& \equiv \usInOutOp{}{\jointOut_1}{\alpha} \seq \usInOutOp{\jointOut_1, \parallelCtx}{\jointOut_2}{\beta} 
				\\[.1em]
			\usInOutOp{}{}{\repetition{\alpha}} 
				& \equiv \repetition{(\usInOutOp{\jointOut, \parallelCtx}{}{\alpha})}
				\sidecondition[tag]{when $\usInOutOp{}{}{\alpha}$ is defined}
				\\[.1em]
			\usInOutOp{}{\jointOut_1 \cup \jointOut_2}{\alpha \parOp \beta} 
				& \equiv 
					\usInOutOp{\jointTaboo, \parCtxOp{\beta}}{\jointOut_1}{\alpha} 
					\parOp 
					\usInOutOp{\jointTaboo, \parCtxOp{\alpha}}{\jointOut_2}{\beta}
		\end{align*}
	\end{minipage}
	\caption{%
		Application of uniform substitution for taboo $\jointTaboo$ and parallel context $\parallelCtx$,
		where $\parCtxOp{\gamma} \equiv \parCtxOpExpanded{\gamma}$ for any program $\gamma$,
		and $\expr \downarrow \cset$ for term $\expr$ is recursive push down of projection $\downarrow \cset$, where $\psymb(\cset_0, \expr) \downarrow \cset \equiv \psymb(\cset_0 \cap \cset, \expr)$.
	}
	\label{fig:unisubs}
	\vspace*{-1.em}
\end{figure}

The side condition $(\SFV(\usubs \fsymb(\usarg)) \cup \SCN(\usubs \fsymb(\usarg))) \cap \jointTaboo = \emptyset$ implements locally that the replacement for $\fsymb$ must not introduce free parameters that are tabooed by~$\jointTaboo$ \ref{itm:channels}.
The substitution $\usAux{\expr}$ 
is responsible for the argument~$\expr$,%
\footnote{Extension to vectorial arguments is straightforward.}
where~$\emptyset$ suffices as the taboo $\jointTaboo$ is already checked on~$\expr \downarrow \cset$.
By the projection, $\expr \downarrow \cset$ only depends on channels~$\cset$.
Quantification $\fa{\arbitraryVar}\!$ taboos the bound variable~$\arbitraryVar$.
Program~$\alpha$ in a box or ac-box has an empty parallel context~$\emptyset$.

The substitution $\usInOutOp{}{}{\alpha}$ computes the output taboo $\jointOut$ by adding the written variables and channels of program~$\alpha$ to $\jointTaboo$,
\eg real variable $x$ for assignment $x \ceq \rp$ and for receiving $\receive{}{}{}$ additionally channel $\ch{}$ and trace variable~$\historyVar$.
The output taboo $\jointOut$ is passed to ac-formulas and postconditions of boxes and ac-boxes for recursive checks for clashes \wrt \ref{itm:channels}.
Crucially for soundness, \rref{lem:unisubs_correct_bound} below proves that $\usInOutOp{}{}{\cdot}$ correctly computes the output taboo $\jointOut$.

The taboo $\jointTaboo\cup\parallelCtx$ passed to nested expressions contains the parallel context~$\parallelCtx$ to prevent free variables in replacements of function and predicate symbols that are bound in parallel.
This prepares the substitution process to preserve the syntax restrictions for parallel composition from previous work \citeDLCHP.%
\footnote{
	For $\alpha\parOp\beta$, the restriction is $(\SV(\alpha) \cap \SBV(\beta)) \cup (\SV(\beta) \cap \SBV(\alpha)) \subseteq \{\globalTime, \globalTime'\} \cup \TVar$ \citeDLCHP.
	However, in this paper, programs obey a less restrictive syntax for simplicity.
}
Substitution for evolution $\evolution{}{}$ considers that the global time $\globalTime, \globalTime'$ is always implicitly bound regardless of whether it occurs in $x, x'$.
The fixpoint notation $\usInOutOp{\jointOut, \parallelCtx}{}{\alpha}$ for the replacement of repetition $\repetition{\alpha}$ ensures that the output taboo of the first iteration 
is tabooed in the subsequent iterations \cite{DBLP:conf/cade/Platzer19}.
Computing the parallel context of $\alpha$ and $\beta$ in case $\alpha \parOp \beta$ requires one additional pass for both subprograms
because what they potentially bind after substitution adds to the parallel context of the respective other subprogram.

\begin{lemma}[Correct output taboo] \label{lem:unisubs_correct_bound}
	Application $\usInOutOp{}{}{\alpha}$ of uniform substitution retains input taboo $\jointTaboo$ and correctly adds the bound variables and written channels of program $\alpha$,
	\iest $\jointOut \supseteq \jointTaboo \cup \SBV(\usInOutOp{}{}{\alpha}) \cup \SCN(\usInOutOp{}{}{\alpha})$.
\end{lemma}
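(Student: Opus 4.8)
The plan is a structural induction on~$\alpha$, reading off in each case the output taboo~$\jointOut$ recorded for that clause in \rref{fig:unisubs} and checking both $\jointTaboo \subseteq \jointOut$ and $\SBV(\usInOutOp{}{}{\alpha}) \cup \SCN(\usInOutOp{}{}{\alpha}) \subseteq \jointOut$. Two facts carry the argument. First, since $\SBV(\cdot)$ and $\SCN(\cdot)$ are by \rref{lem:boundEffect} the \emph{smallest} sets with the bound effect property, an upper bound on $\SBV(\gamma)$ or $\SCN(\gamma)$ is obtained simply by exhibiting any larger set with that property; consequently $\SBV(\gamma_1 \seq \gamma_2) \subseteq \SBV(\gamma_1) \cup \SBV(\gamma_2)$, and likewise with $\cup$ or $\parOp$ in place of $\seq$, and with $\SCN$ in place of $\SBV$, as well as $\SBV(\repetition{\gamma}) \subseteq \SBV(\gamma)$ and $\SCN(\repetition{\gamma}) \subseteq \SCN(\gamma)$ --- each holds because a run of the composite decomposes into runs of the parts, each obeying its own bound effect. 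Second, for an atomic program the bound variables and written channels are determined by its shape alone and are unaffected by substituting into an embedded term or formula: \eg $\SBV(x \ceq \usTabooOp{}{\rp}) \subseteq \{x\}$ and $\SCN(x \ceq \usTabooOp{}{\rp}) = \emptyset$ regardless of the substituted right-hand side, and similarly $\SBV(\test{\usTabooOp{}{\chi}}) = \emptyset = \SCN(\test{\usTabooOp{}{\chi}})$.

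The base cases are then immediate. For $\rpconst{}{}$, whenever the clause applies $\usInOutOp{}{}{\rpconst{}{}}$ is literally $\usubs\pconst$ and $\jointOut = \jointTaboo \cup \SBV(\usubs\pconst) \cup \SCN(\usubs\pconst)$, so the desired inclusion is an equality. For $x \ceq \rp$, $x \ceq *$, $\test{\chi}$, $\evolution{}{}$, $\send{}{}{}$, and $\receive{}{}{}$ one reads off $\SBV$ and $\SCN$ of the substituted program --- namely $\{x\}$ and $\emptyset$; $\{x\}$ and $\emptyset$; $\emptyset$ and $\emptyset$; $\odeBoundVars$ and $\emptyset$; $\{\historyVar\}$ and $\{\ch{}\}$; $\{\historyVar, x\}$ and $\{\ch{}\}$ --- and observes that each lies inside the set that \rref{fig:unisubs} adds to~$\jointTaboo$. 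The ODE case relies on $\globalTime, \globalTime'$ being bound irrespective of whether~$\globalTime$ occurs in~$x$, which is exactly why $\odeBoundVars$ rather than $\{x, x'\}$ is added.

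The inductive cases combine the induction hypotheses with the decomposition facts above. For $\alpha \cup \beta$, apply the hypothesis to~$\alpha$ and to~$\beta$, each at input taboo~$\jointTaboo$, and take the union. For $\alpha \seq \beta$, the recursive call on~$\beta$ runs at input taboo~$\jointOut_1$, so the hypothesis on~$\beta$ yields $\jointOut_2 \supseteq \jointOut_1 \cup \SBV(\usInOutOp{}{}{\beta}) \cup \SCN(\usInOutOp{}{}{\beta})$; chaining with the hypothesis on~$\alpha$, namely $\jointOut_1 \supseteq \jointTaboo \cup \SBV(\usInOutOp{}{}{\alpha}) \cup \SCN(\usInOutOp{}{}{\alpha})$, gives the claim, retention of~$\jointTaboo$ threading through~$\jointOut_1$. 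The case $\alpha \parOp \beta$ follows the same pattern, the hypotheses being applied to~$\alpha$ with parallel context $\parCtxOp{\beta}$ and to~$\beta$ with $\parCtxOp{\alpha}$ --- the auxiliary passes needed to compute these contexts do not touch the taboo bookkeeping --- and unioning. For $\repetition{\alpha}$, the fixed-point notation means~$\jointOut$ is itself the output taboo of $\usInOutOp{\jointOut, \parallelCtx}{}{\alpha}$; the hypothesis on~$\alpha$ at input taboo~$\jointOut$ then gives $\jointOut \supseteq \jointOut \cup \SBV(\usInOutOp{\jointOut, \parallelCtx}{}{\alpha}) \cup \SCN(\usInOutOp{\jointOut, \parallelCtx}{}{\alpha})$, so~$\jointOut$ contains $\SBV$ and $\SCN$ of $\usInOutOp{}{}{\repetition{\alpha}} = \repetition{(\usInOutOp{\jointOut, \parallelCtx}{}{\alpha})}$; and $\jointTaboo \subseteq \jointOut$ because the fixed point is reached by an increasing iteration --- each step's output contains its own input, by the retention part of the hypothesis --- started from the first-pass output, which already contains~$\jointTaboo$.

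The main obstacle is the $\repetition{\alpha}$ clause: one must argue that the fixed point defining~$\jointOut$ is well-defined, reached by a finite ascending iteration, and stable. This follows the treatment of loops in one-pass uniform substitution~\cite{DBLP:conf/cade/Platzer19} and is supported by the easy observation that $\usInOutOp{}{}{\cdot}$ is monotone in its input taboo, itself a routine simultaneous induction over the syntax. Every other case is mechanical once the two facts of the first paragraph are in hand, and those are immediate from \rref{lem:boundEffect} and the program semantics (\rref{def:programsemantics}).
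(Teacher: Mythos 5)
Your proposal is correct and follows essentially the same route as the paper's proof: a structural induction on $\alpha$ that reads off the output taboo of each clause in \rref{fig:unisubs}, verifies the atomic cases by inspecting the (unchanged) bound variables and written channels of the substituted statement, chains the induction hypotheses through $\jointOut_1$ for sequential composition, unions them for choice and parallel composition, and applies the hypothesis at input taboo $\jointOut$ itself for repetition. The only difference is presentational: you spend more care on the well-definedness and stability of the fixed point in the $\repetition{\alpha}$ clause, which the paper's proof treats implicitly by simply invoking the induction hypothesis twice.
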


The side condition of $\usInOutOp{}{}{\rpconst{}{}}$ maintains local abstraction of subprograms \ref{itm:abstraction} because the replacement cannot bind more than $\rpconst{}{}$, 
thus cannot bind variables and channels of an abstraction that is independent of $\rpconst{}{}$.
This also preserves state-disjointness (well-formedness) of parallel programs.

\subsection{Semantic Effect of Uniform Substitution} \label{sec:unisubsLemmas}

The key ingredients for proving soundness of uniform substitution are \rref{lem:unisubs_term} and \ref{lem:unisubs_fml_prog} below.
They prove that the effect of the syntactic transformation applied by uniform substitution can be equally mimicked by  
semantically modifying the interpretation of function and predicate symbols, and program constants. 
This adjoint interpretation $\interadj[w]$ for interpretation $\inter$ and state~$\pstate{w}$ changes how symbols are interpreted according to their syntactic replacements in the substitution $\usubs$.

\newcommand{\argsort}{\anysort_\textnormal{arg}}
\begin{definition}[Adjoint substitution]
	\label{def:adjoint}
	For interpretation $\inter$ and state~$\pstate{w}$,
	the \emph{adjoint interpretation} $\interadj[w]$ changes the meaning of function and predicate symbols, and program constants according to the substitution $\usubs$ evaluated in state~$\pstate{w}$:
	\begin{align*}
		& \interadj[w](\fsymb[some] : \argsort) : \argsort \rightarrow \anysort; d \mapsto \sem{\usubs \fsymb(\usarg)}{\inter \subs{\usarg}{d} \pstate{w}} \hspace*{-1em}
			&&\sidecondition[color=black]{\quad where $\anysort, \argsort \in \{ \reals, \naturals, \Chan, \traces \}$} \\
		& \interadj[w](\psymb : \argsort) = \{ d \in \argsort \mid \inter \subs{\usarg}{d} \pstate{w} \vDash \usubs \psymb(\usarg) \} 
			&&\sidecondition[color=black]{\quad where $\argsort \in \{ \reals, \naturals, \Chan, \traces \}$} \\
		& \interadj[w](\rpconst{}{}) = \sem{\usubs \pconst}{\inter}
	\end{align*}
\end{definition}

We follow the observation for \dGL \cite{DBLP:conf/cade/Platzer19} that the more liberal one-pass substitution
requires stronger coincidence between the substitution and the adjoint on neighborhoods of the original state.
Where the \dGL soundness proof has succeeded by a neighborhood semantics of state on taboos,
the \dLCHP proof succeeds with a generalization to a neighborhood semantics of state and communication on taboos.
The neighborhood of a state consists of its variations: 

\begin{definition}[Variation] \label{def:variation}
	For a set $\jointTaboo \subseteq \V \cup \Chan$,
	a state $\pstate{v}$ is a \emph{$\jointTaboo$-variation of state~$\varioOrigin$} if $\pstate{v}$ and $\varioOrigin$ only differ on variables or projections onto channels in $\jointTaboo$, \iest $\pstate{v} \downarrow (\jointTaboo^\complement \cap \Chan) = \varioOrigin \downarrow (\jointTaboo^\complement \cap \Chan)$ on $\jointTaboo^\complement \cap \V$.
\end{definition}

\iflongversion
Variations are monotone in the set $\jointTaboo$,
\iest if $\pstate{v}$ is a $\jointTaboo_0$-variation of $\pstate{w}$ and $\jointTaboo_0 \subseteq \jointTaboo_1$,
then $\pstate{v}$ is a $\jointTaboo_1$-variation of $\pstate{w}$.
\todo{Transitive and monotone}
\else\fi

The proofs of \rref{lem:unisubs_term} and \ref{lem:unisubs_fml_prog} follow a lexicographic induction on the structure of substitution, and term, formula, or program.
In \rref{lem:unisubs_fml_prog}, the induction is mutual for formulas and programs.

\begin{lemma}[Semantic uniform substitution] \label{lem:unisubs_term}
	The term $\expr$ evaluates equally over $\jointTaboo$-variations under uniform substitution $\usTaboo{}$ and adjoint interpretation $\interadj$, 
	\iest $\sem{\usTabooOp{}{\expr}}{\lstate{v}} = \sem{\expr}{\lstateadj{v}}$ for all $\jointTaboo$-variations $\pstate{v}$ of $\varioOrigin$.
\end{lemma}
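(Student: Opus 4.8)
The plan is to prove \rref{lem:unisubs_term} by lexicographic induction on the pair consisting of the structure of the substitution $\usubs$ and the structure of the term $\expr$, matching the recursive definition of $\usTabooOp{}{\cdot}$ in \rref{fig:unisubs}. The statement is quantified over all $\jointTaboo$-variations $\pstate{v}$ of a fixed base state $\varioOrigin$, and this universal quantification over the whole variation neighborhood is exactly what makes the induction go through for the one-pass substitution (as in the \dGL development \cite{DBLP:conf/cade/Platzer19}); keeping it in the induction hypothesis is essential. Throughout, the adjoint $\interadj[v]$ is taken at the varying state $\pstate{v}$, but since symbol meanings in the adjoint only depend on the base-state data up to the taboo (this is where \rref{lem:termCoincidence} and \rref{def:variation} interact), the adjoint agrees on the whole neighborhood in the relevant arguments; I will make this precise as a small auxiliary observation before the case analysis.

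\textbf{Key steps.} First I would dispatch the easy cases. For a variable $\arbitraryVar \in \V$, $\usTabooOp{}{\arbitraryVar} \equiv \arbitraryVar$ and the adjoint does not touch variables, so both sides equal $\pstate{v}(\arbitraryVar)$. For built-in function symbols $\fsymb[builtin](\expr_1,\ldots,\expr_k)$ and for the arithmetic/trace constructors ($+$, $\cdot$, $\comItem{\cdot}$, projection, $\val{\cdot}$, $\stamp{\cdot}$, $\len{\cdot}$, $\chan{\cdot}$, $\at{\cdot}{\cdot}$), the substitution pushes in homomorphically and the semantics is compositional, so the result follows from the induction hypothesis applied to the strictly smaller subterms (same $\usubs$, same $\jointTaboo$, same variation neighborhood). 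The differential case $(\rp)'$ is slightly more delicate: $\usTabooOp{}{(\rp)'} \equiv (\usTabooOp{\V\cup\Chan}{\rp})'$ uses the maximal taboo $\V\cup\Chan$, so the induction hypothesis gives the equality of $\usTabooOp{\V\cup\Chan}{\rp}$ with $\rp$ under the adjoint only on $(\V\cup\Chan)$-variations, i.e.\ on all states; then the semantics of the differential as $\sum_{x} \pstate{v}(x') \,\partial_x \sem{\cdot}{\lstate{v}}$ is determined by the values of the argument term on a neighborhood of $\pstate{v}$, and since the equality holds at every state, the partial derivatives coincide and so do the differentials. The interesting case is the function symbol: for $\fsymb(\chvec,\expr)$ we have $\usTabooOp{}{\fsymb(\chvec,\expr)} \equiv \usAuxOp{\expr}{\usubs\fsymb(\usarg)}$, guarded by the clash condition $(\SFV(\usubs\fsymb(\usarg)) \cup \SCN(\usubs\fsymb(\usarg))) \cap \jointTaboo = \emptyset$. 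Here I would unfold the definition of the adjoint: $\interadj[v](\fsymb)$ maps $d$ to $\sem{\usubs\fsymb(\usarg)}{\inter\subs{\usarg}{d}\pstate{v}}$, while the left-hand side evaluates $\usubs\fsymb(\usarg)$ with $\usarg$ replaced (via $\usAux{}$) by $\usTabooOp{}{\expr\downarrow\chvec}$ — note the projection $\downarrow\chvec$ makes the argument depend only on channels in $\chvec$. The argument equality $\sem{\usTabooOp{}{\expr\downarrow\chvec}}{\lstate{v}} = \sem{\expr\downarrow\chvec}{\lstateadj{v}}$ comes from the induction hypothesis on the structurally smaller substitution $\usubs$ restricted to $\expr\downarrow\chvec$ (the $\usAux{}$ substitution is "one level simpler"). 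Then the clash-freeness condition plus \rref{lem:termCoincidence}/\rref{def:variation} let me replace $\sem{\usubs\fsymb(\usarg)}{\inter\subs{\usarg}{d}\pstate{v}}$ by the same quantity evaluated at the base state $\varioOrigin$ for the non-$\usarg$ part, identifying the left side with $\interadj[v](\fsymb)$ applied to the correct argument value, which is precisely $\sem{\fsymb(\chvec,\expr)}{\lstateadj{v}}$ after accounting for the internal state projection $\pstate[alt]{v} = \pstate{v}\downarrow\chvec$ in the semantics of $\fsymb(\chvec,\exprvec)$.

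\textbf{Main obstacle.} I expect the crux to be the function-symbol case, and specifically the bookkeeping around three interlocking projections/restrictions: the syntactic projection $\expr\downarrow\chvec$ built into $\usTabooOp{}{\fsymb(\chvec,\expr)}$, the semantic projection $\pstate{v}\downarrow\chvec$ in the term semantics of $\fsymb$, and the channel-taboo part of the $\jointTaboo$-variation. One must check that the clash condition on channels in $\SCN(\usubs\fsymb(\usarg))$ is exactly strong enough that moving from $\pstate{v}$ to $\varioOrigin$ (which differ on channel projections inside $\jointTaboo$) does not change $\sem{\usubs\fsymb(\usarg)}{\inter\subs{\usarg}{d}\cdot}$, invoking \rref{lem:termCoincidence} in its communication-aware form. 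A secondary subtlety is the differential case's use of the full taboo $\V\cup\Chan$: one has to make sure the induction is well-founded (the term $\rp$ is structurally smaller, so the pair decreases even though the taboo grew) and that "equality on all states" genuinely suffices to equate differentials, which it does since $\sem{(\cdot)'}{\lstate{v}}$ is a functional of the term's valuation on a neighborhood. Apart from these two points, every case is a routine unfolding of the definitions combined with the induction hypothesis, so the proof is essentially a careful traversal of \rref{fig:unisubs}.

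\qed
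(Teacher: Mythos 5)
Your proposal is correct and follows essentially the same route as the paper's proof: a lexicographic induction on the (substitution, term) pair with the full $\jointTaboo$-variation neighborhood carried in the induction hypothesis, the clash condition together with the communication-aware coincidence lemma (\rref{lem:termCoincidence}) handling the function-symbol case, and the $(\V\cup\Chan)$-taboo observation handling differentials. The only imprecision is the opening remark that the adjoint is ``taken at the varying state $\pstate{v}$'' --- it is anchored at the base state $\varioOrigin$, which is precisely why the coincidence step from $\pstate{v}$ to $\varioOrigin$ is needed, as your detailed function-symbol case then correctly carries out.
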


\begin{lemma}[Semantic uniform substitution] \label{lem:unisubs_fml_prog}
	The formula $\phi$ and the program $\alpha$ have equal truth value and semantics, respectively, over $\jointTaboo$-variations under uniform substitution $\usTaboo{}$ and adjoint interpretation $\interadj$, \iest
	\begin{enumerate}
		\item for all $\jointTaboo$-variations $\pstate{v}$ of $\varioOrigin$: $\lstate{v} \vDash \usTabooOp{}{\phi}$ iff $\lstateadj{v} \vDash \phi$
		\item for all $(\jointTaboo\cup\parallelCtx)$-variations $\pstate{v}$ of $\varioOrigin$: $\computation[fin={o}] \in \sem{\usInOutOp{}{}{\alpha}}{\inter}$ iff $\computation[fin={o}] \in \sem{\alpha}{\interadj}$
	\end{enumerate}
\end{lemma}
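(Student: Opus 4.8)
The plan is to prove Lemmas~\ref{lem:unisubs_term} and~\ref{lem:unisubs_fml_prog} simultaneously by a lexicographic induction: the outer measure is the structure of the uniform substitution $\usubs$ (more precisely, the complexity of the expression being substituted into after the substitution has resolved the top symbol), and the inner measure is the structure of the term, formula, or program. For the formula/program parts this induction is genuinely mutual, because $[\alpha]\psi$ and $[\alpha]\ac\psi$ force us to reason about programs inside formulas, and boxes occur inside programs only through tests, ODE constraints, and (vacuously) program constants — so the recursion stays well-founded. Throughout, the key invariant to carry is the \emph{variation hypothesis}: whenever we recurse into a subexpression with an enlarged taboo $\jointTaboo'\supseteq\jointTaboo$ (e.g. $\jointTaboo\cup\{\arbitraryVar\}$ at a quantifier, $\jointOut$ at a postcondition, $\jointTaboo\cup\parallelCtx$ at a nested term), every $\jointTaboo'$-variation of $\varioOrigin$ is in particular still a $\jointTaboo$-variation, by monotonicity of variations in the taboo set; this is what lets the inductive hypothesis apply at the stronger taboo.

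First I would dispatch the term cases of Lemma~\ref{lem:unisubs_term}. Variables and built-in function symbols are immediate from the definitions of variation and adjoint. The interesting case is $\fsymb(\chvec,\expr)$: here the side condition $(\SFV(\usubs\fsymb(\usarg))\cup\SCN(\usubs\fsymb(\usarg)))\cap\jointTaboo=\emptyset$ guarantees that the replacement $\usubs\fsymb(\usarg)$ cannot observe any variable or channel in $\jointTaboo$, so by the communication-aware coincidence property for terms (Lemma~\ref{lem:termCoincidence}) its value is unchanged between $\varioOrigin$ and any $\jointTaboo$-variation $\pstate{v}$ — which is exactly what is needed to match $\interadj[\varioOrigin]$ (evaluated at $\varioOrigin$) against $\sem{\expr}{\interadj[\varioOrigin]\pstate{v}}$. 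The argument $\expr$ is handled by the ad-hoc substitution $\usAux{\cdot}$ with empty taboo, which is sound precisely because the projection $\expr\downarrow\chvec$ is applied, collapsing $\psymb(\chvec_0,\expr)\downarrow\chvec$ to $\psymb(\chvec_0\cap\chvec,\expr)$ and restricting dependence to $\chvec$. The differential case $(\rp)'$ recurses with taboo $\V\cup\Chan$; since a $(\V\cup\Chan)$-variation of $\varioOrigin$ is $\varioOrigin$ itself, the inductive hypothesis trivializes and the differential semantics matches by the chain rule exactly as in \dL.

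For Lemma~\ref{lem:unisubs_fml_prog}, propositional connectives and $\psymb(\chvec,\expr)$ mirror the term argument, using Lemma~\ref{lem:formulaCoincidence} in place of Lemma~\ref{lem:termCoincidence}. The quantifier case taboos $\arbitraryVar$ and is closed under the variation invariant. The genuinely new work is in the programs, and that is where the main obstacle lies. For atomic programs the output taboo is read off directly, and I would invoke Lemma~\ref{lem:unisubs_correct_bound} to know that $\jointOut$ really does contain $\SBV$ and $\SCN$ of the result. The program-constant case $\rpconst{}{}$ is where abstraction~\ref{itm:abstraction} bites: the side conditions $\SBV(\usubs\pconst)\subseteq\varvec$ and $\SCN(\usubs\pconst)\subseteq\chvec$ ensure the replacement does not bind anything outside what $\rpconst{}{}$ was licensed to bind, so it cannot clash with the parallel context — and here one uses the coincidence property for programs (Lemma~\ref{lem:programCoincidence}) to transport a computation between $\inter$ and $\interadj$ on the relevant variable set. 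The parallel case $\alpha\parOp\beta$ is the crux: one must show that substituting into each side with parallel context $\parCtxOp{\beta}$ (resp. $\parCtxOp{\alpha}$) and then merging agrees with the semantics under the adjoint. The argument splits a joint computation via the $\downarrow(\alpha_1\parOp\alpha_2)$ projection, applies the inductive hypothesis to each subprogram over $(\jointTaboo\cup\parallelCtx\cup\SBV(\text{other side}))$-variations, and then checks that the merges $\pstate{w}_\alpha\merge\pstate{w}_\beta$ coincide — which works because state disjointness $\SBV(\alpha)\cap\SBV(\beta)\subseteq\{\globalTime,\globalTime'\}\cup\TVar$ is preserved by the clash checks, and the global-time synchronization $\statetime{\pstate{w}_{\alpha_1}}=\statetime{\pstate{w}_{\alpha_2}}$ is untouched by substitution. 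The loop case $\repetition{\alpha}$ uses the fixed-point taboo $\jointOut$ together with an inner induction on iteration count $n$, exactly as in one-pass \dGL, where the crucial point is that the output taboo of the first unrolling is already tabooed in all subsequent ones so that the variation class only shrinks. The hard part will be managing, in the parallel and sequential cases, the bookkeeping that the \emph{input} variation class on each recursive call is wide enough for the inductive hypothesis yet the \emph{output} guarantees (via Lemma~\ref{lem:unisubs_correct_bound}) are strong enough to feed the next call — i.e. threading taboos and parallel contexts consistently through a non-linear recursion while keeping the communication-aware coincidence bounds tight enough to discharge every potential clash.
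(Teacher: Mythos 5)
Your overall strategy coincides with the paper's: a lexicographic mutual induction on substitution--expression pairs, the coincidence lemmas (\rref{lem:termCoincidence}, \rref{lem:formulaCoincidence}) discharging the locally checked clash conditions at the recursion basis, \rref{lem:unisubs_correct_bound} threading output taboos through sequential composition and loops, and monotonicity of variations supplying the enlarged parallel contexts $\parCtxOp{\alpha}, \parCtxOp{\beta}$ in the $\parOp$ case. There is, however, one genuine gap. You justify applying the inductive hypothesis at the output taboo $\jointOut$ (for postconditions of $[\alpha]\psi$ and $[\alpha]\ac\psi$ and for the ac-formulas) by ``monotonicity of variations in the taboo set.'' That cannot work. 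First, the direction is backwards: monotonicity says a $\jointTaboo$-variation is a $\jointTaboo'$-variation for $\jointTaboo' \supseteq \jointTaboo$, not conversely. Second, and more substantively, the world at which the postcondition is evaluated is not $\pstate{v}$ but the reached world $\stconcat{\varioFin}{\trace}$, which is not related to $\varioOrigin$ by any monotonicity argument at all: one must prove that running the substituted program from a $\jointTaboo$-variation of $\varioOrigin$ lands in a $\jointOut$-variation of $\varioOrigin$. This requires combining the bound effect property (\rref{lem:boundEffect}) --- the run only alters variables in $\SBV(\usInOutOp{\jointTaboo, \emptyset}{}{\alpha})$ and channels in $\SCN(\usInOutOp{\jointTaboo, \emptyset}{}{\alpha})$ --- with \rref{lem:unisubs_correct_bound}, which guarantees these sets are contained in $\jointOut$; the paper isolates exactly this as \rref{lem:outputVariation}, and the ac-box case additionally needs the statement for every prefix $\trace[pre] \preceq \trace$ to handle the commit clause. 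You do carry out the analogous bookkeeping for the intermediate state in the sequential-composition case, so the idea is within reach, but as written your plan has no correct justification for the most delicate step of the formula cases.

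A minor point: in the program-constant case no appeal to \rref{lem:programCoincidence} is needed or appropriate. By \rref{def:adjoint}, $\interadj(\rpconst{}{}) = \sem{\usubs \pconst}{\inter}$ independently of the state, so the equivalence is definitional; the side conditions $\SBV(\usubs\pconst) \subseteq \varvec$ and $\SCN(\usubs\pconst) \subseteq \chvec$ serve only to make this a legitimate interpretation of $\rpconst{}{}$, not to transport computations between states.
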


\subsection{Uniform Substitution Proof Rule}

The proof rule \RuleName{US} for uniform substitution is the single point of truth for the sound instantiation of axioms (plus renaming of bound variables \cite{DBLP:journals/jar/Platzer17} and written channels, \eg $[x \ceq \rp] \psi(x)$ to $[y \ceq \rp] \psi(y)$ and $[\receive{}{}{}] \psi(\ch{})$ to $[\receive{\ch{dh}}{}{}] \psi(\ch{dh})$).
Soundness of the rule, 
\iest that validity of its premise implies validity of the conclusion, immediately follows from \rref{lem:unisubs_fml_prog}.
Since the substitution process starts with no taboos, $\usubs(\phi)$ is short for $\usTabooOp{\emptyset}{\phi}$.
If the substitution clashes, \iest $\usTabooOp{\emptyset}{\phi}$ is not defined, then rule \RuleName{US} is not applicable.

\begin{theorem}[\RuleName{US} is sound] \label{thm:ussound}
	The proof rule \RuleName{US} is sound.
	\begin{prooftree}[ProofTreeAboveSkip=-1em]
		\Axiom{$\phi$}
	
		\RuleNameRight{US}
		\UnaryInf{$\usof{\phi}$}
	\end{prooftree}
\end{theorem}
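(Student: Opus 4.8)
The plan is to derive soundness of \RuleName{US} almost immediately from the semantic uniform substitution lemma for formulas (\rref{lem:unisubs_fml_prog}, item 1), specialized to the case with no taboos and no parallel context. First I would recall what soundness of a proof rule means here: if the premise $\phi$ is valid, then the conclusion $\usof{\phi}$ is valid. So fix an arbitrary interpretation $\inter$ and state $\pstate{v}$, and assume $\phi$ is valid; the goal is to show $\lstate{v} \vDash \usof{\phi}$, where by definition $\usof{\phi} \equiv \usTabooOp{\emptyset}{\phi}$ (and if this substitution clashes, the rule is not applicable, so there is nothing to prove).

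The key step is to invoke \rref{lem:unisubs_fml_prog} with the empty taboo $\jointTaboo = \emptyset$, empty parallel context $\parallelCtx = \emptyset$, and with the role of the variation origin $\varioOrigin$ played by $\pstate{v}$ itself. Since every state is trivially an $\emptyset$-variation of itself (the variation condition in \rref{def:variation} becomes $\pstate{v} \downarrow \Chan = \pstate{v} \downarrow \Chan$ on all of $\V$), the lemma yields
\[
	\lstate{v} \vDash \usTabooOp{\emptyset}{\phi} \quad\text{iff}\quad \semInter{\inter}{\pstate{v}}{\pstate{v}} \vDash \phi ,
\]
i.e. $\lstate{v} \vDash \usof{\phi}$ iff $\phi$ holds in state $\pstate{v}$ under the adjoint interpretation $\interadj[v]$ built from $\inter$, $\pstate{v}$, and the substitution $\usubs$ via \rref{def:adjoint}. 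Because $\phi$ is valid by assumption, it holds in every state under every interpretation, in particular $\semInter{\inter}{\pstate{v}}{\pstate{v}} \vDash \phi$; hence $\lstate{v} \vDash \usof{\phi}$. As $\inter$ and $\pstate{v}$ were arbitrary, $\usof{\phi}$ is valid, which is exactly soundness of \RuleName{US}.

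There is essentially no obstacle remaining at this level: the entire conceptual weight — tracking taboos for written variables and channels, guarding against clashes from \ref{itm:channels} and \ref{itm:abstraction}, and the neighborhood/variation machinery needed because one-pass substitution demands agreement of the adjoint on a whole neighborhood of states rather than a single state — has already been discharged inside \rref{lem:unisubs_term} and \rref{lem:unisubs_fml_prog}. The only thing to be careful about is the bookkeeping: confirming that starting the substitution process ``with no taboos'' matches exactly the hypotheses $\jointTaboo=\parallelCtx=\emptyset$ under which the lemma was stated, and that clashing substitutions (where $\usTabooOp{\emptyset}{\phi}$ is undefined) are correctly excluded by the side condition that \RuleName{US} is only applicable when the substitution is defined, so that no spurious instance can arise. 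I would therefore keep this proof to a few lines, pointing to \rref{app:soundness} for the detailed statement, since the real content lives in the earlier semantic lemmas.
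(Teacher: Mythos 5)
Your proof is correct and follows essentially the same route as the paper's: fix an arbitrary interpretation--state pair, note that the state is an $\emptyset$-variation of itself, apply \rref{lem:unisubs_fml_prog} to transfer validity of $\phi$ under the adjoint interpretation to truth of $\usTabooOp{\emptyset}{\phi}$ under the original one. The paper's own proof in \rref{app:soundness} is exactly this argument in the same few lines.
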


Unlike \dL \cite{DBLP:journals/jar/Platzer17} and \dGL \cite{DBLP:conf/cade/Platzer19}, \dLCHP
has a context-sensitive syntax for programs and formulas (see \rref{def:syntax_chps} and \rref{def:syntax_formulas}).
By \rref{prop:unisubs_well_formed}, uniform substitution, however, preserves syntactic well-formedness.
Since all axioms in \rref{sec:calculus} will be well-formed,
only well-formed formulas can be derived in \dLCHP.

\begin{proposition}[\RuleName{US} preserves well-formedness] \label{prop:unisubs_well_formed}
	The result $\usTabooOp{}{\phi}$ (if defined) of applying uniform substitution to a well-formed formula $\phi$ is well-formed.
\end{proposition}

\vspace*{-.3cm}
\section{Axiomatic Proof Calculus} \label{sec:calculus}

Figure~\ref{fig:calculus} presents a sound proof calculus for \dLCHP.
The significant difference to \dLCHP's schematic calculus \citeDLCHP is that it completely abandons soundness-critical side conditions, 
internalizing them syntactically in the axioms.
Only axiom
\RuleName{assumptionWeak} was adjusted to obtain a symbolic representation and an ac-version~\RuleName{acModalMP} of modal modus ponens is included.
Now, distribution of ac-boxes over conjuncts~\RuleName{acBoxesDist} and ac-monotonicity \RuleName{acMono} derive from \RuleName{acModalMP},
thus are dropped.
Except for the small changes
soundness is inherited from the schematic axioms \citeDLCHP.
\reportonly{All proofs and supplementary material for this section are in \rref{app:calculus}.}

Algebraic laws for reasoning about traces \citeDLCHP can be easily adapted to uniform substitution as well\ifreport%
\else\citeReport\fi{}.
Decidable first-order real arithmetic \cite{Tarski1951} and Presburger arithmetic \cite{Presburger1931} have corresponding oracle proof rules \citeDLCHP. 

\begin{remark} \label{rem:finite}
	To obtain a truly finite list of axioms from \rref{fig:calculus},
	symbolic (co)finite sets can be finitely axiomatized as a boolean algebra together with extensionality, which can be unrolled to a finite disjunction for (co)finite sets
	\ifreport
	(see \rref{app:calculus})%
	\else\citeReport\fi{}.
\end{remark}

\vspace*{-.2cm}
\subsubsection{Parallel Composition}

The parallel injection axiom \RuleName{acDropComp} in \rref{fig:calculus} decomposes parallel CHPs by local abstraction \ref{itm:abstraction}.
Unlike \dLCHP's \citeDLCHP and Hoare-style \cite{AcHoare_Zwiers,AcSemantics_Zwiers} schematic calculi for ac-reasoning, 
axiom \RuleName{acDropComp} internalizes the noninterference property \citeDLCHP[Def.\,7] that determines valid instances of formula
\begin{equation}
	\setlength{\abovedisplayskip}{3pt}
	\setlength{\belowdisplayskip}{3pt}
	\label{eq:schematic_injection}
	[ \alpha ] \ac \psi \rightarrow [ \alpha \parOp \beta ] \ac \psi
\end{equation}
purely syntactically.
To focus on noninterference,
$\rpconst{\cset_a}{\bvarvec} \wfParOp \rpconst[alt]{\cset_b}{\bvarvec[alt]}$ abbreviates well-formed parallel composition $\rpconst{\cset_a}{\varvec_a} \parOp \rpconst[alt]{\cset_b}{(\varvec_b \cap \varvec_a^\complement) \cup \{\globalTime, \globalTime'\} \cup \TVar}$ using operator $\wfParOp$
for program constants $\rpconst{\cset_a}{\bvarvec}$, $\rpconst[alt]{\cset_b}{\bvarvec[alt]}$.
This notation ensures disjoint parallel state except for the global time $\globalTime, \globalTime'$ and recorder variables~$\TVar$.

Intuitively, axiom \RuleName{acDropComp} restricts $\beta$
in \rref{eq:schematic_injection} such that $\alpha$ overapproximates the behavior of $\alpha \parOp \beta$ influencing $\A$, $\C$, or $\psi$.
For this purpose, noninterference internalized in $\rpconst[alt]{\cset_b \cap (\cset^\complement \cup \cset_a)}{\varvec^\complement}$ forbids~$\pconst[alt]$ to bind variables $\varvec$ that are free in the postcondition $\psymb(\cset, \varvec)$, 
and $\cset^\complement$ forbids $\pconst[alt]$ to bind channels $\cset$ (except for channels $\cset_a$ written by $\pconst$ because joint parallel communication can already be observed from~$\pconst$, too).
\ifreport
The cut with $\cset_b$ allows downscaling of the channels $\pconst[alt]$ has to bind.
Since parallel programs always agree on the global time $\globalTime, \globalTime'$ and the communication recorded by trace variables $\TVar$, 
the operator~$\wfParOp$ allows their sharing even if $\varvec^\complement$ disallows it.
Note that $\cset_a$ and $\cset$, and $\varvec_a$ and $\varvec$ may overlap.
\else
Moreover, parallel programs always agree on the global time $\globalTime, \globalTime'$ and the communication recorded by trace variables $\TVar$.
Therefore, the operator~$\wfParOp$ explicitly allows their sharing even if $\varvec^\complement$ disallows it.
Note that $\cset_a$ and $\cset$, and $\varvec_a$ and $\varvec$ may overlap but can also be disjoint.
\fi

Despite its asymmetric shape, axiom \RuleName{acDropComp} decomposes $[ \alpha {\parOp} \beta ] (\phi \wedge \psi)$ into $[ \alpha ] \phi$ and $[ \beta ] \psi$ (if they mutually do not interfere) 
via independent proofs for $[ \alpha {\parOp} \beta] \phi$ and $[ \alpha {\parOp} \beta ] \psi$,
which drop either $\alpha$ or $\beta$ by \RuleName{acDropComp} modulo commutativity. 

\begin{figure}[tb]
	\begin{small}
		\begin{minipage}{\textwidth}
			\begin{calculus}
				\startAxiom{assign}
					$[ x \ceq \fsymb[alt, real]] \psymb(x) \leftrightarrow \psymb(\fsymb[alt, real])$%
					\footnote{\label{ft:folR}Replacements for function symbol $\fsymb[alt, real]$ and predicate symbol $\psymb[alt, real]$ are restricted to polynomials in $\RVar$ and first-order real arithmetic, respectively.}
				\stopAxiom
				\startAxiom{nondetAssign}
					$[ x \ceq * ] \psymb(x) \leftrightarrow \fa{x} \psymb(x)$
				\stopAxiom
				\startAxiom{test}
					$[ \test{\psymb[alt, real]} ] \psymb \leftrightarrow (\psymb[alt, real] \rightarrow \psymb)$%
					\footnoteref{ft:folR}
				\stopAxiom
				\startAxiom{boxesDual}
					$[ \pconst ] \Psymb \leftrightarrow [ \pconst ] \acpair{\true, \true} \Psymb$
				\stopAxiom
			\end{calculus}\hspace*{.25em}%
			\begin{calculus}
				\startAxiom{acComposition}
					$[ \pconst \seq \pconst[alt] ] \ACsymb \Psymb \leftrightarrow [ \pconst ] \ACsymb [ \pconst[alt] ] \ACsymb \Psymb$
				\stopAxiom
				\startAxiom{acChoice}
					$[ \pconst \cup \pconst[alt] ] \ACsymb \Psymb \leftrightarrow [ \pconst ] \ACsymb \Psymb \wedge [ \pconst[alt] ] \ACsymb \Psymb$
				\stopAxiom
				\startAxiom{acIteration}
					$[ \repetition{\pconst} ] \ACsymb \Psymb \leftrightarrow [ \pconst^0 ] \ACsymb \Psymb \wedge [ \pconst ] \ACsymb [ \repetition{\pconst} ] \ACsymb \Psymb$%
					\footnote{Recall that $[\alpha^0] \ACsymb \Psymb \leftrightarrow \Csymb \wedge (\Asymb \rightarrow \Psymb)$ by \RuleName{acNoCom} and \RuleName{test} since $\alpha^0 \equiv \; \test{\true}$.}
				\stopAxiom
				\startAxiom{assumptionWeak}
					$\compCondDomain \wedge [ \pconst ] \acpair{\Asymb_1 \wedge \Asymb_2, \Csymb_1 \wedge \Csymb_2} \Psymb \rightarrow [ \pconst ] \acpair{\Asymb, \Csymb_1 \wedge \Csymb_2} \Psymb$%
					\footnote{$\compCondition$ is the compositionality condition $\compConditionExpanded$.}
				\stopAxiom				
			\end{calculus}
				
			\begin{calculus}
				\startAxiom{acDropComp}
					$[ \rpconst{\cset_a}{\bvarvec} ] \ACsymb \psymb(\cset, \varvec) \rightarrow [ \rpconst{\cset_a}{\bvarvec} \wfParOp \rpconst[alt]{\cset_b\cap(\cset^\complement \cup \cset_a)}{\varvec^\complement} ] \ACsymb \psymb(\cset, \varvec)$%
					\footnote{The operator $\wfParOp$ abbreviates well-formed parallel composition (see above).}
				\stopAxiom
				\startAxiom{gtime}
					$[\evolution{\rvarvec' = \fsymb[alt, real](\rvarvec, \globalTime)}{\psymb[alt, real](\rvarvec, \globalTime)}] \psymb(\rvarvec, \globalTime) \leftrightarrow [\evolution{\globalTime' = 1, \rvarvec' = \fsymb[alt, real](\rvarvec, \globalTime)}{\psymb[alt, real](\rvarvec, \globalTime)}] \psymb(\rvarvec, \globalTime)$%
					\footnoteref{ft:folR}
				\stopAxiom
			\end{calculus}
			
			\begin{calculus}
				\startAxiom{send}
					$[ \send{}{}{\fsymb[alt, real]} ] \psymb(\ch{}, \historyVar) 
					\leftrightarrow \fa{\historyVar_0} 
						\big(
							\historyVar_0 = \historyVar \cdot \comItem{\ch{}, \fsymb[alt, real], \globalTime} \rightarrow \psymb(\ch{}, \historyVar_0)
						\big)$
				\stopAxiom
				\startAxiom{acCom}
					$[ \send{}{}{\fsymb[alt, real]} ] \acpair{\Acom, \Ccom} \Pcom$
					$\leftrightarrow \Ccom \wedge \Big( \Acom \rightarrow [ \send{}{}{\fsymb[alt, real]} ] \big( \Ccom \wedge (\Acom \rightarrow \Pcom ) \big)\Big)$%
				\stopAxiom 
				\startAxiom{comDual}
					$[ \receive{}{}{} ] \acpair{\Acom, \Ccom} \psymb(\ch{}, \historyVar, x) \leftrightarrow [ x \ceq * ] [ \send{}{}{x} ] \acpair{\Acom, \Ccom} \psymb(\ch{}, \historyVar, x)$
				\stopAxiom
				\startAxiom{acNoCom}
					$[ \rpconst{\emptyset}{\RVar} ] \acpair{\Asymb, \Csymb} \Psymb \leftrightarrow \Csymb \wedge (\Asymb \rightarrow [ \rpconst{\emptyset}{\RVar} ] \Psymb)$
				\stopAxiom
				\startAxiom{acWeak}
					$[ \pconst ] \ACsymb \Psymb \leftrightarrow \Csymb \wedge [ \pconst ] \ACsymb ( \Csymb \wedge (\Asymb \rightarrow \Psymb) )$
				\stopAxiom
				\startAxiom{acInduction}
					$[ \repetition{\pconst} ] \ACsymb \Psymb \leftrightarrow [ \pconst^0 ] \ACsymb \Psymb \wedge [\repetition{\pconst}] \acpair{\Asymb, \true} (\Psymb \rightarrow [ \pconst ] \ACsymb \Psymb)$
				\stopAxiom
				\startAxiom{acModalMP}
					$[ \pconst ] \acpair{\psymb[ac, opt], \psymb[ac, alt]_1 \rightarrow \psymb[ac, alt]_2} (\psymb[ac]_1 \rightarrow \psymb[ac]_2) \rightarrow ([ \pconst ] \acpair{\psymb[ac, opt], \psymb[ac, alt]_1} \psymb[ac]_1 \rightarrow [ \pconst ] \acpair{\psymb[ac, opt], \psymb[ac, alt]_2} \psymb[ac]_2)$
				\stopAxiom
			\end{calculus}
			\hspace*{0em}
			\begin{calculus}[r]
				\startRule{MP}
					\Axiom{$\psymb \rightarrow \psymb[alt]$}
					\Axiom{$\psymb$}
					\BinaryInf{$\psymb[alt]$}
				\stopRule
				\startRule{acG}
					\Axiom{$\Csymb \wedge \Psymb$}
					\UnaryInf{$[ \pconst ] \ACsymb \Psymb$}
				\stopRule
				\startRule{forall}
					\Axiom{$\psymb(x)$}
					\UnaryInf{$\fa{x} \psymb(x)$}
				\stopRule
				\startRule[OverhangLeft=0em, OverhangRight=0em]{CE}
					\Axiom{$\Psymb_1 \leftrightarrow \Psymb_2$}
					\UnaryInf{$C(\Psymb_1){\leftrightarrow}C(\Psymb_2)$}
				\stopRule
			\end{calculus}
			\vspace*{.4em}

			\begin{small}
				$\Psymb_j \equiv \psymb_j(\cset, \varvec)$, and $\Asymb_j \equiv \psymb[opt]_j(\cset, \hvarvec)$, and $\Csymb_j \equiv \psymb[alt]_j(\cset, \hvarvec)$, and $\widehat{\chi} \equiv \chi(\ch{}, \historyVar)$,
				where $j$ may be blank, 
				and $\cset \subseteq \Chan$, $\varvec \subseteq \RVar \cup \TVar$, and $\hvarvec \subseteq \TVar$ are (co)finite.
			\end{small}
		\end{minipage}
	\end{small}
	\caption{\dLCHP proof calculus}
	\vspace*{-1mm}
	\label{fig:calculus}
\end{figure}

\subsubsection{Axiom System}

For each program statement,
there is either a dynamic or an ac-axiom because the respective other version derives by axiom \RuleName{boxesDual} or \RuleName{acNoCom}.
Axioms \RuleName{assign}, \RuleName{nondetAssign}, and \RuleName{test} are as in \dL \cite{DBLP:journals/jar/Platzer17}.
Axioms \RuleName{acComposition}, \RuleName{acChoice}, and \RuleName{acIteration} for decomposition, and \RuleName{acInduction} for induction carefully generalize their versions in differential \cite{DBLP:journals/jar/Platzer17} dynamic \cite{Harel1979} logic to ac-reasoning.
Sending is handled step-wise via flattening the assumption-commitments by axiom \RuleName{acCom} and axiom \RuleName{send} that executes the effect onto the recorder $\historyVar$.
The duality \RuleName{comDual} turns receiving into arbitrary sending,
which only synchronizes if it agrees with the parallel context on the value.
Usage of axiom \RuleName{acWeak} is for convenience.
Axiom \RuleName{gtime} materializes the flow of global time~$\globalTime$ such that \dL's axiomatization of continuous evolution \cite{DBLP:journals/jar/Platzer17} gets applicable,
which requires ODE shape $\rvarvec' = \fsymb[real](\rvarvec)$.
The axiomatic proof rules \RuleName{acG}, \RuleName{MP}, \RuleName{forall}, and \RuleName{CE} are an ac-version of G\"odels generalization rule, modus ponens, quantifier elimination, and contextual equivalence, respectively.

The axiom \RuleName{assumptionWeak} can weaken assumptions.
Its slight change compared to \dLCHP's schematic calculus \citeDLCHP exploits that the compositionality condition $\compCondition$ is only required for $\pconst$'s reachable worlds. 
Interestingly, \dLCHP's monotonicity rule \RuleName{acMono} \citeDLCHP does not derive from modal modus ponens \RuleName{acModalMP} and G\"odel generalization \RuleName{acG} in analogy to \dL~\cite{DBLP:journals/jar/Platzer17}
but needs \RuleName{acWeak} handling monotonicity of assumptions,
which does not fit into \RuleName{acG} because necessitating the assumption in \RuleName{acG} would render the derivation of $[ \alpha ] \acpair{\false, \true} \true$ by \RuleName{acG} impossible.

Axioms using postcondition $\Psymb \equiv \psymb(\cset, \varvec)$,
\eg in \RuleName{acComposition},
allow any replacement of $\Psymb$ since accessed channels $\cset \subseteq \Chan$ and free variables $\varvec \subseteq \RVar \cup \TVar$ can be arbitrary.
Replacements of assumptions $\Asymb \equiv \asymb(\cset, \hvarvec)$ and commitments $\Csymb \equiv \csymb(\cset, \hvarvec)$ can instead only mention trace variables $\hvarvec \subseteq \TVar$ bound in their context. 
This reflects that trace variables are the only interface between the program $\alpha$ and the ac-formulas $\A$ and $\C$ in an ac-box $[ \alpha ] \ac \psi$ (well-formedness).

\begin{theorem}[Soundness] \label{thm:soundness}
	The proof calculus for \dLCHP presented in \rref{fig:calculus} is sound as an instantiation of the schematic calculus \citeDLCHP.
\end{theorem}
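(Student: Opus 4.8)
The plan is to reduce soundness of the calculus to two ingredients, one already established and one requiring only small local arguments: validity of every axiom \emph{formula} in \rref{fig:calculus}, and validity preservation by each of its three proof rules \RuleName{US}, \RuleName{MP}, and \RuleName{acG}. Given these, a routine induction on derivations shows that every derivable formula is valid. Validity preservation is immediate: \RuleName{US} is sound by \rref{thm:ussound}, \RuleName{MP} is ordinary modus ponens, and \RuleName{acG} preserves validity directly from the ac-box clause of \rref{def:formulaSemantics} (if $\Csymb \wedge \Psymb$ is valid, then both the commit and post requirements of $[\pconst]\ACsymb\Psymb$ hold vacuously in every state); it is the ac-analogue of G\"odel generalization, already sound schematically in \citeDLCHP. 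Hence the remaining work is establishing validity of the axiom formulas.

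For each axiom of \rref{fig:calculus} I would argue that the concrete formula is one particular instance of the corresponding \emph{sound} schematic axiom of \citeDLCHP, so that its validity transfers at once. The key observation is that the symbolic machinery of \rref{sec:unisubs} internalizes the schemata's side conditions: by the semantics of program constants together with \rref{lem:boundEffect}, a constant $\rpconst{\cset}{\varvec}$ denotes an arbitrary prefix-closed, total, chronological computation set that writes only channels in $\cset$ and binds only variables in $\varvec$; and by \rref{lem:termCoincidence} and \rref{lem:formulaCoincidence}, the argument-restricted symbols $\fsymb(\chvec,\exprvec)$ and $\psymb(\chvec,\exprvec)$ denote quantities depending only on the channels $\chvec$ and the listed arguments. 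In particular, for \RuleName{acDropComp} the partner constant $\rpconst[alt]{\cset^\complement \cup \cset_a}{\varvec^\complement}$ together with the $\wfParOp$ abbreviation materializes exactly the noninterference side condition \citeDLCHP[Def.\,7] --- the parallel partner writes no variable or channel free in $\A$, $\C$, or $\psi$ beyond the joint channels $\cset_a$ and the inherently shared $\globalTime,\globalTime',\TVar$ --- and the disjoint-state well-formedness demanded by \rref{def:syntax_chps}; thus the concrete axiom is a valid instance of that schema. The remaining axioms \RuleName{assign}, \RuleName{nondetAssign}, \RuleName{test}, \RuleName{boxesDual}, \RuleName{acComposition}, \RuleName{acChoice}, \RuleName{acIteration}, \RuleName{acNoCom}, \RuleName{acWeak}, \RuleName{acInduction}, \RuleName{gtime}, \RuleName{send}, \RuleName{acCom}, and \RuleName{comDual} are treated the same way, and \rref{prop:unisubs_well_formed} guarantees that instantiation via \RuleName{US} stays within the context-sensitive syntax.

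The few places where \rref{fig:calculus} genuinely departs from \citeDLCHP call for a short direct argument, carried out in \rref{app:calculus}: \RuleName{assumptionWeak} is reshaped for the symbolic setting by exploiting that its compositionality condition is only needed over the reachable worlds of $\pconst$; \RuleName{acModalMP} is a new ac-version of modal modus ponens, valid directly by \rref{def:formulaSemantics}; and the formerly primitive rules \RuleName{acBoxesDist} and \RuleName{acMono} are now derivable from the retained axioms, so dropping them cannot affect soundness. The algebraic trace laws and the decidable first-order real and Presburger arithmetic oracle rules are inherited verbatim from \citeDLCHP.

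The main obstacle is certifying the faithfulness of this symbolic encoding: one must check that the concrete program constants and argument-restricted function and predicate symbols capture exactly the instances permitted by the original side conditions --- admitting no unsound instance, and (a design goal rather than a soundness concern) losing no intended derivation. This hinges on matching the communication-aware static semantics and its coincidence properties (\rref{lem:boundEffect}, \rref{lem:termCoincidence}, \rref{lem:formulaCoincidence}, \rref{lem:programCoincidence}) to the notions underlying \citeDLCHP's side conditions, and on the correctness of uniform substitution's taboo and parallel-context bookkeeping (\rref{lem:unisubs_correct_bound}, \rref{lem:unisubs_term}, \rref{lem:unisubs_fml_prog}). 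Among the genuinely changed axioms, \RuleName{assumptionWeak} is the most delicate, since its validity proof must carefully track the interaction between the compositionality condition and the set of reachable worlds.
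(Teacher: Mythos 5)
Your overall strategy coincides with the paper's: every axiom except \RuleName{acModalMP} and \RuleName{assumptionWeak} is justified as an instance of the corresponding sound schematic axiom of \citeDLCHP, with the restricted program constants and channel-annotated predicate symbols internalizing the schematic side conditions (exactly as you spell out for \RuleName{acDropComp}), rule \RuleName{US} is covered by \rref{thm:ussound}, and only the new axiom \RuleName{acModalMP} and the reshaped \RuleName{assumptionWeak} need fresh semantic arguments. Your one-line justification of \RuleName{acModalMP} is adequate; it follows by instantiating the \acCommit and \acPost clauses of \rref{def:formulaSemantics}, which is all the paper does.

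The genuine gap is \RuleName{assumptionWeak}: this is where essentially all of the new proof content of the theorem lives, and you only gesture at it (deferring to \rref{app:calculus} is circular for a blind proof). You correctly identify the key insight --- the compositionality condition $\compCondition$ is only required on $\pconst$'s reachable worlds --- but the argument that makes this work is a simultaneous induction on the length of the communication trace $\trace$ of a computation in $\sem{\pconst}{\inter}$, proving at once that the weakened ac-box holds \emph{and} that $\assCommit{\lstate{v}}{\trace} \vDash \Asymb$ implies $\assCommit{\lstate{v}}{\trace} \vDash \Asymb_1 \wedge \Asymb_2$. The induction step uses prefix-closedness of the program semantics to apply the premise box to the shorter prefix $\trace_0$, obtains $\Csymb_1 \wedge \Csymb_2$ there, and only then uses $\compCondition$ together with $\Asymb$ to upgrade to $\Asymb_1 \wedge \Asymb_2$ at the next prefix. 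Without this induction the apparent circularity --- the stronger assumptions are needed to establish the commitments, which are in turn needed (via $\compCondition$) to recover the stronger assumptions --- is not broken, so the soundness of \RuleName{assumptionWeak} cannot simply be waved through as a ``short direct argument.''
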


\subsubsection{Clashes}

\newcommand{\omittedAC}{\acpair{\true, \true}}
\newcommand{\psiPlaceholder}{\psymb(\cset, \historyVar, y)}

Clashes sort out unsound instantiations of axioms.
Unlike in \dL and \dGL \cite{DBLP:journals/jar/Platzer17, DBLP:conf/cade/Platzer19} whose clashes are solely due to tabooed variables in terms and formulas,
clashes in \dLCHP can also be due to tabooed channels, 
and even due to taboos in programs.
For example, the substitution $\usubs = \{ \pconst \smapsto \send{\ch{gh}}{}{1}, \pconst[alt] \smapsto \send{}{}{2}, \psymb \smapsto \psi, \asymb \smapsto \true, \csymb \smapsto \true \}$ 
with $\psi \equiv \len{\historyVar \downarrow \ch{}} > 0 \wedge \len{\historyVar \downarrow \ch{dh}} > 0 \wedge y < 0$ clashes below,
where $\cset = \{ \ch{}, \ch{dh} \}$, 
and $\varvec \equiv \historyVar, y$, 
and $\Asymb \equiv \asymb(\cset)$, and $\Csymb \equiv \csymb(\cset)$.
Writing channel $\ch{}$ in the replacement for $\pconst[alt]$ would break the local abstraction of~$\pconst$ as  $\ch{}$ is accessed in~$\psi$ 
but not written in the replacement for $\pconst$,
thus the clash indeed sorts out an unsound instantiation.
\vspace*{-1em}%
\begin{small}
	\begin{prooftree}[OverhangLeft=3pt, OverhangRight=3pt]
		\Axiom{$[ \rpconst{\{ \ch{gh} \}}{\historyVar} ] \acpair{\Asymb, \Csymb} \psymb(\cset, \varvec) \rightarrow [ \rpconst{\{ \ch{gh} \}}{\historyVar} \wfParOp \rpconst[alt]{\{\ch{}\} {\cap} (\cset^\complement {\cup} \{ \ch{gh}) \}}{\varvec^\complement} ] \acpair{\Asymb, \Csymb} \psymb(\cset, \varvec)$}
	
		\RuleNameRight{clash}
		\UnaryInf{$[ \send{\ch{gh}}{}{1} ] \omittedAC \psi \rightarrow [ \send{\ch{gh}}{}{1} \parOp \send{}{}{2} ] \omittedAC \psi$}
	\end{prooftree}
\end{small}

In contrast, $\usubs = \{ \pconst \smapsto \receive{}{}{x} \seq \send{\ch{gh}}{}{1}, \pconst[alt] \smapsto \send{}{}{2}, \psymb \smapsto \psi, \asymb \smapsto \true, \csymb \smapsto \true \}$
does not clash below,
where $\cset = \{ \ch{}, \ch{dh} \}$, and $\cset_a = \{ \ch{}, \ch{gh} \}$, 
and other abbreviations are as above,
because $\ch{} \in \cset^\complement \cup \cset_a = \{ \ch{dh} \}^\complement$.
Intuitively, the $\ch{}$-communication of $\pconst[alt]$ remains observable after dropping $\pconst[alt]$ from the parallel composition as it is joint with~$\pconst$.%
\vspace*{-1em}%
\begin{small}
	\begin{prooftree}
		\Axiom{$*$}
	
		\RuleNameRight{acDropComp}
		\UnaryInf{$[ \rpconst{\cset_a}{\historyVar, x} ] \acpair{\Asymb, \Csymb} \psymb(\cset, \varvec) \rightarrow [ \rpconst{\cset_a}{\historyVar, x} \wfParOp \rpconst[alt]{\{\ch{}\} \cap (\cset^\complement \cup \cset_a)}{\varvec^\complement} ] \acpair{\Asymb, \Csymb} \psymb(\cset, \varvec)$}
	
		\RuleNameRight{US}
		\UnaryInf{$[ \receive{}{}{x} \seq \send{\ch{gh}}{}{1} ] \omittedAC \psi \rightarrow [ (\receive{}{}{x} \seq \send{\ch{gh}}{}{1}) \parOp \send{}{}{2} ] \omittedAC \psi$}
	\end{prooftree}
\end{small}

Also note that by the operator $\wfParOp$ for well-formed parallel composition, the recorder variable $\historyVar$ can be shared without causing a clash above.
However, clashes prevent instantiation that would violate syntactic  well-formedness of programs (\rref{def:syntax_chps}) by binding the same state variable in parallel:%
\vspace*{-1em}%
\begin{small}
	\begin{prooftree}
		\Axiom{$[ \rpconst{\emptyset}{x} ] \acpair{\asymb, \csymb} \psymb(x, y) \rightarrow [ \rpconst{\emptyset}{x} \wfParOp \rpconst[alt]{\emptyset}{\{x, y\}^\complement} ] \acpair{\asymb, \csymb} \psymb(x, y)$}
	
		\RuleNameRight{clash}
		\UnaryInf{$[ x \ceq y ] \omittedAC y = x \rightarrow [ x \ceq y \parOp x \ceq 0 ] \omittedAC y = x$}
	\end{prooftree}
\end{small}

Well-formedness of programs and formulas is ensured in the axioms by well-formed parallel composition $\wfParOp$ and limitation to trace variables $\hvarvec$ in \(\Asymb_j \equiv \psymb[opt]_j(\cset, \hvarvec)\) and \(\Csymb_j \equiv \psymb[alt]_j(\cset, \hvarvec)\) in ac-boxes $[ \alpha ] \acpair{\Asymb_j, \Csymb_j} \psi$ in \rref{fig:calculus}, respectively.
By \rref{prop:unisubs_well_formed}, uniform substitution always preserves well-formedness.

\newcommand{\colorRatio}{5}
\newcommand{\colortikz}[2]{\tikz[baseline=-.5ex]\node[rectangle, fill=#1!\colorRatio, inner sep=.3mm]{#2};}

\newcommand{\acBox}[1]{[ #1 ] \acpair{\Asymb, \Csymb} \Psymb}

\tikzstyle{every picture}+=[remember picture]
\tikzstyle{na} = [baseline=-.5ex]

\tikzstyle{proofnode} = [inner sep=0, outer sep=0]

\newcommand{\ColoredAxiom}[2]{\Axiom{\tikz \node[fill=#1!\colorRatio, baseline=-.5ex, proofnode, inner sep=.5ex] {#2};}}
\newcommand{\ColoredUnary}[2]{\UnaryInf{\tikz \node[fill=#1!\colorRatio, baseline=-.5ex, proofnode] {#2};}}
\newcommand{\ColoredBinary}[2]{\BinaryInf{\tikz \node[fill=#1!\colorRatio, baseline=-.5ex, proofnode] {#2};}}

\begin{example}
	The proof tree below decomposes safety (\rref{ex:safety}) of cruise control (\rref{ex:prog}) into safety \circled{1} of controller $\controllerName$ and branch \circled{2} to be continued to safety of the vehicle \vehicleName.
	The \colortikz{green}{lower subproof} introduces the ac-formulas 
	\begin{equation*}
		\A \equiv \C \equiv \big( \len{\historyVar \downarrow \ch{tar}} > 0 \rightarrow \vrange{\val{\historyVar \downarrow \ch{tar}}} \big)
	\end{equation*}
	using axiom \RuleName{assumptionWeak} to abstract from the communication between $\controllerName$ and $\vehicleName$.
	The \colortikz{blue}{upper subproof} uses the parallel injection axiom \RuleName{acDropComp} to drop $\vehicleName$.
	Uniform substitution \RuleName{US} does not clash as the commitment $\C$ only refers to joint communication of $\controllerName$ and $\vehicleName$.
	Other applications of \RuleName{US} (\eg for \RuleName{assumptionWeak}) are omitted.
	Rule \RuleName{prop} denotes propositional reasoning.
	Abbreviations are as follows: $\alpha \equiv \rpconst{\ch{tar}}{\vtar{\controllerName}, t, t', \globalTime, \globalTime', \historyVar}$, $\Asymb \equiv \asymb(\ch{tar}, \historyVar)$, $\Csymb \equiv \csymb(\ch{tar}, \historyVar)$, $\Psymb  \equiv \psymb(\ch{tar})$.

	\begin{small}
		\hspace*{-3em}\begin{tikzpicture}
			\node (lower)
			{\vbox{
				\begin{prooftree*}
					\ColoredAxiom{green}{$*$}
			
					\RuleNameLeft{prop}
					\ColoredUnary{green}{$(\C \rightarrow \A) \wedge \true$}
			
					\RuleNameLeft{acG}
					\ColoredUnary{green}{$\ccpre \rightarrow [ \controller {\parOp} \vehicle ] \acpair{\true, \C \rightarrow \A} \true$}
			
					\Axiom{}
			
					\Axiom{\circled{2}}
			
					\UnaryInf{$\ccpre \rightarrow [ \controller {\parOp} \vehicle ] \acpair{\true \wedge \A, \true} \ccsafe$}

					\SetOption{HypSeparation}{8em}
			
					\RuleNameRight{andR}
					\BinaryInf{$\ccpre \rightarrow [ \controller {\parOp} \vehicle ] \acpair{\true \wedge \A, \C} \true \wedge [ \controller {\parOp} \vehicle ] \acpair{\true \wedge \A, \true} \ccsafe$}
			
					\RuleNameRight{acBoxesDist}
					\ColoredUnary{green}{$\ccpre \rightarrow [ \controller {\parOp} \vehicle ] \acpair{\true \wedge \A, \C \wedge \true} (\true \wedge \ccsafe)$}	
	
					\SetOption{HypSeparation}{1em}
	
					\RuleNameRight{andR}
					\ColoredBinary{green}{$\ccpre \rightarrow [ \controller {\parOp} \vehicle ] \acpair{\true, \C \rightarrow \A} \true \wedge [ \controller {\parOp} \vehicle ] \acpair{\true \wedge \A, \C \wedge \true} (\true \wedge \ccsafe)$}
				
					\RuleNameRight{assumptionWeak}
					\ColoredUnary{green}{$\ccpre \rightarrow [ \controller {\parOp} \vehicle ] \acpair{\true, \C \wedge \true} (\true \wedge \ccsafe)$}
				
					\RuleNameRight{boxesDual, acMono}
					\ColoredUnary{green}{$\ccpre \rightarrow [ \controller {\parOp} \vehicle ] \ccsafe$}
				\end{prooftree*}
			}};
	
			\node (upper) [above right=-2em and -37.5em of lower]
			{\vbox{
				\begin{prooftree*}
					\Axiom{\circled{1}}
			
					\ColoredUnary{blue}{$\ccpre \rightarrow [ \controller ] \acpair{\true, \C} \true$}
			
					\ColoredAxiom{blue}{$*$}
			
					\RuleNameRight{acDropComp}
					\ColoredUnary{blue}{$\acBox{\alpha} \rightarrow \acBox{\alpha \wfParOp \rpconst[alt]{\ch{tar}}{\vtar{\vehicleName}, \acceleration, t_0, \velo{\vehicleName}, \velo{\vehicleName}'}}$}
			
					\RuleNameRight{US}
					\ColoredUnary{blue}{$[ \controller ] \acpair{\true, \C} \true \rightarrow [ \controller {\parOp} \vehicle ] \acpair{\true, \C} \true$}
			
					\RuleNameRight{MP, CE}
					\ColoredBinary{blue}{$\ccpre \rightarrow [ \controller {\parOp} \vehicle ] \acpair{\true, \C} \true$}
			
					\RuleNameRight{acMono}
					\ColoredUnary{blue}{$\ccpre \rightarrow [ \controller {\parOp} \vehicle ] \acpair{\true \wedge \A, \C} \true$}
				\end{prooftree*}	
			}};
	
			\draw[-stealth, thick] (lower) ++ (.3, .85) to [in=320, out=140] ([shift={(-1.3, 0)}]upper.south);
		\end{tikzpicture}
	\end{small}
\end{example}

\section{Related Work} \label{sec:related}

Uniform substitution for differential dynamic logic \dL \cite{DBLP:journals/jar/Platzer17} generalizes Church's 
uniform substitution for first-order logic \cite[\S 35, 40]{Church1956}.
Unlike the lifting from \dL to differential game logic \dGL \cite{DBLP:conf/cade/Platzer18},
\dLCHP generalizes into the complementary direction of communication and parallelism.
Unlike schematic calculi \cite{OwickiGries1976, Apt1980, LevinGries1981, AcHoare_Zwiers, Xu1997}, whose treacherous schematic simplicity relies on encoding all subtlety of parallel systems in significant soundness-critical side conditions,
our development builds upon a minimalistic non-schematic parallel injection axiom \emph{and} sound instantiation encapsulated in uniform substitution.
This provides a new, more atomic and more modular understanding of parallel systems overcoming the root cause for large soundness-critical prover kernels \cite{Kirchner2015, Blom2017, Cohen2009, Schellhorn2022, Jacobs2011, GibsonRobinson2014}.
Usage of uniform substitution reduced the kernel of the theorem prover \keymaera from 105 \kloc to 2 \kloc in \keymaerax \cite{DBLP:series/lncs/MitschP20}.
We expect \dLCHP's integration into \keymaerax to stay in the same order of magnitude.

To the best of our knowledge,
assumption-commitment reasoning \cite{Misra1981,AcHoare_Zwiers}%
\footnote{	
Assumption-commitment and rely-guarantee reasoning are specific patterns for message-passing and shared variables concurrency, respectively.
The broader assume-guarantee principle has been used across diverse areas for various purposes.}
has no tool support,
which might be due to vast implementation effort.
The latter can be underpinned by analogy with tools \cite{Kirchner2015, Blom2017, Cohen2009, Schellhorn2022, Jacobs2011} for verification of shared-variables concurrency,
some of which use rely-guarantee reasoning \cite{Smans20140501,Schellhorn2022}.
Unlike uniform substitution for \dLCHP that enables a straightforward implementation of a small prover kernel, they all rely on large soundness-critical code bases.
Unlike refinement checking for CSP \cite{GibsonRobinson2014} and discrete-time CSP \cite{Armstrong2014}, \dLCHP supports safety properties of dense-time hybrid systems.
Contrary to our goal of small prover kernels, implementations of model checkers \cite{GibsonRobinson2014} are inherently large.

Beyond embeddings of concurrency reasoning for discrete systems into proof assistants \cite{Nieto2002, Nipkow1999, Shi2018, ArmstrongIsabelleAlgebra},
\dLCHP can verify parallel hybrid systems synchronizing in shared global time. 
The latter imposes even more complicated binding structures than parallel or hybrid systems alone but \dLCHP's uniform substitution calculus continues to manage them in a modular way.

The recent tool HHLPy \cite{Sheng2023} for hybrid CSP (HCSP) \cite{Jifeng1994} is limited to the sequential fragment.
Unlike extending HHLPy to parallelism,
which would require extensive soundness-critical side conditions and a treatment of the duration calculus,
integrating \dLCHP into \keymaerax \cite{DBLP:conf/cade/FultonMQVP15} boils down to 
adding a finite list of concrete object level formulas as axioms and only small changes to the uniform substitution process.
In contrast to \dLCHP's compositional parallel systems calculus \citeDLCHP,
HCSP calculi \cite{Liu2010,Wang2012,Guelev2017} are non-compositional \citeDLCHP as they either unroll exponentially many interleavings from the operational semantics \cite{Wang2012,Guelev2017} or can only decompose independent parallel components \cite{Liu2010} causing limited ability to reason about complex systems.
Former HCSP tools \cite{Zou2013, Wang2015} only implement a non-compositional calculus \cite{Liu2010} reinforcing the significance of our approach for managing parallel hybrid systems reasoning.
Other hybrid process algebras defer to model checkers for reasoning \cite{Man2005, Cong2013, Song2005}.
Further discussion of \dLCHP is in \citeDLCHP.

\section{Conclusion} \label{sec:conclusion}

This paper introduced a sound one-pass uniform substitution calculus for the dynamic logic of communicating hybrid programs \dLCHP thereby mastering the significant challenge of developing simple sound proof calculi for parallel hybrid systems with  communication.
Uniform substitution can separate even notoriously complicated binding structures from parallelism with communication in multi-dynamical logics into axioms and their instantiation.
In the case of \dLCHP, this applies to channel access in predicates and the need for local abstraction of subprograms in parallel statements, and it even turns out that uniform substitution can maintain a context-sensitive syntax along the way.
Thanks to uniform substitution, parallel systems reasoning reduces to multiple uses of an asymmetric parallel injection axiom.

Now, with uniform substitution a straightforward implementation of \dLCHP in \keymaerax is only one step away.

\subsubsection{Acknowledgments}
This project was funded in part by the Deutsche For\-schungs-gemeinschaft (DFG) -- \href{https://gepris.dfg.de/gepris/projekt/378803395?context=projekt&task=showDetail&id=378803395&}{378803395} (ConVeY), 
an Alexander von Humboldt Professorship, and
by the AFOSR under grant number FA9550-16-1-0288.

\ifreport\else
\newpage
\renewcommand{\doi}[1]{doi: \href{https://doi.org/#1}{\nolinkurl{#1}}}
\bibliographystyle{splncs04}
\bibliography{platzer,literature}
\fi

\ifreport
\appendix

\section{Details of the Static Semantics}
\label{app:staticSemantics}

This appendix reports proofs of the bound effect property and coincidence lemmas given in \rref{sec:static_semantics}.
Moreover, sound syntactical overapproximations of the static semantics from previous work \citeDLCHP are given and extended to function and predicate symbols, and program constants.

\begin{proof}[of \rref{lem:boundEffect}]
	Let $\computation \in \sem{\alpha}{\inter}$ with $\pstate{w} \neq \bot$.	
	Then $\pstate{v} = \pstate{w}$ on $\TVar$ can be easily proven by induction on $\alpha$ because no program ever changes a trace variable.
	To prove $\pstate{v} = \stconcat{\pstate{w}}{\trace}$ on $\SBV(\alpha)^\complement$, 
	let $\arbitraryVar \not \in \SBV(\alpha)$.
	Then $(\stconcat{\pstate{w}}{\trace})(\arbitraryVar) = \pstate{w}(\arbitraryVar)$ by definition of $\SBV(\cdot)$.
	To prove $\trace \downarrow \SCN(\alpha)^\complement = \epsilon$,
	let $\computation \in \sem{\alpha}{\inter}$ and $\ch{} \not \in \SCN(\alpha)$.
	Then $\trace \downarrow \{\ch{}\} = \epsilon$ by definition of $\SCN(\cdot)$.
	Since this holds for all $\ch{} \in \SCN(\alpha)^\complement$,
	we obtain $\trace \downarrow \SCN(\alpha)^\complement = \epsilon$.

	Suppose that $\SBV(\alpha)$ and $\SCN(\alpha)$ are not the smallest sets with the bound effect property but $\varset \subseteq \V$ and $\cset \subseteq \Chan$ with $\varset \not\supseteq \SBV(\alpha)$ or $\cset \not\supseteq \SCN(\alpha)$ have it, too.
	Then there is $\arbitraryVar \in \SBV(\alpha)$ with $\arbitraryVar \not \in \varset$ or $\ch{} \in \SCN(\alpha)$ with $\ch{} \not \in \cset$.
	If $\arbitraryVar \in \SBV(\alpha)$ and $\arbitraryVar \not\in \varset$,
	then $\inter$ and $\computation \in \sem{\alpha}{\inter}$ exist such that $\pstate{v}(\arbitraryVar) \neq (\stconcat{\pstate{w}}{\trace})(\arbitraryVar)$.
	But then $\varset$ does not have the bound effect property as $\arbitraryVar$ changed by $\alpha$.
	If $\ch{} \in \SCN(\alpha)$ and $\ch{} \not\in \cset$,
	then $\inter$ and $\computation \in \sem{\alpha}{\inter}$ exist such that $\trace \downarrow \{\ch{}\} \neq \epsilon$.
	But then $\trace \downarrow \cset^\complement \neq \epsilon$ such that $\cset$ does not have the bound effect property.
	\qedhere
\end{proof}

The following lemma prepares the proof of the communication-aware coincidence property (\rref{lem:termCoincidence}) for terms and formulas:

\newcommand{\chanTraces}[2]{%
	T_{\ifempty{#1}{\cset}{#1}}%
	^{\ifempty{#2}{\trace}{#2}}}

\begin{lemma} \label{lem:chan_difference}
	Let $\chanTraces{}{} = \{ \trace[pre] \mid \trace[pre] \downarrow \cset = \trace \downarrow \cset \}$ for $\cset \subseteq \Chan$ and $\cset_0 = \cset \cup \{ \ch{} \}$.
	Then for all $\trace[pre], \trace \in \traces$,
	if $\trace[pre] \in \chanTraces{}{}$,
	then $\trace[ppre] \in \chanTraces{\cset_0}{}$ exists such that $\trace[ppre] \downarrow \{\ch{}\}^\complement = \trace[pre] \downarrow \{\ch{}\}^\complement$.
\end{lemma}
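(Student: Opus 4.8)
The plan is to produce the witness $\trace[ppre]$ by an explicit channel surgery on $\trace[pre]$ — discard all $\ch{}$-events of $\trace[pre]$ and splice in the $\ch{}$-events of $\trace$ at the positions dictated by their timestamps, leaving every non-$\ch{}$-event of $\trace[pre]$ untouched — and then to check the two projection identities together with chronology. First I would dispose of the trivial case $\ch{} \in \chanset$: there $\chanset_0 = \chanset$, so $\chanTraces{\chanset_0}{} = \chanTraces{}{}$, and $\trace[ppre] \equalsdef \trace[pre]$ works because the required identity $\trace[ppre] \downarrow \ch{}^\complement = \trace[pre] \downarrow \ch{}^\complement$ is then mere reflexivity. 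Henceforth assume $\ch{} \notin \chanset$, so that $\chanset \subseteq \ch{}^\complement$, $\chanset_0 \cap \ch{}^\complement = \chanset$, and $\chanset_0 \cup \ch{}^\complement = \Chan$.

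Next I would set $\trace[ppre]$ to be the timestamp-sorted interleaving of the two event lists $\trace \downarrow \{\ch{}\}$ and $\trace[pre] \downarrow \ch{}^\complement$ and verify three claims. (i) $\trace[ppre] \in \traces$: each of the two lists is itself chronological, since a projection of a chronological trace is again chronological, so the interleaving is chronological provided no timestamp is carried both by an $\ch{}$-event of $\trace$ and by a non-$\ch{}$-event of $\trace[pre]$; such a coincidence on a channel in $\chanset$ is impossible, for that $\trace[pre]$-event then also occurs in $\trace[pre] \downarrow \chanset = \trace \downarrow \chanset$, which would force $\trace$ to carry two events at one timestamp against its chronology. (ii) $\trace[ppre] \downarrow \ch{}^\complement = \trace[pre] \downarrow \ch{}^\complement$: deleting the $\ch{}$-events from $\trace[ppre]$ returns exactly the events of $\trace[pre] \downarrow \ch{}^\complement$ in their original order. (iii) $\trace[ppre] \in \chanTraces{\chanset_0}{}$, i.e. $\trace[ppre] \downarrow \chanset_0 = \trace \downarrow \chanset_0$: since $\chanset_0$ is the disjoint union of $\chanset$ and $\{\ch{}\}$, the $\chanset_0$-events of $\trace[ppre]$ are the $\ch{}$-events of $\trace$ together with the $\chanset$-events of $\trace[pre]$, which by $\trace[pre] \downarrow \chanset = \trace \downarrow \chanset$ is precisely the list of $\chanset_0$-events of $\trace$; two chronological traces with the same events coincide.

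The step I expect to be the main obstacle is claim (i): the chronology of the interleaving, equivalently that the timestamp sets of $\trace \downarrow \{\ch{}\}$ and $\trace[pre] \downarrow \ch{}^\complement$ are disjoint. The hypothesis $\trace[pre] \downarrow \chanset = \trace \downarrow \chanset$ together with chronology of $\trace$ only settles clashes on channels in $\chanset$; excluding a clash with a $\trace[pre]$-event on a channel outside $\chanset \cup \{\ch{}\}$ is the delicate part, which I expect to rest on the extra structure available where the lemma is actually invoked — inside the communication-aware coincidence proofs for terms and formulas (Lemmas~\ref{lem:termCoincidence} and~\ref{lem:formulaCoincidence}), where the traces in play are not arbitrary. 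Granted chronology, the remaining work is the routine subsequence bookkeeping for (ii) and (iii).
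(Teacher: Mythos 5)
Your construction is genuinely different from the paper's. The paper proves the lemma by structural induction on $\trace[pre]$, splicing the $\ch{}$-events of $\trace$ into $\trace[pre]$ \emph{positionally} (in the base case all of $\trace \downarrow \{\ch{}\}$ is emitted at once; in the step the $\ch{}$-events of $\trace$ preceding the current matching $\chanset$-event are inserted before it), whereas you build $\trace[ppre]$ in one shot as a timestamp-ordered merge of $\trace \downarrow \{\ch{}\}$ with $\trace[pre] \downarrow \ch{}^\complement$. Your verifications of (ii) and (iii) are sound and they are exactly what the lemma asserts: the merge preserves the relative order inside each of the two lists, so deleting the $\ch{}$-events recovers $\trace[pre] \downarrow \ch{}^\complement$; and on $\chanset_0$ the merge reproduces the timestamp order of the $\ch{}$- and $\chanset$-events of $\trace$, which is their actual order in $\trace$ since $\trace$ is chronological (and, as you note, a timestamp tie between a $\ch{}$-event and a $\chanset$-event would contradict chronology of $\trace$). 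The induction buys the paper a construction that never mentions timestamps; your merge buys a shorter, non-inductive argument at the price of having to think about ties.

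The one place you go astray is the diagnosis of claim (i). It cannot be rescued by ``extra structure at the invocation site'': the lemma is applied in the proof of \rref{lem:termCoincidence} (case 3) to arbitrary values $\pstate[alt]{v}(\arbitraryVar)$ and $\pstate[pre]{v}(\arbitraryVar)$ of trace variables in arbitrary states, so no timestamp disjointness between the $\ch{}$-events of $\trace$ and the non-$\chanset_0$-events of $\trace[pre]$ is available. Indeed no witness can be chronological in general: with $\chanset = \emptyset$, $\trace$ a single event on $\ch{}$ at time $1$, and $\trace[pre]$ a single event on $\ch{dh} \neq \ch{}$ at time $1$, any $\trace[ppre]$ satisfying both projection identities must carry two events with the same timestamp. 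The paper's own splice behaves the same way (its case $\Chan(\rawtrace) \notin \chanset$ prepends $\rawtrace$ to a sequence that already contains all $\ch{}$-events of $\trace$, including earlier-timestamped ones) and simply never checks chronology; the set $\chanTraces{\chanset_0}{}$ has to be read as unconstrained by the strict-timestamp condition on $\traces$. Under that reading your proof closes as is: break ties in the merge arbitrarily --- they can only occur between a $\ch{}$-event of $\trace$ and an event outside $\chanset_0$, and such pairs affect neither (ii) nor (iii). So the fix is to drop claim (i), not to defer it.
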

\begin{proof}
	For $\rawtrace = \comItem{\ch{}, a, \duration}$, we define $\Chan(\rawtrace) = \ch{}$.
	Moreover, we identify the item $\ch{}$ with the singleton $\{\ch{}\}$.
	Now, the proof is by induction on the structure of $\trace[pre]$:
	\begin{enumerate}
		\item $\trace[pre] = \epsilon$, then let $\trace[pre] \in \chanTraces{}{}$.
		Since $\trace[pre] \downarrow \cset = \epsilon$,
		we obtain $\trace \downarrow \cset = \epsilon$.
		We define $\trace[ppre] = \trace \downarrow \ch{}$.
		Now, $\trace[ppre] \downarrow \cset_0 = \trace \downarrow (\cset_0 \cap \ch{}) = \trace \downarrow \ch{}$,
		which equals $\trace \downarrow \cset_0$ because $\trace \downarrow \cset = \epsilon$.
		Hence, $\trace[ppre] \in \chanTraces{\cset_0}{\trace}$.
		Finally, $\trace[ppre] \downarrow \ch{}^\complement = \trace \downarrow (\ch{} \cap \ch{}^\complement) = \epsilon = \trace[pre] \downarrow \ch{}^\complement$.

		\item $\trace[pre] = \rawtrace \cdot \trace[pre]_0$ with $\semLen{\rawtrace} = 1$, then let $\trace[pre] \in \chanTraces{}{}$.
		Hence, $\trace[pre] \downarrow \cset = \trace \downarrow \cset$.
		
		If $\Chan(\rawtrace) \in \cset$, then $\trace[pre] \downarrow \cset = \rawtrace \cdot \trace[pre]_0 \downarrow \cset$ and $\trace = \trace_1 \cdot \rawtrace \cdot \trace_2$ for some $\trace_1, \trace_2$ with $\trace_1 \downarrow \cset = \epsilon$.
		Hence, $\trace[pre]_0 \downarrow \cset = \trace_2 \downarrow \cset$ such that $\trace'_0 \in \chanTraces{}{\trace_2}$.
		By IH, $\trace[ppre]_0 \in \chanTraces{\cset_0}{\trace_2}$ exists such that $\trace[ppre]_0 \downarrow \ch{}^\complement = \trace[pre]_0 \downarrow \ch{}^\complement$.
		We define $\trace[ppre] = \trace_1 \downarrow \ch{} \cdot \rawtrace \cdot \trace[ppre]_0$.
		Since $\trace[ppre]_0 \in \chanTraces{\cset_0}{\trace_2}$,
		we have $\trace[ppre]_0 \downarrow \cset_0 = \trace_2 \downarrow \cset_0$.
		Moreover, $\trace_1 \downarrow \cset = \epsilon$ implies $\trace_1 \downarrow \ch{} = \trace_1 \downarrow \cset_0$.
		Therefore, $\trace[ppre] \downarrow \cset_0 
		= \trace_1 \downarrow (\ch{} \cap \cset_0) \cdot \rawtrace \downarrow \cset_0 \cdot \trace[ppre]_0 \downarrow \cset_0
		= \trace_1 \downarrow \cset_0 \cdot \rawtrace \downarrow \cset_0 \cdot \trace_2 \downarrow \cset_0
		= \trace \downarrow \cset_0$.
		Hence, $\trace[ppre] \in \chanTraces{\cset_0}{}$.
		Moreover, $\trace[ppre] \downarrow \ch{}^\complement 
		= \trace_1 \downarrow (\ch{} \cap \ch{}^\complement) \cdot \rawtrace \downarrow \ch{}^\complement \cdot \trace[ppre]_0 \downarrow \ch{}^\complement
		= \rawtrace \downarrow \ch{}^\complement \cdot \trace[ppre]_0 \downarrow \ch{}^\complement
		= \trace[pre] \downarrow \ch{}^\complement$.

		Otherwise, if $\Chan(\rawtrace) \not \in \cset$,
		then $\trace[pre]_0 \downarrow \cset = \trace[pre] \downarrow \cset = \trace \downarrow \cset$ such that $\trace[pre]_0 \in \chanTraces{}{}$.
		By IH, $\trace[ppre]_0 \in \chanTraces{\cset_0}{}$ exists such that $\trace[ppre]_0 \downarrow \ch{}^\complement = \trace[pre]_0 \downarrow \ch{}^\complement$.
		Now, we define $\trace[ppre] = \rawtrace \downarrow \ch{}^\complement \cdot \trace[ppre]_0$.
		Since $\trace[ppre] \downarrow \cset_0
		= \rawtrace \downarrow (\ch{}^\complement \cap \cset_0) \cdot \trace[ppre]_0 \downarrow \cset_0
		= \rawtrace \downarrow \cset \cdot \trace[ppre]_0 \downarrow \cset_0
		= \trace[ppre]_0 \downarrow \cset_0$,
		we have $\trace[ppre] \in \chanTraces{\cset_0}{}$.
		Finally, $\trace[ppre] \downarrow \ch{}^\complement
		= \rawtrace \downarrow (\ch{}^\complement \cap \ch{}^\complement) \cdot \trace[ppre]_0 \downarrow \ch{}^\complement
		= \rawtrace \downarrow \ch{}^\complement \cdot \trace[ppre]_0 \downarrow \ch{}^\complement
		= \rawtrace \downarrow \ch{}^\complement \cdot \trace[pre]_0 \downarrow \ch{}^\complement
		= \trace[pre] \downarrow \ch{}^\complement$.
	\end{enumerate}
	\qedhere
\end{proof}

\begin{proof}[of \rref{lem:termCoincidence}]
	The proof generalizes the coincidence property proofs of $\dL$ \cite[Lemma 10]{DBLP:journals/jar/Platzer17} to communication-aware coincidence.
	Since $\sem{\expr}{\lstate[opt]{v}} = \sem{\expr}{\lstate[alt]{v}}$ if $\inter = \inter[alt]$ on $\sigof{\expr}$ by an induction on the structure of $\expr$, 
	it suffices to prove that $\sem{\expr}{\lstate{v}} = \sem{\expr}{\lstate[opt]{v}}$ for all $\inter$.
	Let $S_{\varset, \cset}$ be a set of states between $\pstate{v}$ and $\pstate[alt]{v}$ according to variables $\varset \subseteq \V$ and channels $\cset \subseteq \Chan$ as follows:
	\begin{equation*}
		S_{\varset, \cset} = \{ \pstate[pre]{v} \mid 
			\pstate[pre]{v} \downarrow \cset = \pstate{v} \downarrow \cset \text{ on } \varset \text{ and }
			\pstate[pre]{v} \downarrow \cset^\complement = \pstate[alt]{v} \downarrow \cset^\complement \text{ on } \varset^\complement \}
	\end{equation*}

	Fix an interpretation $\inter$
	and prove $\sem{\expr}{\lstate[pre]{v}} = \sem{\expr}{\lstate[opt]{v}}$
	for all $\varset \subseteq \SFV(\expr)^\complement$, and $\cset \subseteq \SCN(\expr)^\complement$, and $\pstate[pre]{v} \in S_{\varset, \cset}$.
	Therefore, we increase the sets $\varset$ and $\cset$ starting from $\emptyset$ for both, 
	where $\pstate[pre]{v}$ may differ from $\pstate[alt]{v}$, 
	by lexicographic induction on $\varset$ and~$\cset$ till we reach $\varset = \SFV(\expr)^\complement$ and $\cset = \SCN(\expr)^\complement$.
	This suffices for $\sem{\expr}{\lstate{v}} = \sem{\expr}{\lstate[opt]{v}}$ because $\pstate{v} \in S_{\SFV(\expr)^\complement, \SCN(\expr)^\complement}$ by the premise that $\pstate{v} \downarrow \SCN(\expr) = \pstate[alt]{v} \downarrow \SCN(\expr)$ on $\SFV(\expr)$.

	\begin{enumerate}
		\item $\varset = \emptyset$ and $\cset = \emptyset$, then $S_{\varset, \cset} = \{ \pstate[alt]{v} \}$ such that $\sem{\expr}{\lstate[pre]{v}} = \sem{\expr}{\lstate[opt]{v}}$ for all $\pstate[pre]{v} \in S_{\varset, \cset}$ holds trivially.
		
		\item $\varset = \varset_0 \cup \{ \arbitraryVar \}$ with $\arbitraryVar \not \in \varset_0$ and $\arbitraryVar \not \in \SFV(\expr)$,
		then let $\pstate[pre]{v} \in S_{\varset, \cset}$.
		We define $\pstate[ppre]{v} = \pstate[pre]{v} {}\subs{\arbitraryVar}{\pstate[alt]{v}(\arbitraryVar)}$.
		By $\pstate[pre]{v} \in S_{\varset, \cset}$,
		we obtain $\pstate[ppre]{v} \downarrow \cset = \pstate{v} \downarrow \cset$ on $\varset_0$ since $\pstate[pre]{v} \downarrow \cset = \pstate{v} \downarrow \cset$ on $\varset$,
		and $\pstate[ppre]{v} \downarrow \cset^\complement = \pstate[alt]{v} \downarrow \cset^\complement$ on $\varset^\complement$ since $\pstate[pre]{v} \downarrow \cset^\complement = \pstate[alt]{v} \downarrow \cset^\complement$ on $\varset^\complement$.
		Moreover, $\pstate[ppre]{v}(\arbitraryVar) = \pstate[alt]{v}(\arbitraryVar)$ if $\arbitraryVar \in \RVar \cup \NVar$,
		and $\pstate[ppre]{v}(\arbitraryVar) \downarrow \cset^\complement = \pstate[alt]{v}(\historyVar) \downarrow \cset^\complement$ if $\arbitraryVar \in \TVar$, respectively, by definition of $\pstate[ppre]{v}$.
		Therefore, $\pstate[ppre]{v} \downarrow \cset^\complement = \pstate[alt]{v} \downarrow \cset^\complement$ on $\varset^\complement \cup \{ \arbitraryVar \} = \varset_0^\complement$ such that $\pstate[ppre]{v} \in S_{\varset_0, \cset}$.
		
		By definition of $\SFV(\expr)$, 
		we obtain $\sem{\expr}{\lstate[pre]{v}} = \sem{\expr}{\lstate[ppre]{v}}$ because $\pstate[ppre]{v} = \pstate[pre]{v}$ on $\{ \arbitraryVar \}^\complement$ but $\arbitraryVar \not \in \SFV(\expr)$.
		Finally, $\sem{\expr}{\lstate[pre]{v}} = \sem{\expr}{\lstate[ppre]{v}} \IH{=} \sem{\expr}{\lstate[opt]{v}}$ by IH using $\pstate[ppre]{v} \in S_{\varset_0, \cset}$.

		\item $\varset = \varset_0$ and $\cset = \cset_0 \cup \{ \ch{} \}$ with $\ch{} \not \in \cset_0$ and $\ch{} \not \in \SCN(\expr)$,
		then let $\pstate[pre]{v} \in S_{\varset, \cset}$.
		Consider $\arbitraryVar \in \varset^\complement \cap \TVar$.
		Then $\pstate[pre]{v}(\arbitraryVar) \downarrow \cset^\complement = \pstate[alt]{v}(\arbitraryVar) \downarrow \cset^\complement$.
		Moreover, $\cset_0^\complement = \cset^\complement \cup \{ \ch{} \}$.
		Therefore, by \rref{lem:chan_difference},
		$\trace[ppre]_\arbitraryVar$ with $\trace[ppre]_\arbitraryVar \downarrow \cset_0^\complement = \pstate[alt]{v}(\arbitraryVar) \downarrow \cset_0^\complement$ exists such that $\trace[ppre]_\arbitraryVar \downarrow \{ \ch{} \}^\complement = \pstate[pre]{v}(\arbitraryVar) \downarrow \{ \ch{} \}^\complement$.

		Using one $\trace[ppre]_\arbitraryVar$ for each $\arbitraryVar \in \varset^\complement \cap \TVar$,
		we define a state $\pstate[ppre]{v}$ as follows:
		\begin{equation*}
			\pstate[ppre]{v}(\arbitraryVar) = \begin{cases}
				\pstate[pre]{v}(\arbitraryVar) & \text{for } \arbitraryVar \in \RVar \cup \NVar \\
				\pstate[pre]{v}(\arbitraryVar) & \text{for } \arbitraryVar \in \varset \cap \TVar \\
				\trace[ppre]_\arbitraryVar & \text{for } \arbitraryVar \in \varset^\complement \cap \TVar
			\end{cases}
		\end{equation*}

		For $\arbitraryVar \in \varset \cap (\RealNatVar)$,
		we have $\pstate[ppre]{v}(\arbitraryVar) = \pstate[pre]{v}(\arbitraryVar) = \pstate{v}(\arbitraryVar)$.
		Moreover, for $\arbitraryVar \in \varset^\complement \cap (\RealNatVar)$,
		we have $\pstate[ppre]{v}(\arbitraryVar) = \pstate[pre]{v}(\arbitraryVar) = \pstate[alt]{v}(\arbitraryVar)$.
		Further, for $\arbitraryVar \in \varset \cap \TVar$,
		we have $\pstate[ppre]{v}(\arbitraryVar) \downarrow \cset_0 = \pstate[pre]{v}(\arbitraryVar) \downarrow \cset_0 = \pstate{v}(\arbitraryVar) \downarrow \cset_0$ because $\pstate[pre]{v}(\arbitraryVar) \downarrow \cset = \pstate{v}(\arbitraryVar) \downarrow \cset$ as $\pstate[pre]{v} \in S_{\varset, \cset}$.
		Finally, for $\arbitraryVar \in \varset^\complement \cap \TVar$,
		we have $\pstate[ppre]{v}(\arbitraryVar) \downarrow \cset_0^\complement
		= \trace[ppre]_\arbitraryVar \downarrow \cset_0^\complement
		= \pstate[alt]{v}(\arbitraryVar) \downarrow \cset_0^\complement$ due to \rref{lem:chan_difference}.
		Therefore, $\pstate[ppre]{v} \in S_{\varset, \cset_0}$ such that $\sem{\expr}{\lstate[ppre]{v}} = \sem{\expr}{\lstate[opt]{v}}$ by IH.

		Observe that $\pstate[ppre]{v} = \pstate[pre]{v}$ on $\RVar \cup \NVar$.
		Moreover, for $\arbitraryVar \in \varset \cap \TVar$,
		we have $\pstate[ppre]{v}(\arbitraryVar) \downarrow \{ \ch{} \}^\complement 
		= \pstate[pre]{v}(\arbitraryVar) \downarrow \{ \ch{} \}^\complement$.
		Finally, for $\arbitraryVar \in \varset^\complement \cap \TVar$,
		we have $\pstate[ppre]{v}(\arbitraryVar) \downarrow \{ \ch{} \}^\complement 
		= \trace[ppre]_\arbitraryVar \downarrow \{ \ch{} \}^\complement
		= \pstate[pre]{v}(\arbitraryVar) \downarrow \{ \ch{} \}^\complement$
		due to \rref{lem:chan_difference}.
		Overall, $\pstate[ppre]{v} \downarrow \{ \ch{} \}^\complement = \pstate[pre]{v} \downarrow \{ \ch{} \}^\complement$.
		Since $\ch{} \not \in \SCN(\expr)$,
		we obtain $\sem{\expr}{\lstate[ppre]{v}} = \sem{\expr}{\lstate[pre]{v}}$ from the definition of $\SCN(\expr)$.

		Finally, $\sem{\expr}{\lstate[pre]{v}} = \sem{\expr}{\lstate[ppre]{v}} = \sem{\expr}{\lstate[opt]{v}}$.
	\end{enumerate}

	Suppose that $\SFV(\expr)$ and $\SCN(\expr)$ are not the smallest sets with the coincidence property 
	but $\varset \subseteq \V$ and $\cset \subseteq \Chan$ with
	$\varset \not\supseteq \SFV(\expr)$ or $\cset \not\supseteq \SCN(\expr)$ have the coincidence property, too.
	Then there is $\arbitraryVar \in \SFV(\expr)$ with $\arbitraryVar \not \in \varset$ or $\ch{} \in \SCN(\expr)$ with $\ch{} \not \in \cset$.
	If $\arbitraryVar \in \SFV(\expr)$ and $\arbitraryVar \not\in \varset$,
	then by definition of $\SFV(\expr)$,
	states $\pstate{v}, \pstate[alt]{v}$ with $\pstate{v} = \pstate[alt]{v}$ on $\{ z \}^\complement$ exist such that $\sem{\expr}{\lstate{v}} \neq \sem{\expr}{\lstate[opt]{v}}$.
	But then $(\varset, \cset)$ does not have the coincidence property because $\pstate{v} \downarrow \cset = \pstate[alt]{v} \downarrow \cset$ on $\varset$ but $\sem{\expr}{\lstate{v}} \neq \sem{\expr}{\lstate[opt]{v}}$.
	If $\ch{} \in \SCN(\expr)$ and $\ch{} \not\in \cset$,
	then by definition of $\SCN(\expr)$,
	states $\pstate{v}, \pstate[alt]{v}$ with $\pstate{v} \downarrow \{\ch{}\}^\complement = \pstate[alt]{v} \downarrow \{\ch{}\}^\complement$ exist such that $\sem{\expr}{\lstate{v}} = \sem{\expr}{\lstate[opt]{v}}$.
	But then $(\varset, \cset)$ does not have the coincidence property because $\pstate{v} \downarrow \cset = \pstate[alt]{v} \downarrow \cset$ but $\sem{\expr}{\lstate{v}} \neq \sem{\expr}{\lstate[opt]{v}}$.
	\qedhere
\end{proof}

\begin{proof}[of \rref{lem:programCoincidence}]
	\newcommand{\stateS}[1]{\pstate[pre]{v}_{#1}}
	The proof generalizes the coincidence property proofs of $\dL$ \cite[Lemma 12]{DBLP:journals/jar/Platzer17} to a coincidence property for CHPs.
	Let $\stateS{\varset[alt]}$ be the state between $\pstate{v}$ and $\pstate[alt]{v}$ according to the variables~$S$,
	\iest $\stateS{\varset[alt]} = \pstate[alt]{v}$ on~$S$ and $\stateS{\varset[alt]} = \pstate{v}$ on~$S^\complement$.
	Then we prove by induction on $S \subseteq \SFV(\alpha)^\complement$ that for all $\stateS{\varset[alt]}$ a computation $(\stateS{\varset[alt]}, \trace[pre], \pstate[pre]{w}) \in \sem{\alpha}{\inter}$ exists such that $\pstate[pre]{w} = \pstate{w}$ on $S^\complement$, and $\trace[pre] = \trace$, and ($\pstate[pre]{w} = \bot$ iff $\pstate{w} = \bot$).
	This suffices to prove the lemma because first, 
	$\varset^\complement \subseteq \SFV(\alpha)^\complement$ such that $(\pstate[pre]{v}_{\varset^\complement}, \trace[pre], \pstate[pre]{w}) \in \sem{\alpha}{\inter}$,
	where $\pstate[pre]{v}_{\varset^\complement} = \pstate[alt]{v}$ on $\varset^\complement$ and $\pstate[pre]{v}_{\varset^\complement} = \pstate{v} = \pstate[alt]{v}$ on $(\varset^\complement)^\complement = \varset$, and $\trace[pre] = \trace$, and $\pstate[pre]{w} = \pstate{w}$ on $(\varset^\complement)^\complement = \varset$.
	Second, $\sem{\alpha}{\inter} = \sem{\alpha}{\inter[alt]}$ by an induction on $\alpha$.
	Now, we proceed with the induction on $\varset[alt]$:

	\begin{enumerate}
		\item $\varset[alt] = \emptyset$, then $\stateS{\varset[alt]} = \pstate{v}$.
		If we define $\trace[pre] = \trace$ and $\pstate[pre]{w} = \pstate{w}$,
		then $(\stateS{\varset[alt]}, \trace[pre], \pstate[pre]{w}) \in \sem{\alpha}{\inter}$ and fulfills the conditions.

		\item $\varset[alt] = \varset[alt]_0 \cup \{ \arbitraryVar \}$ with $\arbitraryVar \not\in \varset[alt]_0$ and $\arbitraryVar \not\in \SFV(\alpha)$,
		then let $\stateS{\varset[alt]}$ be between $\pstate{v}$ and $\pstate[alt]{v}$ according to $S$.
		Moreover, let $\pstate[ppre]{v} = (\stateS{\varset[alt]}) \subs{\arbitraryVar}{\pstate{v}(\arbitraryVar)}$.
		Since $\pstate[ppre]{v} = \pstate[alt]{v}$ on $\varset[alt]_0$ and $\pstate[ppre]{v} = \pstate{v}$ on $\varset[alt]_0^\complement$,
		we have $\pstate[ppre]{v} = \stateS{\varset[alt]_0}$
		such that by IH, $(\pstate[ppre]{v}, \trace[ppre], \pstate[ppre]{w})$ exists with $\pstate[ppre]{w} = \pstate{w}$ on $\varset[alt]_0^\complement$, and $\trace[ppre] = \trace$, and ($\pstate[ppre]{w} = \bot$ iff $\pstate{w} = \bot$).
		Since $\pstate[ppre]{v} = \stateS{\varset[alt]}$ on $\{\arbitraryVar\}^\complement$ but $\arbitraryVar \not\in \SFV(\alpha)$,
		there is $(\stateS{\varset[alt]}, \trace[pre], \pstate[pre]{w}) \in \sem{\alpha}{\inter}$ by the definition of $\SFV(\alpha)$ such that $\trace[pre] = \trace[ppre]$ and $\pstate[pre]{w} = \pstate[ppre]{w}$ on $\{\arbitraryVar\}^\complement$,
		which includes that ($\pstate[pre]{w} = \bot$ iff $\pstate[ppre]{w} = \bot$).
		Thus, $\pstate[pre]{w} = \pstate[ppre]{w} = \pstate{w}$ on $\{\arbitraryVar\}^\complement \cap \varset[alt]_0^\complement = S^\complement$.
		Moreover, $\trace[pre] = \trace[ppre] = \trace$ and ($\pstate[pre]{w} = \bot$ iff $\pstate{w} = \bot$).
	\end{enumerate}

	Suppose that $\SFV(\alpha)$ is not the smallest set with the coincidence property but $\varset \not\supseteq \SFV(\alpha)$ has the property, too.
	Then there is $\arbitraryVar \in \SFV(\alpha)$ with $\arbitraryVar \not\in \varset$.
	By definition of $\SFV(\alpha)$, interpretation $\inter$, and $\computation \in \sem{\alpha}{\inter}$, and $\pstate[alt]{v}$ exist such that $\pstate{v} = \pstate[alt]{v}$ on $\{\arbitraryVar\}^\complement$ and $\computation \in \sem{\alpha}{\inter}$ but there are no $\trace[alt], \pstate[alt]{w}$ such that $\computation[alt] \in \sem{\alpha}{\inter}$, and $\pstate[alt]{w} = \pstate{w}$ on $\{\arbitraryVar\}^\complement$, and $\trace[alt] = \trace$, and ($\pstate[alt]{w} = \bot$ iff $\pstate{w} = \bot$).
	But then $\varset$ does not have the coincidence property
	because $\pstate{v} = \pstate[alt]{v}$ on $\{\arbitraryVar\}^\complement \supseteq \varset$ but no $\trace[alt]$ and $\pstate[alt]{w}$ exist such that $\computation[alt] \in \sem{\alpha}{\inter}$, and $\pstate[alt]{w} = \pstate{w}$ on $\varset$, and $\trace[alt] = \trace$, and ($\pstate[alt]{w} = \bot$ iff $\pstate{w} = \bot$).
	\qedhere
\end{proof}

The static semantics of \rref{def:staticSemantics} is not computable \cite{Rice1953}.
\rref{def:boundVariables}--\ref{def:freeFormulaParameters} adapt sound overapproximations of the static semantics computed from the syntactical structure \citeDLCHP to \dLCHP.
The definitions add the cases for function and predicate symbols, and program constants, which were only introduced in this paper.

Crucially, the bound effect property and the coincidence lemmas apply for overapproximations of the static semantics as well.
Thus, the overapproximations can be soundly used in an implementation of uniform substitution.

\begin{definition}[Bound variables] \label{def:boundVariables}
	The set of (syntactically) \emph{bound variables} $\BV(\alpha)$ of a program $\alpha$ is inductively defined 
	as follows,
	where $\{\varvec\} = \cup_{\arbitraryVar\in\varvec} \{\arbitraryVar\}$:
	\begin{small}
		\begin{align*}
			\BV(\rpconst{}{}) & = \{ \varvec \} \\
			\BV(x \ceq \expr) = \BV(x \ceq *) & = \{ x \} \\
			\BV(\evolution{}{}) & = \odeBoundVars \\
			\BV(\test{}) & = \emptyset \\
			\BV(\send{}{}{}) & = \{ \historyVar \} \\
			\BV(\receive{}{}{}) & = \{ \historyVar, x \} \\
			\BV(\alpha \cup \beta) = \BV(\alpha \seq \beta) = \BV(\alpha \parOp \beta) & = \BV(\alpha) \cup \BV(\beta) \\
			\BV(\repetition{\alpha}) & = \BV(\alpha)
		\end{align*}%
	\end{small}
\end{definition}

\begin{definition}[Written channels] \label{def:writtenChannels}
	The set of (syntactically) \emph{written channels} $\CN(\alpha)$ of a program $\alpha$ is inductively defined 
	as follows:
	\begin{small}
		\begin{align*}
			\CN(\rpconst{}{}) & = \cset \\
			\CN(x \ceq \expr) = \CN(x \ceq *) = \CN(\evolution{}{}) = \CN(\test{}) & = \emptyset \\
			\CN(\send{}{}{}) = \CN(\receive{}{}{}) & = \{ \ch{} \} \\
			\CN(\alpha \cup \beta) = \CN(\alpha \seq \beta) = \CN(\alpha \parOp \beta) & = \CN(\alpha) \cup \CN(\beta) \\
			\CN(\repetition{\alpha}) & = \CN(\alpha)
		\end{align*}%
	\end{small}
\end{definition}

\begin{definition}[Parameters of terms]
	\label{def:freeTermParameters}
	The sets of (syntactically) free variables $\FV(\expr)$ and (syntactically) accessed channels $\CN(\expr)$ of a term $\expr$ are inductively defined below,
	where~$\fsymb[builtin]$ is any built-in function symbol of fixed interpretation (see \rref{def:syntax_terms}), \eg~$\usarg + \usarg$, except for projection $\usarg \downarrow \usarg$.
	Moreover, let $\FV(\exprvec) = \cup_{\expr\in\exprvec} \FV(\expr)$ and $\CN(\exprvec) = \cup_{\expr\in\exprvec} \CN(\expr)$.
	\par\begin{small}
		\noindent\begin{minipage}{.5\textwidth}
			\begin{align*}
				\FV(\fsymb(\cset, \exprvec)) & = \FV(\exprvec) \\
				\FV(\arbitraryVar) & = \{ \arbitraryVar \} & \text{for } \arbitraryVar\in\V \\
				\\
				\FV(\fsymb[builtin](\exprvec)) & = \FV(\exprvec) \\
				\FV(\te \downarrow \cset) & = \FV(\te)
			\end{align*}
		\end{minipage}
		\begin{minipage}{.5\textwidth}
			\begin{align*}
				\CN(\fsymb(\cset, \exprvec)) & = \cset \cap \CN(\exprvec) \\
				\CN(\historyVar) & = \Chan & \text{for } \historyVar\in\TVar \\
				\CN(\arbitraryVar) & = \emptyset & \text{for } \arbitraryVar\not\in\TVar \\
				\CN(\fsymb[builtin](\exprvec)) & = \CN(\exprvec) \\
				\CN(\te \downarrow \cset) & = \cset \cap \CN(\te)
			\end{align*}
		\end{minipage}
	\end{small}
\end{definition}

The must-bound variables $\MBV(\alpha)$ (\rref{def:mustBoundVariables}) are those variables that are bound on all execution paths of a program $\alpha$.
In contrast to $\BV(\alpha)$, they can be soundly used \cite{DBLP:journals/jar/Platzer17} in the cases for $\FV(\alpha \seq \beta)$ in \rref{def:freeProgramParameters} and $\FV([ \alpha ] \psi)$ in~\rref{def:freeFormulaParameters}.

\begin{definition}[Must-bound variables] \label{def:mustBoundVariables}
	The set of \emph{must-bound variables} $\MBV(\alpha)$ of a program $\alpha$ is inductively defined as follows:
	\begin{small}
		\begin{align*}
			\MBV(\rpconst{}{}) & = \emptyset \\
			\MBV(\alpha) & = \BV(\alpha) \quad\text{for atomic CHPs $\alpha$ except for program constants} \\
			\MBV(\alpha \cup \beta) & = \MBV(\alpha) \cap \MBV(\beta) \\
			\MBV(\alpha \seq \beta) = \MBV(\alpha \parOp \beta) & = \MBV(\alpha) \cup \MBV(\beta) \\
			\MBV(\repetition{\alpha}) & = \emptyset
		\end{align*}
	\end{small}
\end{definition}

\begin{definition}[Free variables of programs] \label{def:freeProgramParameters}
	The set of (syntactically) free variables $\FV(\alpha)$ of a program $\alpha$ is inductively defined 
	as follows:
	\begin{small}
		\begin{align*}
			\FV(\rpconst{}{}) & = \RVar \cup \TVar \\
			\FV(x \ceq \rp) & = \FV(\rp) \\
			\FV(x \ceq *) & = \emptyset \\
			\FV(\test{}) & = \FV(\chi) \\
			\FV(\evolution{}{}) & = \odeFreeVars \cup \FV(\rp) \cup \FV(\chi) \\
			\FV(\send{}{}{}) & = \comFreeVars \cup \FV(\rp) \\
			\FV(\receive{}{}{}) & = \comFreeVars \\
			\FV(\alpha \seq \beta) & = \FV(\alpha) \cup (\FV(\beta) \setminus \MBV(\alpha)) \\
			\FV(\alpha \cup \beta) = \FV(\alpha \parOp \beta) & = \FV(\alpha) \cup \FV(\beta) \\
			\FV(\repetition{\alpha}) & = \FV(\alpha)
		\end{align*}
	\end{small}
\end{definition}

\begin{definition}[Parameters of formulas] \label{def:freeFormulaParameters}
	The sets of (syntactically) free variables $\FV(\phi)$ and (syntactically) accessed channels $\CN(\phi)$ of a formula $\phi$ are inductively defined as follows,
	where $\FV(\exprvec) = \cup_{\expr\in\exprvec} \FV(\expr)$ and $\CN(\exprvec) = \cup_{\expr\in\exprvec} \CN(\expr)$:

	\renewcommand{\rightleftharpoons}{\wedge}
	\renewcommand{\quantor}[1]{\fa{#1}}

	\begin{small}
		\noindent\begin{minipage}{.5\textwidth}
			\begin{align*}
				\FV(\psymb(\cset, \exprvec)) & = \FV(\exprvec) \\
				\FV(\expr_1 \sim \expr_2) & = \FV(\expr_1) \cup \FV(\expr_2) \\
				\FV(\neg \varphi) & = \FV(\varphi) \\
				\FV(\varphi \rightleftharpoons \psi) & = \FV(\varphi) \cup \FV(\psi) \\
				\FV(\quantor{\arbitraryVar} \varphi) & = \FV(\varphi) \setminus \{ \arbitraryVar \} \\
				\FV([ \alpha ] \psi) & = \FV(\alpha) \cup (\FV(\psi) \setminus \MBV(\alpha)) \\
				\FV([ \alpha ] \ac \psi) & = \FV([ \alpha ] \psi) \cup \FV(\A) \cup \FV(\C)
			\end{align*}
		\end{minipage}
		\begin{minipage}{.5\textwidth}
			\begin{align*}
				\CN(\psymb(\cset, \exprvec)) & = \cset \cap \CN(\exprvec) \\
				\CN(\expr_1 \sim \expr_2) & = \CN(\expr_1) \cup \CN(\expr_2) \\
				\CN(\neg \varphi) & = \CN(\varphi) \\
				\CN(\varphi \rightleftharpoons \psi) & = \CN(\varphi) \cup \CN(\psi) \\
				\CN(\quantor{\arbitraryVar} \varphi) & = \CN(\varphi) \\
				\CN( [ \alpha ] \psi) & = \CN(\psi) \\
				\CN([ \alpha ] \ac \psi) & = \CN(\A) \cup \CN(\C) \cup \CN(\psi)
			\end{align*}
		\end{minipage}
	\end{small}
\end{definition}

\section{Soundness of Uniform Substitution}
\label{app:soundness}

This appendix reports the soundness proof of uniform substitution for \dLCHP (\rref{thm:soundness}) and a proof that uniform substitution preserves the syntactic well-formedness of formulas (\rref{prop:unisubs_well_formed}).
Moreover, \rref{thm:usrulesound} given in this section enables the instantiation of axiomatic proof rules by uniform substitution.

\newcommand{\tabooWithCtx}{\jointTaboo_0}
\newcommand{\SBP}[1]{\mathsf{B\kern-1.5pt P}(#1)}

\begin{proof}[of \rref{lem:unisubs_correct_bound}]
	The proof is by induction on the structure of program $\alpha$ and generalizes the corresponding proof for \dGL \cite[Lemma 13]{DBLP:conf/cade/Platzer19},
	where $\tabooWithCtx$ is short for $\jointTaboo\cup\parallelCtx$ and $\SBP{\cdot} = \SBV(\cdot) \cup \SCN(\cdot)$ denotes all bound parameters:
	\begin{enumerate}
		\item $\alpha \equiv \rpconst{}{}$, then $\jointOut = \jointTaboo \cup \SBP{\usubs \pconst} = \jointTaboo \cup \SBP{\usInOutOp{}{}{\rpconst{}{}}}$. 
		
		\item $\alpha \equiv x \ceq \rp$, then $\jointOut = \jointTaboo \cup \{ x \}$ and $\{ x \} \supseteq \SBV(x \ceq \usTabooOp{\tabooWithCtx}{\rp}) = \SBV(\usInOutOp{}{}{\alpha})$.
		Moreover, $\emptyset = \SCN(\usInOutOp{}{}{\alpha})$.
		Hence, $\jointOut \supseteq \jointTaboo \cup \SBP{\usInOutOp{}{}{\alpha}}$.
		
		\item $\alpha \equiv x \ceq *$, then $\jointOut = \jointTaboo \cup \{ x \}$ and $\{ x \} = \SBV(\alpha) = \SBV(\usInOutOp{}{}{\alpha})$. 
		Moreover, $\emptyset = \SCN(\usInOutOp{}{}{\alpha})$.
		Hence, $\jointOut \supseteq \jointTaboo \cup \SBP{\usInOutOp{}{}{\alpha}}$.
		
		\item $\alpha \equiv \evolution{}{}$, then $\jointOut = \jointTaboo \cup \odeBoundVars$.
		Moreover, $\odeBoundVars 
		\supseteq \SBV(\evolution{x' = \usTabooOp{\tabooWithCtx}{\rp}}{\usTabooOp{\tabooWithCtx}{\chi}}) 
		= \SBV(\usInOutOp{}{}{\alpha})$ 
		and $\emptyset = \SCN(\usInOutOp{}{}{\alpha})$.
		Thus, $\jointOut \supseteq \jointTaboo \cup \SBP{\usInOutOp{}{}{\alpha}}$.

		\item $\alpha \equiv \test{}$, then $\jointOut = \jointTaboo$ and $\emptyset = \SBV(\test{\usTabooOp{\tabooWithCtx}{\chi}}) = \SBV(\usInOutOp{}{}{\test{}})$.
		Moreover, $\emptyset = \SCN(\usInOutOp{}{}{\test{}})$.
		Thus, $\jointOut \supseteq \jointTaboo \cup \SBP{\usInOutOp{}{}{\test{}}}$.

		\item $\alpha \equiv \beta \seq \gamma$, then $\usInOutOp{}{}{\beta \seq \gamma} \equiv \usInOutOp{}{\jointOut_0}{\beta} \seq \usInOutOp{\jointOut_0, \parallelCtx}{}{\gamma}$.
		By IH, $\jointOut_0 \IH{\supseteq} \jointTaboo \cup \SBP{\usInOutOp{}{\jointOut_0}{\beta}}$ 
		and $\jointOut \IH{\supseteq} \jointOut_0 \cup \SBP{\usInOutOp{\jointOut_0,\parallelCtx}{}{\gamma}}$.
		Thus, $\jointOut \supseteq \jointOut_0 \cup \SBP{\usInOutOp{\jointOut_0,\parallelCtx}{}{\gamma}}
		\supseteq \jointTaboo \cup \SBP{\usInOutOp{}{\jointOut_0}{\beta}} \cup \SBP{\usInOutOp{\jointOut_0,\parallelCtx}{}{\gamma}}
		\supseteq \jointTaboo \cup \SBP{\usInOutOp{}{\jointOut_0}{\beta} \seq \usInOutOp{\jointOut_0,\parallelCtx}{}{\gamma}}		
		\supseteq \jointTaboo \cup \SBP{\usInOutOp{}{}{\beta\seq\gamma}}$.

		\item $\alpha \equiv \beta \cup \gamma$, then $\usInOutOp{}{}{\beta \cup \gamma} \equiv \usInOutOp{}{\jointOut_1}{\beta} \cup \usInOutOp{}{\jointOut_2}{\gamma}$ with $\jointOut = \jointOut_1 \cup \jointOut_2$.
		Using IH, $\jointOut 
		= \jointOut_1 \cup \jointOut_2 
		\IH{\supseteq} \jointTaboo \cup \SBP{\usInOutOp{}{\jointOut_1}{\beta}} \cup \SBP{\usInOutOp{}{\jointOut_2}{\gamma}}$,
		which equals $\jointTaboo \cup \SBP{\usInOutOp{}{\jointOut_1}{\beta} \cup \usInOutOp{}{\jointOut_2}{\gamma}}
		= \jointTaboo \cup \SBP{\usInOutOp{}{}{\beta \cup \gamma}}$.

		\item $\alpha \equiv \repetition{\beta}$, then by IH, $\jointOut \supseteq \jointTaboo \cup \SBP{\usInOutOp{}{}{\beta}}$,
		\iest the input taboo $\jointTaboo$ is retained for $\beta$.
		Moreover, by IH, $\jointOut \supseteq \SBP{\usInOutOp{\jointOut, \parallelCtx}{}{\beta}}$ the taboo set $\jointOut$ after one iteration is retained for $\beta$.
		Since $\SBV(\beta) \supseteq \SBV(\repetition{\beta})$ and $\SCN(\beta) \supseteq \SCN(\repetition{\beta})$,
		we obtain $\jointOut \supseteq \jointTaboo \cup \SBP{\usInOutOp{\jointOut, \parallelCtx}{}{\beta}} = \jointTaboo \cup \SBP{\repetition{(\usInOutOp{\jointOut,\parallelCtx}{}{\beta})}} = \jointOut \cup \SBP{\usInOutOp{}{}{\repetition{\beta}}}$.

		\item $\alpha \equiv \send{}{}{}$, then $\jointOut = \jointTaboo \cup \{ \ch{}, \historyVar \}$ 
		and $\{ \ch{} \} = \SCN(\send{}{}{\usTabooOp{\tabooWithCtx}{\rp}})$,
		which equals
		$\SCN(\usInOutOp{}{}{\send{}{}{}})$
		and $\{ \historyVar \} = \SBV(\send{}{}{\usTabooOp{\tabooWithCtx}{\rp}}) = \SBV(\usInOutOp{}{}{\send{}{}{}})$.
		Thus, $\jointOut \supseteq \jointTaboo \cup \SBP{\usInOutOp{}{}{\send{}{}{}}}$.

		\item $\alpha \equiv \receive{}{}{}$, then $\jointOut = \jointTaboo \cup \{ \ch{}, \historyVar, x \}$.
		Now, observe $\{ \historyVar, x \} \supseteq \SBV(\receive{}{}{}) = \SBV(\usInOutOp{}{}{\receive{}{}{}})$.
		Moreover, $\{ \ch{} \} = \SCN(\receive{}{}{}) = \SCN(\usInOutOp{}{}{\receive{}{}{}})$.
		Thus, $\jointOut \supseteq \jointTaboo \cup \SBP{\usInOutOp{}{}{\receive{}{}{}}}$.

		\item $\alpha \equiv \beta \parOp \gamma$, then
		$\usInOutOp{}{}{\beta \parOp \gamma} \equiv 
			\usInOutOp{\jointTaboo, \parCtxOp{\gamma}}{\jointOut_1}{\beta} \parOp 
			\usInOutOp{\jointTaboo, \parCtxOp{\beta}}{\jointOut_1}{\gamma}$
		with $\jointOut = \jointOut_1 \cup \jointOut_2$.
		Using IH, $\jointOut 
		= \jointOut_1 \cup \jointOut_2 
		\IH{\supseteq} \jointTaboo \cup \SBP{\usInOutOp{\jointTaboo, \parCtxOp{\gamma}}{\jointOut_1}{\beta}} \cup \SBP{\usInOutOp{\jointTaboo, \parCtxOp{\beta}}{\jointOut_1}{\gamma}}
		\supseteq \jointTaboo \cup \SBP{\usInOutOp{\jointTaboo, \parCtxOp{\gamma}}{}{\beta} \parOp \usInOutOp{\jointTaboo, \parCtxOp{\beta}}{}{\gamma}}
		= \jointTaboo \cup \SBP{\usInOutOp{}{}{\beta \parOp \gamma}}$.
		\qedhere
	\end{enumerate}
\end{proof}

\begin{proof}[of \rref{lem:unisubs_term}]
	The proof generalizes the substitution lemma proof for \dGL \cite[Lemma 15]{DBLP:conf/cade/Platzer19} to multi-sorted terms and taboos with channels.
	The proof is by induction along the lexicographical order $\lexorder$ of substitution-term tuples $(\usubs, \expr)$ defined by $(\usubs', \expr') \lexorder (\usubs, \expr)$ if $\usubs' \lexorder \usubs$ or ($\usubs' = \usubs$ but $\expr' \lexorder \expr)$, where $\lexorder$ on substitutions and terms, respectively, denotes the structural order,
	simultaneously for all $\jointTaboo$, $\pstate{v}$, and $\varioOrigin$.
	In the following, let $\pstate{v}$ be any $\jointTaboo$-variation of $\varioOrigin$,
	\iest $\pstate{v} \downarrow \cset = \varioOrigin \downarrow \cset$ on $\jointTaboo^\complement \cap \V$ with $\cset = \jointTaboo^\complement \cap \Chan$:

	\begin{enumerate}
		\item For $\arbitraryVar \in \V$, simply $\sem{\usTabooOp{}{\arbitraryVar}}{\lstate{v}} = \sem{\arbitraryVar}{\lstate{v}} = \pstate{v}(\arbitraryVar) = \sem{\arbitraryVar}{\lstateadj{v}}$.
		
		\item In case $\fsymb(\cset, \expr)$, let $d = \sem{\usTabooOp{}{\expr \downarrow \cset}}{\lstate{v}}$. 
		Then by IH, $\sem{\usTabooOp{}{\fsymb(\expr \downarrow \cset)}}{\lstate{v}} = \sem{\usAuxOp{\expr}{\usubs \fsymb(\usarg)}}{\lstate{v}} \IH{=} \sem{\usubs \fsymb(\usarg)}{\inter \subs{\usarg}{d} \pstate{v}}$ because $\usarg$ has arity $0$ and $\fsymb$ has arity $1$ such that $\usAux{\expr} \lexorder \usubs$.
		Since $\usTabooOp{}{\fsymb(\expr \downarrow \cset)}$ is defined, $\SFV(\usubs \fsymb(\usarg)) \cap \jointTaboo = \emptyset$ and $\SCN(\usubs \fsymb(\usarg)) \cap \jointTaboo = \emptyset$.
		By premise, $\pstate{v}$ is a $\jointTaboo$-variation of~$\varioOrigin$, 
		\iest $\pstate{v} \downarrow (\jointTaboo^\complement \cap \Chan) = \varioOrigin \downarrow (\jointTaboo^\complement \cap \Chan)$ on $\jointTaboo^\complement \cap \V$.
		Thus, $\pstate{v} \downarrow \SCN(\usubs \fsymb(\usarg)) = \varioOrigin \downarrow \SCN(\usubs \fsymb(\usarg))$ on $\SFV(\usubs \fsymb(\usarg))$ such that $\sem{\usubs \fsymb(\usarg)}{\inter \subs{\usarg}{d} \pstate{v}} = \sem{\usubs \fsymb(\usarg)}{\inter \subs{\usarg}{d} \varioOrigin}$ by coincidence (see \rref{lem:termCoincidence}).
		Finally, by \rref{def:adjoint}, $\sem{\usubs \fsymb(\usarg)}{\inter \subs{\usarg}{d} \varioOrigin} = (\interadj (\fsymb)) (d)$,
		which equals by IH, $(\interadj (\fsymb)) (\sem{\expr \downarrow \cset}{\lstateadj{v}}) = \sem{\fsymb(\expr \downarrow \cset)}{\lstateadj{v}}$ because $\expr \downarrow \cset \lexorder \fsymb (\expr \downarrow \cset)$.
		
		\item Let $\fsymb[builtin](\expr_1, \ldots, \expr_k)$ be a concrete function.
		For $1 \le i \le k$, by IH, $\sem{\usTabooOp{}{\expr_i}}{\lstate{v}} = \sem{\expr_i}{\lstateadj{v}}$. 
		Finally, $\sem{\usTabooOp{}{\fsymb[builtin](\expr_1, \ldots, \expr_k)}}{\lstate{v}} 
		= \sem{\fsymb[builtin](\usTabooOp{}{\expr_1}, \ldots, \usTabooOp{}{\expr_k})}{\lstate{v}}
		= \sem{\fsymb[builtin](\expr_1, \ldots, \expr_k)}{\lstateadj{v}}$ 
		because $\sem{\fsymb[builtin](\expr_1, \ldots, \expr_k)}{\lstate{v}}$ is a function of $\sem{\expr_i}{\lstate{v}}$ for $1 \le i \le k$.

		\item By IH, $\sem{\usTabooOp{\V\cup\Chan}{\rp}}{\lstate{v}} = \sem{\rp} \lstateadj{v}$ for all $\pstate{v}, \pstate{w}$ since $\pstate{v}$ is a $(\V\cup\Chan)$-variation of any state.
		Hence, $\sem{\usTabooOp{}{(\rp)'}}{\lstate{v}} = \sem{(\usTabooOp{\V\cup\Chan}{\rp})'}{\lstate{v}} 
		= \sum_{x} \pstate{v}(x') \frac{\partial \sem{\usTabooOp{\V\cup\Chan}{\rp}}{\lstate{v}}}{\partial x} 
		\IH{=} \sum_{x} \pstate{v}(x') \frac{\partial \sem{\rp}{\lstateadj{v}}}{\partial x} = \sem{(\rp)'}{\lstateadj{v}}$.
		\qedhere
	\end{enumerate}
\end{proof}

\begin{lemma} \label{lem:outputVariation}
	Let $\varioComp \in \sem{\alpha}{\interadj}$, the uniform substitution $\usInOutOp{\jointTaboo, \emptyset}{}{\alpha}$ be defined, and $\pstate{v}$ be a $\jointTaboo$-variation of $\varioOrigin$.
	Then $\stconcat{\varioFin}{\trace}$ is a $\jointOut$-variation of $\varioOrigin$.
\end{lemma}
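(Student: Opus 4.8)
The plan is to prove \rref{lem:outputVariation} by induction on the structure of $\alpha$, mirroring the clauses of \rref{fig:unisubs}. Since the recursive calls inside $\alpha \parOp \beta$ are made with non-empty parallel contexts, I would prove the following strengthening, of which the stated lemma is the case $\parallelCtx = \emptyset$: if $\usInOutOp{\jointTaboo, \parallelCtx}{\jointOut}{\alpha}$ is defined and $\parallelCtx \cap (\{\globalTime,\globalTime'\} \cup \TVar) = \emptyset$, and $\varioComp = (\pstate{v}, \trace, \varioFin) \in \sem{\alpha}{\interadj}$ with $\varioFin \neq \bot$, and $\pstate{v}$ is a $(\jointTaboo \cup \parallelCtx)$-variation of $\varioOrigin$, then (a) $\stconcat{\varioFin}{\trace}$ is a $(\jointOut \cup \parallelCtx)$-variation of $\varioOrigin$, and (b) $\trace \downarrow (\jointOut^\complement \cap \Chan) = \epsilon$; for $\varioFin = \bot$ there is nothing to prove, the concatenation being formed only for proper final states. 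The side condition $\parallelCtx \cap (\{\globalTime,\globalTime'\} \cup \TVar) = \emptyset$ is an invariant of the substitution process, because every context is either $\emptyset$ or produced by $\parCtxOp{\cdot}$, which deletes $\{\globalTime,\globalTime'\} \cup \TVar$. The only generic tools I would need are monotonicity and transitivity of the variation relation in the taboo set, together with \rref{lem:unisubs_correct_bound} (so that $\jointTaboo \cup \SBV(\usInOutOp{}{}{\alpha}) \cup \SCN(\usInOutOp{}{}{\alpha}) \subseteq \jointOut$) and the bound effect property \rref{lem:boundEffect}.

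For the atomic programs I would read the footprint off \rref{def:programsemantics}: $x \ceq \rp$, $x \ceq *$, $\test{\chi}$ produce $\trace = \epsilon$ and change at most $x$, respectively nothing, matching $\jointOut = \jointTaboo \cup \{x\}$ or $\jointTaboo$; an ODE $\evolution{}{}$ keeps $\{x,x',\globalTime,\globalTime'\}^\complement$ fixed with $\trace = \epsilon$, matching $\jointOut = \jointTaboo \cup \odeBoundVars$; $\send{}{}{}$ appends one item on channel $\ch{}$ to $\pstate{v}(\historyVar)$ and leaves the state otherwise, and $\receive{}{}{}$ additionally writes $x$, matching $\jointOut = \jointTaboo \cup \{\ch{},\historyVar\}$, respectively $\jointOut = \jointTaboo \cup \{\ch{},\historyVar,x\}$. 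In each case $\stconcat{\varioFin}{\trace}$ is a $(\jointOut \setminus \jointTaboo)$-variation of $\pstate{v}$, so transitivity with ``$\pstate{v}$ is a $(\jointTaboo\cup\parallelCtx)$-variation of $\varioOrigin$'', weakened to $(\jointOut\cup\parallelCtx)$ by monotonicity, yields (a), and (b) is immediate. The delicate base case is the program constant $\rpconst{}{}$: here $\sem{\rpconst{}{}}{\interadj} = \interadj(\rpconst{}{}) = \sem{\usubs\pconst}{\inter}$ by \rref{def:adjoint}, so \rref{lem:boundEffect} applied to the \emph{replacement} $\usubs\pconst$ under $\inter$ gives $\stconcat{\varioFin}{\trace} = \pstate{v}$ on $\SBV(\usubs\pconst)^\complement$ and $\trace \downarrow \SCN(\usubs\pconst)^\complement = \epsilon$; since $\jointOut = \jointTaboo \cup \SBV(\usubs\pconst) \cup \SCN(\usubs\pconst)$ and the clause's side condition guarantees $\SBV(\usubs\pconst) \subseteq \varvec$, $\SCN(\usubs\pconst) \subseteq \chvec$, both conclusions follow by the same transitivity argument.

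For the compound cases I would thread the taboos exactly as the figure does. For $\alpha \cup \beta$, $\varioComp$ is a run of one branch, and monotonicity ($\jointOut_i \subseteq \jointOut_1 \cup \jointOut_2$) lifts its hypothesis to the union. For $\alpha \seq \beta$ a completed run splits as $(\pstate{v},\trace_1,\pstate{u}) \in \sem{\alpha}{\interadj}$ and $(\pstate{u},\trace_2,\varioFin) \in \sem{\beta}{\interadj}$ with $\trace = \trace_1 \cdot \trace_2$ and $\pstate{u} \neq \bot$; the IH for $\alpha$ gives $\stconcat{\pstate{u}}{\trace_1}$ a $(\jointOut_1\cup\parallelCtx)$-variation of $\varioOrigin$ and $\trace_1 \downarrow (\jointOut_1^\complement \cap \Chan) = \epsilon$, whence stripping the tabooed appended items shows $\pstate{u}$ itself is a $(\jointOut_1\cup\parallelCtx)$-variation of $\varioOrigin$ --- exactly the hypothesis for applying the IH to $\usInOutOp{\jointOut_1,\parallelCtx}{\jointOut_2}{\beta}$, after which $\jointOut_1 \subseteq \jointOut_2$ (retained input taboos) lets one fold $\trace_1$ back in. The repetition case reduces, by a secondary induction on the iteration count, to sequential compositions of $\usInOutOp{\jointOut,\parallelCtx}{}{\alpha}$ using that $\jointOut$ is the fixed point (output $=$ input). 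The hard part is $\alpha \parOp \beta$: the run decomposes into $(\pstate{v},\trace_\alpha,\pstate{w}_\alpha) \in \sem{\alpha}{\interadj}$ and $(\pstate{v},\trace_\beta,\pstate{w}_\beta) \in \sem{\beta}{\interadj}$ with both final substates proper; $\pstate{v}$ remains a $(\jointTaboo\cup\parCtxOp{\beta})$- and $(\jointTaboo\cup\parCtxOp{\alpha})$-variation of $\varioOrigin$ because $\parCtxOp{\cdot}$ only enlarges the context (this is where the $\{\globalTime,\globalTime'\}\cup\TVar$-disjointness invariant is used), so the two IHs apply and, via $\parCtxOp{\beta} \subseteq \jointOut_2 \cup \parallelCtx$ and symmetrically, upgrade to $(\jointOut_1\cup\jointOut_2\cup\parallelCtx)$-variations of $\varioOrigin$. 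Reassembling $\stconcat{(\pstate{w}_\alpha \merge \pstate{w}_\beta)}{\trace}$ then needs a case split by the definition of $\merge$ on whether a coordinate lies in $\SBV(\alpha)$, combined with the observation that the channels of the causal trace $\trace = \trace \downarrow (\alpha \parOp \beta)$ are covered by the two subprograms and hence lie in $\jointOut_1 \cup \jointOut_2$, so that all appended items vanish under projection onto the non-tabooed channels. This merge-versus-projection bookkeeping is the one genuinely fiddly step; everything else is routine.
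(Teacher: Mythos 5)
Your proposal would work, but it takes a genuinely different and much heavier route than the paper. The paper's proof is \emph{not} an induction on $\alpha$ at all: it applies \rref{lem:boundEffect} and \rref{lem:unisubs_correct_bound} \emph{once, to the whole program}, and is done. Concretely, the bound effect property gives that the run changes state only on $\SBV(\usInOutOp{\jointTaboo, \emptyset}{}{\alpha})$ and communicates only on $\SCN(\usInOutOp{\jointTaboo, \emptyset}{}{\alpha})$, so $\stconcat{\varioFin}{\trace}$ is a $(\SBV \cup \SCN)$-variation of $\pstate{v}$; chaining this with the hypothesis that $\pstate{v}$ is a $\jointTaboo$-variation of $\varioOrigin$ and using $\jointTaboo \cup \SBV(\usInOutOp{\jointTaboo, \emptyset}{}{\alpha}) \cup \SCN(\usInOutOp{\jointTaboo, \emptyset}{}{\alpha}) \subseteq \jointOut$ from \rref{lem:unisubs_correct_bound} yields the claim, with the only real work being the trace-variable bookkeeping (appended items vanish under projection onto the non-tabooed channels) that you also identify. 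You name exactly these two lemmas as your ``only generic tools'' but then do not exploit that, applied globally, they already contain the entire content of the statement; instead you re-derive them clause by clause, which forces you to invent a strengthened induction hypothesis with a threaded parallel context, the $\{\globalTime,\globalTime'\}\cup\TVar$-disjointness invariant, and the merge-versus-projection case split in the parallel case --- all of which the paper's proof never needs because that complexity is already encapsulated in the inductive proofs of \rref{lem:boundEffect} and \rref{lem:unisubs_correct_bound}. Your version is self-contained and would presumably go through (the sequential, repetition, and parallel cases you sketch are the standard threading arguments), but it duplicates two existing lemmas and obscures the actual point of this lemma, which is just transitivity and monotonicity of the variation relation.
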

\begin{proof}
	Since $\pstate{v}$ is a $\jointTaboo$-variation of $\varioOrigin$ (\rref{def:variation}),
	we have
	\begin{equation} \label{eq:variation}
		\pstate{v} \downarrow \cset = \varioOrigin \downarrow \cset \text{ on } \jointTaboo^\complement \cap \V \text{ with }\cset = \jointTaboo^\complement \cap \Chan \text{,}
	\end{equation}
	and by the bound effect property (\rref{lem:boundEffect}), we obtain
	\begin{equation} \label{eq:bound}
		\pstate{v} = \varioOrigin \text{ on } \SBV(\usInOutOp{\jointTaboo, \emptyset}{}{\alpha})^\complement \cup \TVar \text{ and }\trace \downarrow \SCN(\usInOutOp{\jointTaboo, \emptyset}{}{\alpha})^\complement = \epsilon \text{.}
	\end{equation}
	
	If only $\stconcat{\varioFin}{\trace}$ is a $\jointTaboo_0$-variation of $\varioOrigin$,
	where $\jointTaboo_0 = \jointTaboo \cup \SBV(\usInOutOp{\jointTaboo, \emptyset}{}{\alpha}) \cup \SCN(\usInOutOp{\jointTaboo, \emptyset}{}{\alpha})$,
	then $\stconcat{\varioFin}{\trace}$ is a $\jointOut$-variation of $\varioOrigin$ by \rref{lem:unisubs_correct_bound}.
	To prove $\stconcat{\varioFin}{\trace}$ is a $\jointTaboo_0$-variation of $\varioOrigin$,
	we handle the variable restriction $\jointTaboo_0^\complement \cap \V$ separately for $\RealNatVar$ and~$\TVar$:	
	In case $\RealNatVar$, by \rref{eq:variation}, $\pstate{v} = \varioOrigin$ on $\jointTaboo^\complement \cap (\RealNatVar)$,
	and by \rref{eq:bound}, $\pstate{v} = \stconcat{\varioFin}{\trace}$ on $\SBV(\usInOutOp{\jointTaboo, \emptyset}{}{\alpha})^\complement \cap (\RealNatVar)$ such that
	$\stconcat{\varioFin}{\trace} = \pstate{v} = \varioOrigin$ on $\SBV(\usInOutOp{\jointTaboo, \emptyset}{\jointOut}{\alpha})^\complement \cap \jointTaboo^\complement \cap (\RealNatVar)$.
	Moreover, observe that $\SBV(\usInOutOp{\jointTaboo, \emptyset}{\jointOut}{\alpha})^\complement \cap \jointTaboo^\complement \cap (\RealNatVar) = \jointTaboo_0^\complement \cap (\RealNatVar)$ since $\SCN(\usInOutOp{\jointTaboo, \emptyset}{}{\alpha}) \cap (\RealNatVar) = \emptyset$.
	In case $\TVar$, let $\historyVar \in \jointTaboo_0^\complement \cap \TVar = \SBV(\usInOutOp{\jointTaboo, \emptyset}{\jointOut}{\alpha})^\complement \cap \jointTaboo^\complement \cap \TVar$ 
	and $\cset_0 = \jointTaboo_0^\complement \cap \Chan = \SCN(\usInOutOp{\jointTaboo, \emptyset}{\jointOut}{\alpha})^\complement \cap \cset$.
	Then we have
	\begin{equation*}
		(\stconcat{\varioFin}{\trace})(\historyVar) \downarrow \cset_0 = \varioFin(\historyVar) \downarrow \cset_0 \cdot \trace(\historyVar) \downarrow \cset_0 
		\overset{\text{(\ref{eq:bound})}}{=} \varioFin(\historyVar) \downarrow \cset_0 
		\overset{\text{(\ref{eq:bound})}}{=} \pstate{v}(\historyVar) \downarrow \cset_0 \
		\overset{\text{(\ref{eq:variation})}}{=} \varioFin(\historyVar) \downarrow \cset_0
	\end{equation*}
	because $\SCN(\usInOutOp{\jointTaboo, \emptyset}{}{\alpha}) \subseteq \cset_0$ and $\cset \subseteq \cset_0$.
	\qedhere
\end{proof}

\newcommand{\shortSubs}[1]{\usubs(#1)}

By mentioning $\cset$, 
the program constant $\rpconst{}{}$ signals that $\pconst$ synchronizes along all the channels $\cset$.%
\footnote{Synchronization must not be confused with actually reading or writing all the channels since a program can simply not communicate along a channel,
\eg $\skipProg \cup \send{}{}{}$ synchronizes on $\ch{}$ but not all runs communicate along $\ch{}$.}
Synchronization forces the parallel context to agree with the local program on the communication along the synchronized channels.
Uniform substitution must preserve synchronization as otherwise the parallel context could unsoundly perform additional communication.

For example, the substitution $\usubs = \{ \pconst \mapsto \test{\true}, \pconst[alt] \mapsto \send{}{}{} \}$
would turn the valid formula
\begin{equation*}
	[ \rpconst{\ch{}}{} ] \len{\historyVar \downarrow \ch{}} = 1 \rightarrow [ \rpconst{\ch{}}{} \parOp \rpconst[alt]{\ch{}}{\varvec^\complement} ] \len{\historyVar \downarrow \ch{}} = 1
\end{equation*}
into the invalid formula
\begin{equation*}
	[ \test{\true} ] \len{\historyVar \downarrow \ch{}} = 1 \rightarrow [ \test{\true} \parOp \send{}{}{} ] \len{\historyVar \downarrow \ch{}} = 1
\end{equation*}
stating that if the initial history contains one $\ch{}$-communication ($[ \test{\true} ] \len{\historyVar \downarrow \ch{}} = 1$),
there is still only one after sending along $\ch{}$ by $\send{}{}{}$ .
The problem is that the replacement for $\pconst$ no longer forces the replacement for $\pconst[alt]$ to synchronize along~$\ch{}$.
Where $\pconst[alt]$ could only communicate along $\ch{}$ if it agreed on this communication with $\pconst$,
the replacement for $\pconst[alt]$ unsoundly can perform additional communication independent of the replacement for $\pconst$.

Uniform substitution (see \rref{fig:unisubs}) preserves synchronization by the side condition $\SCN(\usubs \pconst) = \cset$ for replacing program constants $\rpconst{}{}$. 
The standard intuition for uniform substitution would suggest that $\SCN(\usubs \pconst) \subseteq \cset$ suffices since this already prevents the local replacement of $\pconst$ to unsoundly bind accessed channels. 

The additional inclusion $\SCN(\usubs \pconst) \supseteq \cset$ ensures that uniform substitution~$\shortSubs{\alpha}$ for program~$\alpha$ preserves the synchronization along the channels $\CN(\alpha)$ where~$\alpha$ synchronizes on,
\iest $\CN(\alpha) \subseteq \CN(\shortSubs{\alpha})$ with syntactical channels $\CN(\cdot)$ (see \rref{app:staticSemantics}).
Since the upper bound $\SCN(\usubs \alpha) \subseteq \cset$ is defined semantically, 
the substitution
$\shortSubs{\alpha}$ might introduce extra synchronization on the channels $\cset[alt]_\alpha = \CN(\shortSubs{\alpha}) \setminus \CN(\alpha)$.
However, this is harmless as $\shortSubs{\alpha}$ still performs no actual communication on $\cset[alt]_\alpha$,
\iest $C_\alpha \cap \SCN(\shortSubs{\alpha}) = \emptyset$.
Hence, potential extra synchronization in $\shortSubs{\alpha}$ at most leads to less behavior when parallel programs do not agree with empty communication along $\cset[alt]_\alpha$.
In summary, $\CN(\alpha)$ is bound as follows: $\SCN(\shortSubs{\alpha}) \subseteq \CN(\alpha) \subseteq \CN(\shortSubs{\alpha})$.
\rref{lem:substitutionPreservesSynchronization} heavily relies on these bounds and is crucially used later in proving \rref{lem:unisubs_fml_prog} about uniform substitution.

\begin{lemma}[Uniform substitution preserves synchronization]
	\label{lem:substitutionPreservesSynchronization}
	Let $\alpha$ and $\beta$ be programs, the substitutions $\shortSubs{\gamma} = \usInOutOp{}{}{\gamma}$ for $\gamma \in \{ \alpha, \beta \}$ be defined,
	and $\inter$ an interpretation.
	Then for $\gamma \in \{ \alpha, \beta \}$, the following are equivalent:
	\begin{enumerate}
		\item $(\pstate{v}, \trace \downarrow \shortSubs{\gamma}, \varioFin_\gamma) \in \sem{\shortSubs{\gamma}}{\inter}$ for $\gamma \in \{ \alpha, \beta \}$  and $\trace \downarrow (\shortSubs{\alpha} \parOp \shortSubs{\beta}) = \trace$
		\label{itm:syncUS}

		\item $(\pstate{v}, \trace \downarrow \gamma, \varioFin_\gamma) \in \sem{\shortSubs{\gamma}}{\inter}$ for $\gamma \in \{ \alpha, \beta \}$ and $\trace \downarrow (\alpha \parOp \beta) = \trace$
		\label{itm:syncNoUS}
	\end{enumerate}
\end{lemma}
\begin{proof}
	For $\gamma \in \{ \alpha, \beta \}$, the following set inclusions can be proven  by induction on the structure of $\gamma$:
	\begin{equation}
		\label{eq:channelChain}
		\SCN(\shortSubs{\gamma}) \subseteq \SCN(\gamma) \subseteq \CN(\gamma) \subseteq \CN(\shortSubs{\gamma})
	\end{equation}
	First, $\SCN(\shortSubs{\gamma}) \subseteq \SCN(\gamma)$ because locally each replacement $\usubs \pconst$ of a program constant $\rpconst{}{}{}$ cannot write more channels than $\rpconst{}{}$ due to the side condition $\SCN(\usubs \pconst) \subseteq \cset$ in \rref{fig:unisubs}.
	The inclusion $\SCN(\gamma) \subseteq \CN(\gamma)$ holds as $\CN(\gamma)$ is a sound overapproximation of $\SCN(\gamma)$ (see \rref{app:staticSemantics}).
	In the induction for $\CN(\gamma) \subseteq \CN(\shortSubs{\gamma})$,
	the case
	$\CN(\rpconst{}{}) = \cset \subseteq \SCN(\usubs \pconst) \subseteq \CN(\usubs \pconst)$ if $\gamma \equiv \rpconst{}{}$ uses $\SCN(\usubs \pconst) \supseteq \cset$ from \rref{fig:unisubs} again.

	First, we prove that \rref{itm:syncUS} implies \rref{itm:syncNoUS}.
	Therefore, let $\cset[alt]_\gamma = \CN(\shortSubs{\gamma}) \setminus \CN(\gamma)$ be the extra channels introduced by substitution.
	Now, let $(\pstate{v}, \trace \downarrow \shortSubs{\gamma}, \varioFin_\gamma) \in \sem{\shortSubs{\gamma}}{\inter}$ for $\gamma \in \{ \alpha, \beta \}$. 
	Then $\trace$ contains no communication along $\cset[alt]_\gamma$, \iest $\trace \downarrow \cset[alt]_\gamma = \epsilon$, since $\cset[alt]_\gamma \cap \SCN(\shortSubs{\gamma}) = \emptyset$ by \rref{eq:channelChain}.
	Hence, $\trace \downarrow \shortSubs{\gamma} = \trace \downarrow \gamma$ such that $(\pstate{v}, \trace \downarrow \gamma, \varioFin_\gamma) \in \sem{\shortSubs{\gamma}}{\inter}$.
	Moreover, let $\trace \downarrow (\shortSubs{\alpha}\parOp\shortSubs{\beta}) = \trace$.
	Intuitively,~$\trace$ has no extra communication outside $\CN(\alpha) \cup \CN(\beta)$ because $\trace \downarrow (\shortSubs{\alpha}\parOp\shortSubs{\beta}) = \trace$ and in~$\trace$ there is no communication on the difference $(\CN(\shortSubs{\alpha}) \cup \CN(\shortSubs{\beta})) \setminus (\CN(\alpha) \cup \CN(\beta))$.
	Formally, since $\trace \downarrow \cset[alt]_\gamma = \epsilon$ for $\gamma \in \{ \alpha, \beta \}$,
	the trace $\trace$ does not contain communication on $(\CN(\shortSubs{\alpha}) \cup \CN(\shortSubs{\beta})) \setminus (\CN(\alpha) \cup \CN(\beta)) 
	\subseteq \cset[alt]_\alpha \cup \cset[alt]_\beta$,
	which justifies the equality $(\star)$ in the following:
	\begin{equation*}
		\label{eq:nojunk}
		\trace \downarrow (\alpha\parOp\beta) 
		= (\trace \downarrow (\shortSubs{\alpha}\parOp\shortSubs{\beta})) \downarrow (\alpha\parOp\beta)
		\overset{(\star)}{=} \trace \downarrow (\shortSubs{\alpha} \parOp \shortSubs{\beta}) = \trace \text{.}
	\end{equation*}

	Conversely, let $(\pstate{v}, \trace \downarrow \gamma, \varioFin_\gamma) \in \sem{\shortSubs{\gamma}}{\inter}$ for $\gamma \in \{ \alpha, \beta \}$ and $\trace \downarrow (\alpha \parOp \beta) = \trace$.
	Moreover, let $\overline{\alpha} = \beta$ and $\overline{\beta} = \alpha$.
	Then 
	\begin{equation*}
		\trace \downarrow \shortSubs{\gamma} = (\trace \downarrow (\alpha\parOp\beta)) \downarrow \shortSubs{\gamma} = \trace \downarrow \Big( \big( \CN(\alpha) \cup \CN(\beta) \big) \cap \CN(\shortSubs{\gamma}) \Big) \text{,}
	\end{equation*}
	which equals $\trace \downarrow (\CN(\gamma) \cup (\CN(\overline{\gamma}) \cap \CN(\shortSubs{\gamma})))$
	because $\CN(\gamma) \subseteq \CN(\shortSubs{\gamma})$ by \rref{eq:channelChain}. 
	Since there is no communication along $\cset[alt]_\gamma = \CN(\shortSubs{\gamma}) \setminus \CN(\gamma)$ by $\cset[alt]_\gamma \cap \SCN(\shortSubs{\gamma}) = \emptyset$ again, 
	there is also non along
	$(\CN(\overline{\gamma}) \cap \CN(\shortSubs{\gamma})) \setminus \CN(\gamma) \subseteq \cset[alt]_\gamma$.
	Hence, $\trace \downarrow (\CN(\gamma) \cup (\CN(\overline{\gamma}) \cap \CN(\shortSubs{\gamma}))) = \trace \downarrow \gamma$ such that $\trace \downarrow \shortSubs{\gamma} = \trace \downarrow \gamma$,
	which implies $(\pstate{v}, \trace \downarrow \shortSubs{\gamma}, \varioFin_\gamma) \in \sem{\shortSubs{\gamma}}{\inter}$.
	Moreover, $\trace \downarrow (\alpha\parOp\beta) = \trace$ implies $\trace \downarrow (\shortSubs{\alpha}\parOp\shortSubs{\beta}) = (\trace \downarrow (\alpha\parOp\beta)) \downarrow (\shortSubs{\alpha}\parOp\shortSubs{\beta})$, which equals $\trace \downarrow (\alpha\parOp\beta) = \trace$ since $\CN(\shortSubs{\alpha}\parOp\shortSubs{\beta}) \supseteq \CN(\alpha\parOp\beta)$.
	\qedhere
\end{proof}

\begin{proof}[of \rref{lem:unisubs_fml_prog}]
	The proof generalizes the substitution lemma proof for \dGL \cite[Lemma 16 and Lemma 17]{DBLP:conf/cade/Platzer19} to \dLCHP,
	where taboos contain channels and the parallel context needs to be respected.
	The proof is by lexicographic mutual structural induction, 
	\iest along the lexicographic order $\lexorder$ on tuples $(\usubs, \pi)$ of substitutions $\usubs$ and formula-program expressions $\pi$ defined by $(\usubs', \pi') \lexorder (\usubs, \pi)$ if $\usubs' \lexorder \usubs$ or ($\usubs' = \usubs$ but $\pi' \lexorder \pi)$, 
	where~$\lexorder$ on substitutions and formula-program expressions, respectively, denotes the (mutual) structural order,
	simultaneously for all $\jointTaboo$, $\pstate{v}$, and $\varioOrigin$.

	First, consider the formula cases.
	Therefore, let $\pstate{v}$ be any $\jointTaboo$-variation of~$\varioOrigin$.

	\begin{enumerate}
		\item $\lstate{v} \vDash \usTabooOp{}{\expr_1 \sim \expr_2}$ iff $\lstate{v} \vDash \usTabooOp{}{\expr_1} \sim \usTabooOp{}{\expr_2}$ iff $\sem{\usTabooOp{}{\expr_1}}{\lstate{v}} \sim \sem{\usTabooOp{}{\expr_2}}{\lstate{v}}$, which is by \rref{lem:unisubs_term} equivalent to $\sem{\expr_1}{\lstateadj{v}} \sim \sem{\expr_2}{\lstateadj{v}}$, iff $\lstateadj{v} \vDash \expr_1 \sim \expr_2$
		
		\item In case $\psymb(\cset, \expr)$, let $d = \sem{\usTabooOp{}{\expr \downarrow \cset}}{\lstate{v}}$. 
		Then $\lstate{v} \vDash \usTabooOp{}{\psymb(\expr \downarrow \cset)}$ 
		iff $\lstate{v} \vDash \usAuxOp{\expr}{\usubs \psymb(\usarg)}$,
		by IH, 
		iff $\inter \subs{\usarg}{d} \pstate{v} \vDash \usubs \psymb(\usarg)$ because $\usarg$ has arity $0$ and $\psymb$ has arity $1$ such that $\usAux{\expr} \lexorder \usubs$.
		Since $\usTabooOp{}{\psymb(\expr \downarrow \cset)}$ is defined, $\SFV(\usubs \psymb(\usarg)) \cap \jointTaboo = \emptyset$ and $\SCN(\usubs \psymb(\usarg)) \cap \jointTaboo = \emptyset$.
		By premise, $\pstate{v}$ is a $\jointTaboo$-variation of~$\varioOrigin$, 
		\iest $\pstate{v} \downarrow (\jointTaboo^\complement \cap \Chan) = \varioOrigin \downarrow (\jointTaboo^\complement \cap \Chan)$ on $\jointTaboo^\complement \cap \V$.
		Thus, $\pstate{v} \downarrow \SCN(\usubs \psymb(\usarg)) = \varioOrigin \downarrow \SCN(\usubs \psymb(\usarg))$ on $\SFV(\usubs \psymb(\usarg))$ such that 
		($\inter \subs{\usarg}{d} \pstate{v} \vDash \usubs \psymb(\usarg)$ iff $\inter \subs{\usarg}{d} \varioOrigin \vDash \usubs \psymb(\usarg)$) by coincidence (see \rref{lem:termCoincidence}).
		Finally, $\inter \subs{\usarg}{d} \varioOrigin \vDash \usubs \psymb(\usarg)$,
		by \rref{def:adjoint}, 
		iff $(d) \in \interadj[w] (\psymb)$,
		by IH, 
		iff $(\sem{\expr \downarrow \cset}{\lstateadj[w]{v}}) \in \interadj[w] (\psymb)$
		iff $\lstateadj[w]{v} \vDash \psymb(\expr \downarrow \cset)$ because $\expr \downarrow \cset \lexorder \psymb(\expr \downarrow \cset)$.

		\item $\lstate{v} \vDash \usTabooOp{}{\neg \phi}$ iff $\lstate{v} \vDash \neg \usTabooOp{}{\phi}$ iff $\lstate{v} \nvDash \usTabooOp{}{\phi}$, by IH, iff $\lstateadj{v} \nvDash \phi$ iff $\lstateadj{v} \vDash \neg \phi$
	
		\item $\lstate{v} \vDash \usTabooOp{}{\varphi \wedge \psi}$ iff $\lstate{v} \vDash \usTabooOp{}{\varphi} \wedge \usTabooOp{}{\psi}$ iff ($\lstate{v} \vDash \usTabooOp{}{\varphi}$ and $\lstate{v} \vDash \usTabooOp{}{\psi}$), by IH, iff ($\lstateadj{v} \vDash \varphi$ and $\lstateadj{v} \vDash \psi$) iff $\lstateadj{v} \vDash \varphi \wedge \psi$
				
		\item Observe that $\pstate{v} \subs{\arbitraryVar}{\semConst}$ is a $(\jointTaboo\cup\{\arbitraryVar\})$-variation of $\varioOrigin$ for all $\semConst \in \type(\arbitraryVar)$ since $\pstate{v}$ is a $\jointTaboo$-variation of~$\varioOrigin$.
		Now, $\lstate{v} \vDash \usTabooOp{}{\fa{\arbitraryVar} \varphi}$ 
		iff $\lstate{v} \vDash \fa{\arbitraryVar} \usTabooOp{\jointTaboo\cup\{\arbitraryVar\}}{\varphi}$ 
		iff $\lstate{v} \subs{\arbitraryVar}{\semConst} \vDash \usTabooOp{\jointTaboo\cup\{\arbitraryVar\}}{\varphi}$ for all $\semConst \in \type(\arbitraryVar)$,
		by IH, iff $\lstateadj{v} \subs{\arbitraryVar}{\semConst} \vDash \varphi$ for all $\semConst \in \type(\arbitraryVar)$
		iff $\lstateadj{v} \vDash \fa{\arbitraryVar} \varphi$.

		\item Let $\lstate{v} \vDash \usTabooOp{}{[ \alpha ] \psi}$.
		Then $\lstate{v} \vDash [ \usInOutOp{\jointTaboo, \emptyset}{\jointOut}{\alpha} ] \usTabooOp{\jointOut}{\psi}$.
		To prove $\lstateadj{v} \vDash [ \alpha ] \psi$,
		let $\varioComp \in \sem{\alpha}{\interadj}$ with $\varioFin \neq \bot$.
		By the mutual IH,
		$\varioComp \in \sem{\usInOutOp{\jointTaboo, \emptyset}{}{\alpha}}{\inter}$ because $\pstate{v}$ is a $(\jointTaboo\cup\emptyset)$-variation of $\varioOrigin$.
		Hence, $\stconcat{\lvarioFin}{\trace} \vDash \usTabooOp{\jointOut}{\psi}$ by the premise.
		Finally, by IH, $\stconcat{\lstateadj{\varioFinName}}{\trace} \vDash \psi$ because $\stconcat{\varioFin}{\trace}$ is a $\jointOut$-variation of $\varioOrigin$ by \rref{lem:outputVariation}.

		Conversely, let $\lstateadj{v} \vDash [ \alpha ] \psi$.
		To prove $\lstate{v} \vDash \usTabooOp{}{[ \alpha ] \psi}$,
		recall $\usTabooOp{}{[ \alpha ] \psi} = [ \usInOutOp{\jointTaboo, \emptyset}{}{\alpha} ] \usTabooOp{\jointOut}{\psi}$ and let $\varioComp \in \sem{\usInOutOp{\jointTaboo, \emptyset}{}{\alpha}}{\inter}$ with $\varioFin \neq \bot$.
		By the mutual IH,
		$\varioComp \in \sem{\alpha}{\interadj}$ because $\pstate{v}$ is a $(\jointTaboo\cup\emptyset)$-variation of $\varioOrigin$.
		Therefore, $\interadj \stconcat{\varioFin}{\trace} \vDash \psi$ by the premise.
		Finally, by IH, $\stconcat{\lvarioFin}{\trace} \vDash \usTabooOp{\jointOut}{\psi}$ because $\stconcat{\varioFin}{\trace}$ is a $\jointOut$-variation of $\varioOrigin$ by \rref{lem:outputVariation}.
		
		\item Let $\lstate{v} \vDash \usTabooOp{}{[ \alpha ] \ac \psi}$.
		Then $\lstate{v} \vDash [ \usInOutOp{\jointTaboo, \emptyset}{}{\alpha} ] \acpair{\usTabooOp{\jointOut}{\A}, \usTabooOp{\jointOut}{\C}} \usTabooOp{\jointOut}{\psi}$.
		To prove $\lstateadj{v} \vDash [ \alpha ] \ac \psi$,
		let $\varioComp \in \sem{\alpha}{\interadj}$.
		By the mutual IH,
		$\varioComp \in \sem{\usInOutOp{\jointTaboo, \emptyset}{}{\alpha}}{\inter}$ because $\pstate{v}$ is a $(\jointTaboo\cup\emptyset)$-variation of $\varioOrigin$.
		Since variation is monotone in the variation set,
		$\pstate{v}$ is a $(\jointTaboo\cup\SBV(\usInOutOp{\jointTaboo, \emptyset}{}{\alpha})\cup\SCN(\usInOutOp{\jointTaboo, \emptyset}{}{\alpha}))$-variation of~$\varioOrigin$.
		Moreover, by the bound effect property (\rref{lem:boundEffect}), we have $\trace \downarrow \SCN(\usInOutOp{\jointTaboo, \emptyset}{}{\alpha})^\complement = \epsilon$,
		which implies that $\stconcat{\pstate{v}}{\trace[pre]}$ is a $(\jointTaboo\cup\SBV(\usInOutOp{\jointTaboo, \emptyset}{}{\alpha})\cup\SCN(\usInOutOp{\jointTaboo, \emptyset}{}{\alpha}))$-variation of $\varioOrigin$ for all $\trace[pre] \preceq \trace$.
		Hence, by \rref{lem:unisubs_correct_bound}, $\stconcat{\pstate{v}}{\trace}[pre]$ is a $\jointOut$-variation of $\varioOrigin$ for all $\trace[pre] \preceq \trace$.
		For \acCommit, 
		assume $\assCommit{\lstateadj{v}}{\trace} \vDash \A$.
		By IH, $\assCommit{\lstate{v}}{\trace} \vDash \usTabooOp{\jointOut}{\A}$.
		Hence, $\lstate{v} \cdot \trace \vDash \usTabooOp{\jointOut}{\C}$ by premise, 
		which implies $\stconcat{\lstateadj{v}}{\trace} \vDash \C$ by IH again.
		For \acPost,
		assume $\varioFin \neq \bot$ and $\assPost{\lstateadj{v}}{\trace} \vDash \A$.
		By IH, $\assPost{\lstate{v}}{\trace} \vDash \usTabooOp{\jointOut}{\A}$.
		Hence, $\stconcat{\lstate{o}}{\trace} \vDash \usTabooOp{\jointOut}{\psi}$ by premise,
		which implies $\stconcat{\lstateadj{\varioFinName}}{\trace} \vDash \psi$ by IH again because $\stconcat{\varioFin}{\trace}$ is a $\jointOut$-variation of $\varioOrigin$ by \rref{lem:outputVariation}.

		Conversely, let $\lstateadj{v} \vDash [ \alpha ] \ac \psi$.
		To prove $\lstate{v} \vDash \usTabooOp{}{[ \alpha ] \ac \psi}$,
		recall that $\usTabooOp{}{[ \alpha ] \ac \psi} = [ \usInOutOp{\jointTaboo, \emptyset}{}{\alpha} ] \acpair{\usTabooOp{\jointOut}{\A}, \usTabooOp{\jointOut}{\C}} \usTabooOp{\jointOut}{\psi}$ and
		let $\computation \in \sem{\usInOutOp{\jointTaboo, \emptyset}{}{\alpha}}{\inter}$.
		By the mutual IH, $\computation \in \sem{\alpha}{\interadj}$ because $\pstate{v}$ is a $(\jointTaboo\cup\emptyset)$-variation of $\varioOrigin$.
		As conversely, $\stconcat{\pstate{v}}{\trace}[pre]$ is a $\jointOut$-variation of $\varioOrigin$ for all $\trace[pre] \preceq \trace$.
		For \acCommit, assume $\assCommit{\lstate{v}}{\trace} \vDash \usTabooOp{\jointOut}{\A}$.
		By IH $\assCommit{\lstateadj{v}}{\trace} \vDash \A$. Hence, $\stconcat{\lstateadj{v}}{\trace} \vDash \C$ by premise,
		which implies $\stconcat{\lstate{v}}{\trace} \vDash \usTabooOp{\jointOut}{\C}$ by IH again.
		For \acPost, assume $\varioFin \neq \bot$ and $\assPost{\lstate{v}}{\A} \vDash \usTabooOp{\jointOut}{\A}$.
		By IH $\assPost{\lstateadj{v}}{\trace} \vDash \A$.
		Hence, $\lstateadj{\varioFinName} \cdot \trace \vDash \psi$ by premise,
		which implies $\stconcat{\lvarioFin}{\trace} \vDash \usTabooOp{\jointOut}{\psi}$ by IH again because $\stconcat{\varioFin}{\trace}$ is a $\jointOut$-variation of $\varioOrigin$ by \rref{lem:outputVariation}.
	\end{enumerate}

	Secondly, consider the program cases.
	Therefore, let $\pstate{v}$ be any $(\jointTaboo\cup\parallelCtx)$-variation of $\varioOrigin$.
	\Wlossg $\trace = \epsilon$ and $\varioFin \neq \bot$ in the cases of non-communicating atomic programs $\alpha$ below
	because $\trace = \epsilon$ for all $\varioComp \in \sem{\alpha}{\inter}$ if $\alpha$ is an atomic program but not a communication primitive and ($\leastComputation \in \sem{\usInOutOp{}{}{\alpha}}{\inter}$ iff $\leastComputation \in \sem{\alpha}{\interadj}$) by totality.

	\newcommand{\nonComComp}{(\pstate{v}, \epsilon, \varioFin)}

	\begin{enumerate}
		\item 
		Since $\usInOutOp{}{}{\rpconst{}{}}$ is defined, $\SBV(\usubs \pconst) \subseteq \varvec$ and $\SCN(\usubs \pconst) \subseteq \cset$.
		Hence, the equivalence $\varioComp \in \sem{\usInOutOp{}{}{\rpconst{}{}}}{\inter}$
		iff $\varioComp \in \sem{\usubs \pconst}{\inter}$ is defined,
		where the latter is by \rref{def:adjoint} equivalent to $\varioComp \in \interadj(\rpconst{}{})$
		iff $\varioComp \in \sem{\rpconst{}{}}{\interadj}$.

		\item $\nonComComp \in \sem{\usInOutOp{}{}{x \ceq \rp}}{\inter}$
		iff $\nonComComp \in \sem{x \ceq \usTabooOp{\jointTaboo\cup\parallelCtx}{\rp}}{\inter}$ 
		iff $\varioFin = \pstate{v} \subs{x}{\semConst}$, where $\semConst = \sem{\usTabooOp{\jointTaboo\cup\parallelCtx}{\rp}}{\lstate{v}}$, 
		by \rref{lem:unisubs_term}, 
		iff $\varioFin = \pstate{v} \subs{x}{\semConst}$, where $\semConst = \sem{\rp}{\lstateadj{v}}$
		iff $\nonComComp \in \sem{x \ceq \rp}{\interadj}$
		
		\item $\nonComComp \in \sem{\usTabooOp{}{x \ceq *}}{\inter}$
		iff $\varioFin = \pstate{v} \subs{x}{\semConst}$, where $\semConst \in \reals$,
		iff $\nonComComp \in \sem{x \ceq *}{\interadj}$

		\item $\nonComComp \in \sem{\usInOutOp{}{}{\evolution{}{}}}{\inter}$
		iff $\nonComComp \in \sem{\evolution{x' = \usTabooOp{\jointTaboo\cup\parallelCtx}{\rp}}{\usTabooOp{\jointTaboo\cup\parallelCtx}{\chi}}}{\inter}$
		iff a solution $\odeSolution : [0, \duration] \rightarrow \states$ from $\pstate{v}$ to $\varioFin$ exists with $\lsolution(\zeta) \vDash \globalTime' = 1 \wedge x' = \usTabooOp{\jointTaboo\cup\parallelCtx}{\rp} \wedge \usTabooOp{\jointTaboo\cup\parallelCtx}{\chi}$ and $\pstate{v} = \odeSolution(\zeta)$ on $\odeBoundVars^\complement$ for all $\zeta \in [0, \duration]$ and if $\odeSolution(\zeta)(\arbitraryVar') = \solutionDerivative{\odeSolution}{\arbitraryVar}(\arbitraryVar)$ for $\arbitraryVar \in \{\globalTime, x\}$.
		Since $\odeSolution(\zeta)$ is a $(\jointTaboo\cup\odeBoundVars)$-variation of $\varioOrigin$,
		we obtain $\lsolutionadj(\zeta) \vDash \globalTime' = 1 \wedge x' = \rp$ by \rref{lem:unisubs_term} and $\lsolutionadj(\zeta) \vDash \chi$ by the mutual IH.
		Thus, $\nonComComp \in \sem{\usInOutOp{}{}{\evolution{}{}}}{\inter}$ iff $\varioComp \in \sem{\evolution{}{}}{\interadj}$.

		\item $\nonComComp \in \sem{\usInOutOp{}{}{\test{}}}{\inter}$ 
		iff $\nonComComp \in \sem{\test{\usTabooOp{\jointTaboo\cup\parallelCtx}{\chi}}}{\inter}$
		iff $\varioFin = \pstate{v}$ and $\lstate{v} \vDash \usTabooOp{\jointTaboo\cup\parallelCtx}{\chi}$,
		by the mutual IH,
		iff $\varioFin = \pstate{v}$ and $\lstateadj{v} \vDash \chi$
		iff $\nonComComp \in \sem{\test{\chi}}{\interadj}$

		\item $\varioComp \in \sem{\usInOutOp{}{}{\send{}{}{}}}{\inter}$ 
		iff $\varioComp \in \sem{\send{}{}{\usTabooOp{\jointTaboo\cup\parallelCtx}{\rp}}}{\inter}$
		iff $\varioObservable \preceq \semCom{\semConst}{\pstate{v}}$ with $\semConst = \sem{\usTabooOp{\jointTaboo\cup\parallelCtx}{\rp}}{\lstate{v}}$,
		which by \rref{lem:unisubs_term},
		is equivalent to $\varioObservable \preceq \semCom{\semConst^*}{\pstate{v}}$ with $\semConst^* = \sem{\rp}{\lstateadj{v}}$
		iff $\varioComp \in \sem{\send{}{}{}}{\interadj}$

		\item $\varioComp \in \sem{\usInOutOp{}{}{\receive{}{}{}}}{\inter}$
		iff $\varioComp \in \sem{\receive{}{}{}}{\inter}$,
		which is equivalent to $\varioObservable \preceq \semCom{\semConst}{\pstate{v} \subs{x}{\semConst}}$ with $\semConst \in \reals$
		iff $\varioComp \in \sem{\receive{}{}{}}{\interadj}$

		\item $\varioComp \in \sem{\usInOutOp{}{}{\alpha\cup\beta}}{\inter}$ iff $\varioComp \in \sem{\usInOutOp{}{\jointOut_1}{\alpha} \cup \usInOutOp{}{\jointOut_2}{\beta}}{\inter}$, where $\jointOut = \jointOut_1 \cup \jointOut_2$,
		iff $\varioComp \in \sem{\usInOutOp{}{\jointOut_1}{\alpha}}{\inter}$ or $\varioComp \in \sem{\usInOutOp{}{\jointOut_2}{\beta}}{\inter}$,
		which by IH,
		is equivalent to $\varioComp \in \sem{\alpha}{\interadj}$ or $\varioComp \in \sem{\beta}{\interadj}$
		iff $\varioComp \in \sem{\alpha\cup\beta}{\interadj}$

		\item Let $\varioComp \in \sem{\usInOutOp{}{}{\alpha \seq \beta}}{\inter}$.
		Since $\usInOutOp{}{}{\alpha \seq \beta} \equiv \usInOutOp{}{\jointOut_0}{\alpha} \seq \usInOutOp{\jointOut_0, \parallelCtx}{}{\beta}$,
		there is $\varioComp \in \botop{(\sem{\usInOutOp{}{\jointOut_0}{\alpha}}{\inter})}$ or $\varioComp \in \sem{\usInOutOp{}{\jointOut_0}{\alpha}}{\inter} \continuation \sem{\usInOutOp{\jointOut_0, \parallelCtx}{}{\beta}}{\inter}$.
		In case $\varioComp \in \botop{(\sem{\usInOutOp{}{\jointOut_0}{\alpha}}{\inter})}$,
		there is $\varioComp \in \botop{(\sem{\alpha}{\interadj})} \subseteq \sem{\alpha \seq \beta}{\interadj}$ by IH.
		Otherwise, $\varioComp \in \sem{\usInOutOp{}{\jointOut_0}{\alpha}}{\inter} \continuation \sem{\usInOutOp{\jointOut_0, \parallelCtx}{}{\beta}}{\inter}$, 
		so computations $(\pstate{v}, \trace_1, \varioIntermediate) \in \sem{\usInOutOp{}{\jointOut_0}{\alpha}}{\inter} $ and $(\varioIntermediate, \trace_2, \varioFin) \in \sem{\usInOutOp{\jointOut_0, \parallelCtx}{}{\beta}}{\inter}$ exist with $\trace = \trace_1 \cdot \trace_2$.
		By IH, $(\pstate{v}, \trace_1, \varioIntermediate) \in \sem{\alpha}{\interadj}$.
		Since $\pstate{v}$ is a $(\jointTaboo\cup\parallelCtx)$-variation of $\varioOrigin$
		and $\pstate{v} = \varioIntermediate$ on $\SBV(\usInOutOp{}{\jointOut_0}{\alpha})^\complement$ by the bound effect property,
		we obtain $\varioOrigin = \pstate{v} = \varioIntermediate$ on $(\jointTaboo\cup\parallelCtx)^\complement \cap \SBV(\usInOutOp{}{\jointOut_0}{\alpha})^\complement = \parallelCtx^\complement \cap (\jointTaboo \cup \SBV(\usInOutOp{}{\jointOut_0}{\alpha}))^\complement$.
		By \rref{lem:unisubs_correct_bound},
		$\jointOut_0 \supseteq \jointTaboo \cup \SBV(\usInOutOp{}{\jointOut_0}{\alpha})$ such that $\varioOrigin = \varioIntermediate$ on $\parallelCtx^\complement \cap \jointOut_0^\complement = (\parallelCtx\cup\jointOut_0)^\complement$,
		\iest $\varioIntermediate$ is a $(\jointOut_0\cup\parallelCtx)$-variation of $\varioOrigin$ such that IH is applicable on $(\varioIntermediate, \trace_2, \varioFin) \in \sem{\usTabooOp{\jointOut_0\cup\parallelCtx}{\beta}}{\inter}$.
		Therefore, by IH, $(\varioIntermediate, \trace_2, \varioFin) \in \sem{\beta}{\interadj}$.
		Finally, $\varioComp \in \sem{\alpha}{\interadj} \continuation \sem{\beta}{\interadj} \subseteq \sem{\alpha \seq \beta}{\interadj}$.

		Conversely, let $\varioComp \in \sem{\alpha \seq \beta}{\interadj}$.
		Then $\varioComp \in \botop{(\sem{\alpha}{\interadj})}$ or $\varioComp \in \sem{\alpha}{\interadj} \continuation \sem{\beta}{\interadj}$.
		If $\varioComp \in \botop{(\sem{\alpha}{\interadj})}$,
		there is $\varioComp \in \botop{(\sem{\usInOutOp{}{\jointOut_0}{\alpha}}{\inter})} \subseteq \sem{\usInOutOp{}{\jointOut_0}{\alpha} \seq \usInOutOp{\jointOut_0, \parallelCtx}{}{\beta}}{\inter} = \sem{\usInOutOp{}{}{\alpha \seq \beta}}{\inter}$ by IH.
		Otherwise, $\varioComp \in \sem{\alpha}{\interadj} \continuation \sem{\beta}{\interadj}$,
		so computations $(\pstate{v}, \trace_1, \varioIntermediate) \in \sem{\alpha}{\interadj}$ and $(\varioIntermediate, \trace_2, \varioFin) \in \sem{\beta}{\interadj}$ exist such that $\trace = \trace_1 \cdot \trace_2$.
		By IH, $(\pstate{v}, \trace_1, \varioIntermediate) \in \sem{\usInOutOp{}{\jointOut_0}{\alpha}}{\inter}$.
		As conversely, $\varioIntermediate$ is a $(\jointOut_0\cup\parallelCtx)$-variation of $\varioOrigin$ such that IH is applicable on $(\varioIntermediate, \trace_2, \varioFin) \in \sem{\beta}{\interadj}$.
		Therefore, by IH, $(\varioIntermediate, \trace_2, \varioFin) \in \sem{\usInOutOp{\jointOut_0,\parallelCtx}{}{\beta}}{\inter}$.
		Finally, $\varioComp \in \sem{\usInOutOp{}{\jointOut_0}{\alpha}}{\inter} \continuation \sem{\usInOutOp{\jointOut_0, \parallelCtx}{}{\beta}}{\inter} \subseteq \sem{\usInOutOp{}{\jointOut_0}{\alpha} \seq \usInOutOp{\jointOut_0, \parallelCtx}{}{\beta}}{\inter} = \sem{\usInOutOp{}{}{\alpha \seq \beta}}{\inter}$.

		\item In case $\usInOutOp{}{}{\repetition{\alpha}}$, the output taboo $\jointOut \supseteq \jointTaboo$ of $\usInOutOp{}{}{\alpha}$ is added to $\jointTaboo$ during an additional first pass over $\alpha$ as indicated by the fixpoint notation $\usInOutOp{\jointOut, \parallelCtx}{}{\alpha}$.
		Since $\pstate{v}$ is a $(\jointTaboo\cup\parallelCtx)$-variation of $\pstate{w}$ and variation is monotone in the variation set, $\pstate{v}$ is a $(\jointOut\cup\parallelCtx)$-variation of $\pstate{w}$.
		Now, $\varioComp \in \sem{\usInOutOp{}{}{\repetition{\alpha}}}{\inter}$
		iff $\varioComp \in \sem{\repetition{\usInOutOp{\jointOut, \parallelCtx}{}{\alpha}}}{\inter}$
		iff $\varioComp \in \sem{(\usInOutOp{\jointOut, \parallelCtx}{}{\alpha})^n}{\inter}$ for some $n \in \naturals$
		iff $\varioComp \in \sem{\usInOutOp{\jointOut, \parallelCtx}{}{\alpha^n}}{\inter}$ for some $n \in \naturals$,
		by IH, iff $\varioComp \in \sem{\alpha^n}{\interadj}$ for some $n \in \naturals$
		iff $\varioComp \in \sem{\repetition{\alpha}}{\interadj}$.
 
		\item

		In case $\alpha \parOp \beta$, we have
		$\usInOutOp{}{}{\alpha \parOp \beta} \equiv \shortSubs{\alpha} \parOp \shortSubs{\beta}$,
		where $\shortSubs{\alpha} \equiv \usInOutOp{\jointTaboo, \parCtxOp{\beta}}{\jointOut_1}{\alpha}$,
		and $\shortSubs{\beta} \equiv \usInOutOp{\jointTaboo, \parCtxOp{\alpha}}{\jointOut_2}{\beta}$,
		and $\parCtxOp{\gamma} \equiv \parallelCtx \cup (\SBV(\usInOutOp{}{{}}{\gamma}) \setminus (\{\globalTime, \globalTime'\} \cup \TVar))$ for any program~$\gamma$,
		and $\jointOut = \jointOut_1 \cup \jointOut_2$.
		Since $\pstate{v}$ is a $(\jointTaboo\cup\parallelCtx)$-variation of $\varioOrigin$,
		by monotony in the variation set,
		$\pstate{v}$ is a $(\jointTaboo\cup\parCtxOp{\gamma})$-variation of $\varioOrigin$ for any $\gamma$.
		We define explicit merging $\varioFin_1 \merge_\gamma \varioFin_2$ to be $\varioFin_1$ on $\SBV(\gamma)$ and  $\varioFin_2$ elsewhere.
		
		Now, $\varioComp \in \sem{\usInOutOp{}{}{\alpha \parOp \beta}}{\inter}$
		iff $(\pstate{v}, \trace \downarrow \shortSubs{\gamma}, \varioFin_\gamma) \in \sem{\shortSubs{\gamma}}{\inter}$ for $\gamma\in\{\alpha,\beta\}$, 
		and $\varioFin = \varioFin_\alpha \merge_{\shortSubs{\alpha}} \varioFin_\beta$,
		and $\varioFin_\alpha = \varioFin_\beta$ on $\{ \globalTime, \globalTime' \}$, and
		$\trace = \trace \downarrow (\shortSubs{\alpha} \parOp \shortSubs{\beta})$,
		iff, by \rref{lem:substitutionPreservesSynchronization} and the argument about merging $\merge$ below, $(\pstate{v}, \trace \downarrow \gamma, \varioFin_\gamma) \in \sem{\shortSubs{\gamma}}{\interadj}$ for $\gamma\in\{\alpha,\beta\}$, 
		and $\varioFin = \varioFin_\alpha \merge_\alpha \varioFin_\beta$,
		and $\varioFin_\alpha = \varioFin_\beta$ on $\{ \globalTime, \globalTime' \}$, and
		$\trace = \trace \downarrow (\alpha\parOp\beta)$
		iff, by IH, $(\pstate{v}, \trace \downarrow \gamma, \varioFin_\gamma) \in \sem{\gamma}{\interadj}$ for $\gamma\in\{\alpha,\beta\}$, 
		and $\varioFin = \varioFin_\alpha \merge_\alpha \varioFin_\beta$,
		and $\varioFin_\alpha = \varioFin_\beta$ on $\{ \globalTime, \globalTime' \}$, and
		$\trace = \trace \downarrow (\alpha\parOp\beta)$
		iff $\varioComp \in \sem{\alpha\parOp\beta}{\interadj}$.

		Finally, $\varioFin_\alpha \merge_\alpha \varioFin_\beta = \varioFin_\alpha \merge_{\shortSubs{\alpha}} \varioFin_\beta$ has been left open above.
		On $\SBV(\shortSubs{\alpha})$, we have $\varioFin_\alpha \merge_\alpha \varioFin_\beta = \varioFin_\alpha = \varioFin_\alpha \merge_{\shortSubs{\alpha}} \varioFin_\beta$ since $\SBV(\shortSubs{\alpha}) \subseteq \SBV(\alpha)$.
		On $\SBV(\shortSubs{\alpha})^\complement$, first consider $\SBV(\shortSubs{\alpha})^\complement \cap \SBV(\alpha)^\complement$,
		where $\varioFin_\alpha \merge_\alpha \varioFin_\beta = \varioFin_\beta = \varioFin_\alpha \merge_{\shortSubs{\alpha}} \varioFin_\beta$.
		Now, consider $\varset = \SBV(\shortSubs{\alpha})^\complement \cap \SBV(\alpha)$.
		Further, let $\varset_\text{obs} = \{ \arbitraryVar \mid \pstate{v}(\arbitraryVar) \neq \varioFin_\alpha(\arbitraryVar) \vee \pstate{v}(\arbitraryVar) \neq \varioFin_\beta(\arbitraryVar) \}$.
		On $\varset \cap \varset_\text{obs}^\complement$,
		we have $\varioFin_\alpha \merge_{\shortSubs{\alpha}} \varioFin_\beta = \pstate{v} = \varioFin_\alpha \merge_\alpha \varioFin_\beta$.
		Otherwise, if $\arbitraryVar \in \varset \cap \varset_\text{obs}$, then $\arbitraryVar \in \SBV(\alpha)$ but $\arbitraryVar \not\in \TVar$ by \rref{lem:boundEffect}.
		Then either $\arbitraryVar \in \RVar \setminus \{ \globalTime, \globalTime' \}$ such that $\arbitraryVar \not\in \SBV(\beta)$ by well-formedness of $\alpha \parOp \beta$ or $\arbitraryVar \in \{ \globalTime, \globalTime' \}$ such that the programs must agree upon the value of $\arbitraryVar$ in their final states.
		If $\arbitraryVar \not\in \SBV(\beta)$, then $\pstate{v}(\arbitraryVar) \neq \varioFin_\alpha(\arbitraryVar)$ since $\arbitraryVar \in \varset_\text{obs}$ such that $\arbitraryVar \in \SBV(\shortSubs{\alpha})$ in contradiction to $\arbitraryVar \in \varset$.
		Therefore, $\arbitraryVar \not\in \varset \cap \varset_\text{obs}$ such that $\varset \cap \varset_\text{obs} = \emptyset$.
		Otherwise, if $\arbitraryVar \in \{ \globalTime, \globalTime' \}$, the final states agree upon the value of $\arbitraryVar$.
		Thus, $\varioFin_\alpha \merge_\alpha \varioFin_\beta = \varioFin_\alpha \merge_{\shortSubs{\alpha}} \varioFin_\beta$ on $\{ \globalTime, \globalTime' \}$.
		\qedhere
	\end{enumerate}
\end{proof}

Once the uniform substitution lemmas from \rref{sec:unisubsLemmas} are proven, 
the soundness proof of uniform substitution (\rref{thm:ussound}) is easy:

\begin{proof}[of \rref{thm:ussound}]
	Let the premise $\phi$ be valid, \iest $\lstate[alt]{v} \vDash \phi$ for all pairs $\lstate[alt]{v}$ of interpretation and state.
	For proving the conclusion, let $\lstate{v}$ be any pair of interpretation and state.
	By validity of the premise, thus $\lstateadj[v]{v} \vDash \phi$.
	Since $\pstate{v}$ is a $\emptyset$-variation of $\pstate{v}$,
	\rref{lem:unisubs_fml_prog} implies $\lstate{v} \vDash \usTabooOp{\emptyset}{\phi}$.
	\qedhere
\end{proof}

Besides the instantiation of axioms by \RuleName{US}, 
\rref{thm:usrulesound} even allows instantiation of axiomatic proof rules using uniform substitution.
The rule must be locally sound,
\iest validity of the premises in any interpretation implies validity of the conclusion under this interpretation.

\begin{theorem}[Sound uniform substitution for rules] 
	\label{thm:usrulesound}
	If the inference \textnormal{INF} is locally sound, 
	so is the inference \textnormal{US-INF}:
	\begin{equation*}
		\frac{\phi_1 \;\ldots\; \phi_k}{\psi} \;\;(\text{\textnormal{INF}})
		\qquad\qquad
		\frac{\usTabooOp{\V\cup\Chan}{\phi_1} \;\ldots\; \usTabooOp{\V\cup\Chan}{\phi_k}}{\usTabooOp{\V\cup\Chan}{\psi}} \;\;(\text{\textnormal{US-INF}})
	\end{equation*}
\end{theorem}
\begin{proof}
	Assume that the inference INF is locally sound.
	To prove that the inference US-INF is locally sound,
	let $\inter$ be any interpretation such that $\inter \vDash \usTabooOp{\V\cup\Chan}{\phi_j}$ for $1 \le j \le k$,
	\iest $\lstate{v} \vDash \usTabooOp{\V\cup\Chan}{\phi_j}$ for all $\pstate{v}$ and $j$.
	Since $\pstate{v}$ is a $(\V\cup\Chan)$-variation of any $\pstate{w}$,
	by \rref{lem:unisubs_fml_prog}, $\interadj \vDash \phi_j$ for all $j$.
	Thus, $\interadj \vDash \psi$ by local soundness of INF.
	Finally, $\inter \vDash \usTabooOp{\V\cup\Chan}{\psi}$ by \rref{lem:unisubs_fml_prog} again.
	\qedhere
\end{proof}

\rref{lem:unisubs_parallel_ctx} prepares the proof of \rref{prop:unisubs_well_formed} that uniform substitution preserves well-formedness proving that it respects the parallel context.
As well-formedness is defined in terms of the static semantics, the proofs of \rref{lem:unisubs_parallel_ctx} and \rref{prop:unisubs_well_formed} make use of the uniform substitution lemmas via \rref{lem:freeAndBoundSubs}.

\begin{lemma} \label{lem:freeAndBoundSubs}
	For formula $\phi$ and program $\alpha$, we have
	\begin{enumerate}
		\item $\SFV(\usTabooOp{}{\phi}) \subseteq \SFV(\phi) \cup \jointTaboo^\complement$, and \label{item:freeTermSubs}
		\item $\SFV(\usInOutOp{}{}{\alpha}) \subseteq \SFV(\alpha) \cup (\jointTaboo\cup\parallelCtx)^\complement$, and \label{itm:freeProgSubs}
		\item $\SBV(\usInOutOp{}{}{\alpha}) \subseteq \SBV(\alpha)$ \label{itm:boundSubs}
	\end{enumerate}
	if the substitutions $\usTabooOp{}{\phi}$ and $\usInOutOp{}{}{\alpha}$ are defined.
\end{lemma}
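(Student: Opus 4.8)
The plan is to read off all three inclusions from the semantic characterizations of $\SFV$ and $\SBV$ in \rref{sec:static_semantics}, fed by the semantic uniform substitution lemmas (\rref{lem:unisubs_term} and \rref{lem:unisubs_fml_prog}), the coincidence lemmas (\rref{lem:formulaCoincidence} and \rref{lem:programCoincidence}), and the bound effect property (\rref{lem:boundEffect}); no fresh induction is needed, and the hypothesis that the substitutions are defined is exactly what lets the semantic lemmas apply. The observation that glues everything together is: \emph{if two states agree off a single variable $\arbitraryVar$ and $\arbitraryVar \in \jointTaboo$ (respectively $\arbitraryVar \in \jointTaboo\cup\parallelCtx$), then each of them is a $\jointTaboo$-variation (respectively $(\jointTaboo\cup\parallelCtx)$-variation) of the other} in the sense of \rref{def:variation}, and trivially each state is such a variation of itself. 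Indeed, when $\arbitraryVar \in \jointTaboo$ we have $\jointTaboo^\complement \cap \V \subseteq \{\arbitraryVar\}^\complement$, and on $\{\arbitraryVar\}^\complement$ the two states agree on all their values, so in particular they assign the same trace to every trace variable and hence agree under every channel projection; this covers real, integer, and trace variables at once.

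For part~(1), fix $\arbitraryVar \in \V$ with $\arbitraryVar \notin \SFV(\phi)$ and $\arbitraryVar \notin \jointTaboo^\complement$; I show $\arbitraryVar \notin \SFV(\usTabooOp{}{\phi})$, which gives $\SFV(\usTabooOp{}{\phi}) \subseteq \SFV(\phi) \cup \jointTaboo^\complement$. Let $\inter$ be any interpretation and $\pstate{v}, \pstate[alt]{v}$ states with $\pstate{v} = \pstate[alt]{v}$ on $\{\arbitraryVar\}^\complement$; put $\varioOrigin = \pstate[alt]{v}$ and let $\interadj$ be the adjoint over $\varioOrigin$. By the observation, $\pstate{v}$ and $\pstate[alt]{v}$ are both $\jointTaboo$-variations of $\varioOrigin$, so \rref{lem:unisubs_fml_prog} yields $(\inter,\pstate{v}) \vDash \usTabooOp{}{\phi}$ iff $(\interadj,\pstate{v}) \vDash \phi$, and $(\inter,\pstate[alt]{v}) \vDash \usTabooOp{}{\phi}$ iff $(\interadj,\pstate[alt]{v}) \vDash \phi$. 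Since $\pstate{v}$ and $\pstate[alt]{v}$ agree on $\{\arbitraryVar\}^\complement \supseteq \SFV(\phi)$ under the same interpretation $\interadj$, coincidence for formulas (\rref{lem:formulaCoincidence}) gives $(\interadj,\pstate{v}) \vDash \phi$ iff $(\interadj,\pstate[alt]{v}) \vDash \phi$. Chaining the three equivalences, $\usTabooOp{}{\phi}$ has the same truth value at $\pstate{v}$ and $\pstate[alt]{v}$; as $\inter$ and the two states were arbitrary, $\arbitraryVar \notin \SFV(\usTabooOp{}{\phi})$ by the semantic definition of free variables.

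Part~(2) repeats this move with coincidence for programs (\rref{lem:programCoincidence}) in place of \rref{lem:formulaCoincidence}. Given $\arbitraryVar \notin \SFV(\alpha)$ and $\arbitraryVar \notin (\jointTaboo\cup\parallelCtx)^\complement$, take any $\inter$, states $\pstate{v} = \pstate[alt]{v}$ on $\{\arbitraryVar\}^\complement$, and $\computation \in \sem{\usInOutOp{}{}{\alpha}}{\inter}$; put $\varioOrigin = \pstate[alt]{v}$ with $\interadj$ the adjoint over it. Then $\pstate{v}$ is a $(\jointTaboo\cup\parallelCtx)$-variation of $\varioOrigin$, so \rref{lem:unisubs_fml_prog} gives $\computation \in \sem{\alpha}{\interadj}$; \rref{lem:programCoincidence} with $\varset = \{\arbitraryVar\}^\complement$ supplies $\computation[alt] = (\pstate[alt]{v},\trace[alt],\pstate[alt]{w}) \in \sem{\alpha}{\interadj}$ with $\trace[alt] = \trace$ and $\pstate{w} = \pstate[alt]{w}$ on $\{\arbitraryVar\}^\complement$; and since $\pstate[alt]{v}$ is (trivially) a $(\jointTaboo\cup\parallelCtx)$-variation of $\varioOrigin$, \rref{lem:unisubs_fml_prog} run backwards returns $\computation[alt] \in \sem{\usInOutOp{}{}{\alpha}}{\inter}$ --- exactly the witness the semantic definition of $\SFV(\cdot)$ for programs demands, so $\arbitraryVar \notin \SFV(\usInOutOp{}{}{\alpha})$. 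For part~(3), fix $\arbitraryVar \notin \SBV(\alpha)$ and any $\inter$ with $\computation \in \sem{\usInOutOp{}{}{\alpha}}{\inter}$ and $\pstate{w} \neq \bot$; with $\varioOrigin = \pstate{v}$ (the empty variation) and $\interadj$ its adjoint, \rref{lem:unisubs_fml_prog} gives $\computation \in \sem{\alpha}{\interadj}$, and the bound effect property (\rref{lem:boundEffect}) for $\alpha$ under $\interadj$ yields $\pstate{v} = \stconcat{\pstate{w}}{\trace}$ on $\SBV(\alpha)^\complement \ni \arbitraryVar$, hence $\pstate{v}(\arbitraryVar) = (\stconcat{\pstate{w}}{\trace})(\arbitraryVar)$; so $\arbitraryVar \notin \SBV(\usInOutOp{}{}{\alpha})$.

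The arguments are short, and the one point I expect to need care is the bookkeeping of the adjoint's base state: the semantic substitution lemmas compare $\usTabooOp{}{\cdot}$, respectively $\usInOutOp{}{}{\cdot}$, under $\inter$ against the original expression under the adjoint $\interadj$ taken over one fixed base state $\varioOrigin$, so $\varioOrigin$ must be chosen once (namely $\pstate[alt]{v}$ for parts~(1) and~(2), or $\pstate{v}$ for part~(3)) and both states in play must then be verified to be genuine variations of that base --- which is exactly where the opening observation, with its uniform treatment of trace variables and channel projections, enters. Everything else is just an unfolding of the semantic definitions of $\SFV$ and $\SBV$.
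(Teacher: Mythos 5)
Your proof is correct and is essentially the paper's argument: both hinge on the observation that two states differing only at a variable in $\jointTaboo$ (resp.\ $\jointTaboo\cup\parallelCtx$) are variations of one another, so that \rref{lem:unisubs_fml_prog} can be applied twice around a single adjoint base state. The only cosmetic difference is that the paper argues contrapositively from a witness of the semantic definitions of $\SFV$/$\SBV$, whereas you argue directly and route the middle step through the coincidence and bound-effect lemmas instead of the raw definitions --- the same argument in different clothing.
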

\begin{proof}
	For \rref{item:freeTermSubs}, let $\arbitraryVar \in \SFV(\usTabooOp{}{\phi})$ but assume $\arbitraryVar \not\in \jointTaboo^\complement$.
	By definition of $\SFV(\phi)$,
	there are $\inter$, $\pstate{v}$, and $\pstate[alt]{v}$ with $\pstate{v} = \pstate[alt]{v}$ on $\{\arbitraryVar\}^\complement$ and $\sem{\usTabooOp{}{\phi}}{\lstate{v}} \neq \sem{\usTabooOp{}{\phi}}{\lstate[opt]{v}}$.
	Since $\arbitraryVar \in \jointTaboo$,
	state $\pstate[alt]{v}$ is a $\jointTaboo$-variation of $\pstate{v}$.
	By \rref{lem:unisubs_fml_prog}, $\sem{\phi}{\lstateadj[v]{v}} \neq \sem{\phi}{\interadj[v] \pstate[alt]{v}}$.
	Therefore, $\arbitraryVar \in \SFV(\phi)$.

	For \rref{itm:freeProgSubs}, let $\arbitraryVar \in \SFV(\usInOutOp{}{}{\alpha})$ but assume $\arbitraryVar \not\in (\jointTaboo\cup\parallelCtx)^\complement$.
	By definition of $\SFV(\alpha)$, there are $\inter, \pstate{v}, \pstate[alt]{v}, \trace, \pstate{w}$ such that $\pstate{v} = \pstate[alt]{v}$ on $\{\arbitraryVar\}^\complement$ and $\computation \in \sem{\usInOutOp{}{}{\alpha}}{\inter}$ but there is no $\computation[alt] \in \sem{\usInOutOp{}{}{\alpha}}{\inter}$ such that $\trace[alt] = \trace$ and $\pstate[alt]{w} = \pstate{w}$ on $\{\arbitraryVar\}^\complement$.
	By \rref{lem:unisubs_fml_prog},  $\computation \in \sem{\alpha}{\interadj[v]}$.
	Since $x \in \jointTaboo\cup\parallelCtx$, 
	state $\pstate[alt]{v}$ is a $(\jointTaboo\cup\parallelCtx)$-variation of $\pstate{v}$.
	Hence, there is no $\computation[alt] \in \sem{\alpha}{\interadj[v]}$ such that $\trace[alt] = \trace$ and $\pstate[alt]{w} = \pstate{w}$ on $\{\arbitraryVar\}^\complement$ because otherwise, $\computation[alt] \in \sem{\usInOutOp{}{}{\alpha}}{\inter}$ by \rref{lem:unisubs_fml_prog}.
	In summary, $\arbitraryVar \in \SFV(\alpha)$.

	For \rref{itm:boundSubs}, let $\arbitraryVar \in \SBV(\usInOutOp{}{}{\alpha})$.
	Then $\inter$ and $\computation \in \sem{\usInOutOp{}{}{\alpha}}{\inter}$ exist such that $(\stconcat{\pstate{w}}{\trace})(\arbitraryVar) \neq \pstate{v}(\arbitraryVar)$.
	By \rref{lem:unisubs_fml_prog}, $\computation \in \sem{\alpha}{\interadj[v]}$.
	Hence, $\arbitraryVar \in \SBV(\alpha)$.
	\qedhere
\end{proof}

\begin{lemma} \label{lem:unisubs_parallel_ctx}
	If a program $\alpha$ respects the parallel context $\parallelCtx$,
	\iest $\SV(\alpha) \cap \parallelCtx \subseteq \{\globalTime,\globalTime'\} \cup \TVar$,
	the result of substitution $\usInOutOp{}{}{\alpha}$, if defined, respects the context~$\parallelCtx$, too,
	\iest $\SV(\usInOutOp{}{}{\alpha}) \cap \parallelCtx \subseteq \{\globalTime, \globalTime'\} \cup \TVar$.
\end{lemma}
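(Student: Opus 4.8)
The plan is to reduce the claim almost immediately to the syntactic bounds already established in \rref{lem:freeAndBoundSubs}, so that no fresh induction on the structure of $\alpha$ is needed. \rref{lem:freeAndBoundSubs} supplies two facts about the substitution result: its bound variables cannot exceed those of $\alpha$, namely $\SBV(\usInOutOp{}{}{\alpha}) \subseteq \SBV(\alpha)$, and its free variables are confined to $\SFV(\alpha) \cup (\jointTaboo \cup \parallelCtx)^\complement$, so the only free variables the substitution can newly introduce lie outside the combined taboo--context. Writing $\SV(\usInOutOp{}{}{\alpha}) = \SFV(\usInOutOp{}{}{\alpha}) \cup \SBV(\usInOutOp{}{}{\alpha})$ and combining the two bounds gives $\SV(\usInOutOp{}{}{\alpha}) \subseteq \SV(\alpha) \cup (\jointTaboo \cup \parallelCtx)^\complement$, using $\SFV(\alpha), \SBV(\alpha) \subseteq \SV(\alpha)$.

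The remaining step is purely set-algebraic: intersect with $\parallelCtx$. Since $(\jointTaboo \cup \parallelCtx)^\complement \cap \parallelCtx \subseteq \parallelCtx^\complement \cap \parallelCtx = \emptyset$, the freshly introduced free variables drop out entirely, leaving $\SV(\usInOutOp{}{}{\alpha}) \cap \parallelCtx \subseteq \SV(\alpha) \cap \parallelCtx$. Invoking the hypothesis that $\alpha$ respects the parallel context, i.e.\ $\SV(\alpha) \cap \parallelCtx \subseteq \{\globalTime, \globalTime'\} \cup \TVar$, then yields the desired $\SV(\usInOutOp{}{}{\alpha}) \cap \parallelCtx \subseteq \{\globalTime, \globalTime'\} \cup \TVar$.

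Because the argument is this short, there is no genuine obstacle in this lemma itself; the real work lives in \rref{lem:freeAndBoundSubs}, whose proof in turn leans on the semantic substitution lemma \rref{lem:unisubs_fml_prog} together with the variation machinery, and which already absorbs the recursive reshuffling of the parallel context across subprograms (the $\parCtxOp{\cdot}$ operator appearing in the $\parOp$ and $\seq$ clauses of \rref{fig:unisubs}). The one point deserving a moment of care is the bookkeeping: making sure the $\parallelCtx$ named in the hypothesis is exactly the parallel-context parameter of the invocation $\usInOutOp{}{}{\alpha}$ (suppressed in the notation), so that \rref{lem:freeAndBoundSubs} is applied with that same $\parallelCtx$. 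I would also remark that this lemma is precisely what \rref{prop:unisubs_well_formed} needs in order to re-establish, after substitution, the state-disjointness side condition of a parallel composition $\alpha \parOp \beta$.
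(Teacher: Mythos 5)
Your proof is correct and matches the paper's own argument essentially verbatim: both derive $\SV(\usInOutOp{}{}{\alpha}) \cap \parallelCtx \subseteq (\SV(\alpha) \cup (\jointTaboo\cup\parallelCtx)^\complement) \cap \parallelCtx = \SV(\alpha) \cap \parallelCtx$ from \rref{lem:freeAndBoundSubs} and then apply the hypothesis. No further comment is needed.
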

\begin{proof}
	Let $\SV(\alpha) \cap \parallelCtx \subseteq \{\globalTime, \globalTime'\} \cup \TVar$.
	Then $\SV(\usInOutOp{}{}{\alpha}) \cap \parallelCtx = (\SFV(\usInOutOp{}{}{\alpha}) \cup \SBV(\usInOutOp{}{}{\alpha})) \cap \parallelCtx$, which is by \rref{lem:freeAndBoundSubs} smaller or equal to $(\SFV(\alpha) \cup (\jointTaboo\cup\parallelCtx)^\complement \cup \SBV(\alpha)) \cap \parallelCtx = (\SFV(\alpha) \cup \SBV(\alpha)) \cap \parallelCtx = \SV(\alpha) \cap \parallelCtx$,
	which is by the premise smaller or equal to $\{\globalTime, \globalTime'\} \cup \TVar$.
	\qedhere
\end{proof}

The proof of \rref{prop:unisubs_well_formed} shows that uniform substitution (see \rref{fig:unisubs}) already preserves the stronger well-formedness condition $(\SV(\alpha) \cap \SBV(\beta)) \cup (\SV(\beta) \cap \SBV(\alpha)) \subseteq \{\globalTime, \globalTime'\} \cup \TVar$ for programs $\alpha, \beta$ from previous work \citeDLCHP.
Importantly, note that the weaker well-formedness condition $\SBV(\alpha) \cap \SBV(\beta) \subseteq \{\globalTime, \globalTime'\} \cup \TVar$ as imposed on programs in this paper,
is still sufficient to preserve itself and that the proofs of \rref{lem:unisubs_parallel_ctx} and \rref{prop:unisubs_well_formed} can be adjusted accordingly.

\begin{proof}[of \rref{prop:unisubs_well_formed}]
	The proof is by simultaneous induction on the structure of programs and formulas.
	It uses the abbreviation $\Gshared = \{\globalTime, \globalTime'\} \cup \TVar$.
	
	The only program with context-sensitive syntax is parallel composition, thus atomic programs are trivially well-formed and other compound programs are well-formed since by IH, their subprograms are well-formed.
	
	Now, let $\alpha \parOp \beta$ be well-formed.
	Then $(\SV(\alpha) \cap \SBV(\beta)) \cup (\SV(\beta) \cap \SBV(\alpha)) \subseteq \Gshared$.
	Moreover, let $\usInOutOp{}{}{\alpha \parOp \beta}$ be defined.
	Thus, 
	$\usInOutOp{\jointTaboo, \parCtxOp{\beta}}{\jointOut_1}{\alpha}$ and 
	$\usInOutOp{\jointTaboo, \parCtxOp{\alpha}}{\jointOut_2}{\beta}$, 
	where $\parCtxOp{\alpha} = \parCtxOpExpanded[\Gshared]{\alpha}$
	and $\parCtxOp{\beta} = \parCtxOpExpanded[\Gshared]{\beta}$,
	are defined, and by IH, they are well-formed.
	By \rref{lem:unisubs_parallel_ctx},
	$\Gshared \supseteq \SV(\usInOutOp{\jointTaboo, \parCtxOp{\beta}}{\jointOut_1}{\alpha}) \cap \parCtxOp{\beta} 
	\supseteq \SV(\usInOutOp{\jointTaboo, \parCtxOp{\beta}}{\jointOut_1}{\alpha}) \cap (\SBV(\usInOutOp{}{{}}{\beta}) \setminus \Gshared)$,
	which equals $\SV(\usInOutOp{\jointTaboo, \parCtxOp{\beta}}{\jointOut_1}{\alpha}) \cap (\SBV(\usInOutOp{\jointTaboo, \parCtxOp{\alpha}}{{}}{\beta}) \setminus \Gshared)$ since the parallel context does not influence the substitution result if it is defined.
	Thus, $\Gshared \supseteq \SV(\usInOutOp{\jointTaboo, \parCtxOp{\beta}}{\jointOut_1}{\alpha}) \cap \SBV(\usInOutOp{\jointTaboo, \parCtxOp{\alpha}}{{}}{\beta})$.
	Accordingly, $\Gshared \supseteq \SV(\usInOutOp{\jointTaboo, \parCtxOp{\alpha}}{\jointOut_2}{\beta}) \cap \SBV(\usInOutOp{\jointTaboo, \parCtxOp{\beta}}{{}}{\alpha})$ by \rref{lem:unisubs_parallel_ctx}.
	Since the context-sensitive side conditions are respected, $\usInOutOp{}{}{\alpha \parOp \beta}$ is well-formed as parallel composition of well-formed programs.

	For formulas, the ac-box is the only interesting case.
	Let $[ \alpha ] \ac \psi$ be well-formed.
	Then $(\SFV(\A) \cup \SFV(\C)) \cap \SBV(\alpha) \subseteq \TVar$.
	Moreover, let $\usTabooOp{}{[ \alpha ] \ac \psi}$ be defined.
	For $\chi \in \{\A, \C\}$,
	we obtain by \rref{lem:freeAndBoundSubs} that $\SFV(\usTabooOp{\jointOut}{\chi}) \cap \SBV(\usInOutOp{\jointTaboo, \emptyset}{}{\alpha}) \subseteq (\SFV(\chi) \cup \jointOut^\complement) \cap \SBV(\usInOutOp{\jointTaboo, \emptyset}{}{\alpha})$,
	which is by \rref{lem:unisubs_correct_bound} smaller or equal to
	\begin{equation*}
		\Big( \SFV(\chi) \cup \big( \jointTaboo \cup \SBV(\usInOutOp{\jointTaboo, \emptyset}{}{\alpha}) \cup \SCN(\usInOutOp{\jointTaboo, \emptyset}{}{\alpha}) \big)^\complement \Big) \cap \SBV(\usInOutOp{\jointTaboo, \emptyset}{}{\alpha}) \text{,}
	\end{equation*}
	which equals $\SFV(\chi) \cap \SBV(\usInOutOp{\jointTaboo, \emptyset}{}{\alpha})$,
	which is by \rref{lem:freeAndBoundSubs} smaller or equal to $\SFV(\chi) \cap \SBV(\alpha)$,
	which is smaller or equal to $\TVar$ by the premise.
	\qedhere
\end{proof}

\section{Details of the Axiomatic Calculus} \label{app:calculus}

This appendix reports a soundness proof for \dLCHP's axiomatization (see \rref{fig:calculus}).
Moreover, \rref{cor:derivations} gives derivations of ac-monotonicity \RuleName{acMono} and distribution of boxes over conjuncts \RuleName{acBoxesDist}.
Finally, algebraic laws for reasoning about trace terms \citeDLCHP are lifted to uniform substitution.

\begin{proof}[of \rref{thm:soundness}]
	Since the axioms of \rref{fig:calculus} are instances of their schematic versions (except for \RuleName{acModalMP} and \RuleName{assumptionWeak}),
	they are sound as the schematic axioms are sound \citeDLCHP.
	In particular, note that axioms \RuleName{acNoCom} and \RuleName{acDropComp} internalize the side conditions of the schematic calculus correctly.
	For the newly added axiom \RuleName{acModalMP} and the changed axiom \RuleName{assumptionWeak}, 
	soundness proofs are provided below.
	Recall that $\Asymb_j \equiv \asymb_j(\cset, \hvarvec)$, and $\Csymb_j \equiv \csymb_j(\cset, \hvarvec)$, and $\Psymb_j \equiv \psymb_j(\cset, \varvec)$.%

	\begin{itemize}[leftmargin=1em, itemindent=-1em]
		\itemsep.5em
		\item[] \RuleName{acModalMP}:
		Let $\lstate{v} \vDash [ \pconst ] \acpair{\Asymb, \Csymb_1 \rightarrow \Csymb_2}(\Psymb_1 \rightarrow \Psymb_2)$,
		and $\lstate{v} \vDash [ \pconst ] \acpair{\Asymb, \Csymb_1} \Psymb_1$,
		and let $\computation \in \sem{\pconst}{\inter}$.
		For \acCommit, assume $\assCommit{\lstate{v}}{\trace} \vDash \Asymb$.
		Then $\stconcat{\lstate{v}}{\trace} \vDash \Csymb_1 \rightarrow \Csymb_2$ and $\stconcat{\lstate{v}}{\trace} \vDash \Csymb_1$, 
		thus $\stconcat{\lstate{v}}{\trace} \vDash \Csymb_2$.
		For \acPost, assume $\pstate{w} \neq \bot$ and $\assPost{\lstate{v}}{\trace} \vDash \Asymb$.
		Then $\stconcat{\lstate{v}}{\trace} \vDash \Psymb_1 \rightarrow \Psymb_2$ and $\stconcat{\lstate{v}}{\trace} \vDash \Psymb_1$ such that $\stconcat{\lstate{v}}{\trace} \vDash \Psymb_2$.

		\item[] \RuleName{assumptionWeak}:
		Let $\lstate{v} \vDash [ \pconst ] \acpair{\true, \compCondition} \true$, where $\compCondition \equiv \compConditionExpanded$,
		and let $\lstate{v} \vDash \Phi$ for $\Phi \equiv [ \pconst ] \acpair{\Asymb_1 \wedge \Asymb_2, \Csymb_1 \wedge \Csymb_2} \Psymb$ and $\computation \in \sem{\pconst}{\inter}$.
		By induction on the length of~$\trace$,
		we simultaneously prove $\lstate{v} \vDash [ \pconst ] \acpair{\Asymb, \Csymb_1 \wedge \Csymb_2} \Psymb$, and that $\assCommit{\lstate{v}}{\trace} \vDash \Asymb$ implies $\assCommit{\lstate{v}}{\trace} \vDash \Asymb_1 \wedge \Asymb_2$:
		
		\begin{enumerate}
			\item $\semLen{\trace} = 0$, 
			then $\lstate{v} \vDash \Phi$ implies $\lstate{v} \vDash \Csymb_1 \wedge \Csymb_2$ by axiom \RuleName{acWeak}.
			Hence, \acCommit holds because $\trace = \epsilon$.
			For \acPost, assume $\pstate{w} \neq \bot$ and $\assPost{\lstate{v}}{\trace} \vDash \Asymb$. 
			By $\lstate{v} \vDash [ \pconst ] \acpair{\true, \compCondition} \true$ and \RuleName{acWeak} again, we obtain $\lstate{v} \vDash \compCondition$.
			Then $\assPost{\lstate{v}}{\trace} \vDash \Asymb_1 \wedge \Asymb_2$ because $\lstate{v} \vDash \compCondition$, and $\lstate{v} \vDash \Asymb$, and $\lstate{v} \vDash \Csymb_1 \wedge \Csymb_2$.
			Thus, $\stconcat{\lstate{w}}{\trace} \vDash \Psymb$ by the premise $\lstate{v} \vDash \Phi$. 
			Note that $\assCommit{\lstate{v}}{\trace} \vDash \Asymb_1 \wedge \Asymb_2$ is trivially fulfilled since $\assCommit{\lstate{v}}{\trace} = \emptyset$.
			
			\item $\semLen{\trace} > 0$,
			then $\trace = \trace_0 \cdot \rawtrace$ with $\semLen{\rawtrace} = 1$.
			For \acCommit, assume $\assCommit{\lstate{v}}{\trace} \vDash \Asymb$.
			Then $\assPost{\lstate{v}}{\trace_0} \vDash \Asymb$, which implies $\assCommit{\lstate{v}}{\trace_0} \vDash \Asymb_1 \wedge \Asymb_2$ by IH.
			Since $(v, \trace_0, \bot) \in \sem{\pconst}{\inter}$ by prefix-closedness of the program semantics,
			we obtain $\stconcat{\lstate{v}}{\trace_0} \vDash \Csymb_1 \wedge \Csymb_2$ by $\lstate{v} \vDash \Phi$.
			Moreover, $\stconcat{\lstate{v}}{\trace_0} \vDash \compCondition$ by $\lstate{v} \vDash [ \pconst ] \acpair{\true, \compCondition} \true$.
			Hence, $\stconcat{\lstate{v}}{\trace_0} \vDash \Asymb_1 \wedge \Asymb_2$ since $\stconcat{\lstate{v}}{\trace_0} \vDash \compCondition$, and $\stconcat{\lstate{v}}{\trace_0} \vDash \Asymb$,
			and $\stconcat{\lstate{v}}{\trace_0} \vDash \Csymb_1 \wedge \Csymb_2$.
			Thus, $\assCommit{\lstate{v}}{\trace} \vDash \Asymb_1 \wedge \Asymb_2$. 
			Finally, $\stconcat{\lstate{v}}{\trace} \vDash \Csymb_1 \wedge \Csymb_2$ by $\lstate{v} \vDash \Phi$ again. 
			For \acPost, assume $\pstate{w} \neq \bot$ and $\assPost{\lstate{v}}{\trace} \vDash \Asymb$. 
			Then $\assCommit{\lstate{v}}{\trace} \vDash \Asymb$, which implies $\assCommit{\lstate{v}}{\trace} \vDash \Asymb_1 \wedge \Asymb_2$ and $\stconcat{\lstate{v}}{\trace} \vDash \Csymb_1 \wedge \Csymb_2$ as in case \acCommit. 
			By $\computation \in \sem{\pconst}{\inter}$ and $\lstate{v} \vDash [ \pconst ] \acpair{\true, \compCondition} \true$, we obtain $\stconcat{\lstate{v}}{\trace} \vDash \compCondition$, which implies using $\stconcat{\lstate{v}}{\trace} \vDash \Asymb$ and $\stconcat{\lstate{v}}{\trace} \vDash \Csymb_1 \wedge \Csymb_2$ that $\stconcat{\lstate{v}}{\trace} \vDash \Asymb_1 \wedge \Asymb_2$. 
			In summary, $\assPost{\lstate{v}}{\trace} \vDash \Asymb_1 \wedge \Asymb_2$, which implies $\stconcat{\lstate{w}}{\trace} \vDash \Psymb$ by $\lstate{v} \vDash \Phi$.
		\end{enumerate}
	\end{itemize}
\end{proof}

\begin{corollary}
	\label{cor:derivations}
	The proof rule \RuleName{acMono} of ac-monotonicity and the axiom \RuleName{acBoxesDist} of distribution of conjuncts in commitments and postconditions over boxes \citeDLCHP can be derived.
	Let $\Asymb_j \equiv \asymb_j(\cset, \hvarvec)$, and $\Csymb_j \equiv \csymb_j(\cset, \hvarvec)$, and $\Psymb_j \equiv \psymb_j(\cset, \varvec)$.%
	\vspace*{-1em}%
	\begin{prooftree}
		\Axiom{$\Asymb_2 \rightarrow \Asymb_1$}

		\Axiom{$\Csymb_1 \rightarrow \Csymb_2$}

		\Axiom{$\Psymb_1 \rightarrow \Psymb_2$}

		\RuleNameRight{acMono}
		\TrinaryInf{$[ \pconst ] \acpair{\Asymb_1, \Csymb_1} \Psymb_1 \rightarrow [ \pconst ] \acpair{\Asymb_2, \Csymb_2} \Psymb_2$}
	\end{prooftree}

	\begin{equation*}
		[ \pconst ] \acpair{\Asymb, \Csymb_1 \wedge \Csymb_2} (\Psymb_1 \wedge \Psymb_2) \leftrightarrow [ \pconst ] \acpair{\Asymb, \Csymb_1} \Psymb_1 \wedge [ \pconst ] \acpair{\Asymb, \Csymb_2} \Psymb_2 \quad\text{\RuleName{acBoxesDist}}
	\end{equation*}
\end{corollary}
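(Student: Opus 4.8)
The plan is to derive \RuleName{acMono} and \RuleName{acBoxesDist} inside the calculus of \rref{fig:calculus} by the ``K$+$G'' pattern familiar from \dL, with the assumption slot handled separately; the only tools needed are \RuleName{acModalMP} (the K-style distribution axiom), \RuleName{acG} (G\"odel generalization), \RuleName{acWeak}, \RuleName{MP}, and propositional reasoning.

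For \RuleName{acMono} I would split the goal into monotonicity in the commitment and postcondition with the assumption held fixed, followed by antitonicity in the assumption with commitment and postcondition held fixed; chaining the two implications yields the rule. The first half is routine: from the premises $\Csymb_1 \rightarrow \Csymb_2$ and $\Psymb_1 \rightarrow \Psymb_2$ the formula $(\Csymb_1 \rightarrow \Csymb_2) \wedge (\Psymb_1 \rightarrow \Psymb_2)$ is propositionally valid, so \RuleName{acG} supplies $[ \pconst ] \acpair{\Asymb_2, \Csymb_1 \rightarrow \Csymb_2}(\Psymb_1 \rightarrow \Psymb_2)$, and \RuleName{acModalMP} with \RuleName{MP} then upgrades $[ \pconst ] \acpair{\Asymb_2, \Csymb_1}\Psymb_1$ to $[ \pconst ] \acpair{\Asymb_2, \Csymb_2}\Psymb_2$. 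The second half is the delicate one: neither \RuleName{acModalMP} nor \RuleName{acG} ever touches the assumption slot, and, as the running text observes, the assumption cannot simply be discharged through \RuleName{acG} without breaking derivations such as $[ \alpha ] \acpair{\false, \true}\true$. The plan is therefore to rewrite $[ \pconst ] \acpair{\Asymb_1, \Csymb_1}\Psymb_1$ via \RuleName{acWeak} into $\Csymb_1 \wedge [ \pconst ] \acpair{\Asymb_1, \Csymb_1}(\Csymb_1 \wedge (\Asymb_1 \rightarrow \Psymb_1))$, i.e.\ to \emph{internalize} the assumption into the post/commitment formula as an implication $\Asymb_1 \rightarrow \cdot$, use the premise $\Asymb_2 \rightarrow \Asymb_1$ propositionally to pass to $\Asymb_2 \rightarrow \cdot$, move this through the box by \RuleName{acModalMP}, and fold the box back up with \RuleName{acWeak} (the initial-state conjunct $\Csymb_1$ that the fold-back needs is delivered by the commitment clause of the hypothesis at the empty trace). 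Any residual mismatch that survives inside the box is closed with \RuleName{assumptionWeak}, whose compositionality side condition $\compCondition$ is valid here precisely because $\Asymb_2 \rightarrow \Asymb_1$, hence dischargeable by \RuleName{acG}.

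For \RuleName{acBoxesDist} both directions then follow. The ``$\leftarrow$'' direction is the K$+$G conjunction argument: from $[ \pconst ] \acpair{\Asymb, \Csymb_1}\Psymb_1$ and $[ \pconst ] \acpair{\Asymb, \Csymb_2}\Psymb_2$, apply \RuleName{acG} to the propositional tautology $\bigl(\Csymb_1 \rightarrow (\Csymb_2 \rightarrow (\Csymb_1 \wedge \Csymb_2))\bigr) \wedge \bigl(\Psymb_1 \rightarrow (\Psymb_2 \rightarrow (\Psymb_1 \wedge \Psymb_2))\bigr)$ and then two rounds of \RuleName{acModalMP} with \RuleName{MP}, pushing the first box through and then the second, to land on $[ \pconst ] \acpair{\Asymb, \Csymb_1 \wedge \Csymb_2}(\Psymb_1 \wedge \Psymb_2)$. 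The ``$\rightarrow$'' direction is immediate from the just-derived \RuleName{acMono}, instantiated twice with equal assumptions to weaken the commitment $\Csymb_1 \wedge \Csymb_2$ to each conjunct $\Csymb_i$ and the postcondition $\Psymb_1 \wedge \Psymb_2$ to each $\Psymb_i$, followed by a propositional conjunction of the two resulting implications.

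I expect the main obstacle to be the assumption-antitonicity step of \RuleName{acMono}: getting from a box with assumption $\Asymb_1$ to one with the strictly stronger $\Asymb_2$ with no modal rule that can alter the assumption, which forces the roundabout detour through \RuleName{acWeak}'s internalized form and \RuleName{assumptionWeak}, all while carefully tracking that the $\Csymb$-conjuncts repeatedly emitted by \RuleName{acWeak} are propositionally absorbed (since $\Csymb \wedge (\Asymb \rightarrow (\Csymb \wedge \psi))$ collapses to $\Csymb \wedge (\Asymb \rightarrow \psi)$) and that no circularity with the strict-prefix treatment of the assumption in the commitment clause is introduced.
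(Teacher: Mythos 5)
Your proposal matches the paper's derivation in all essentials: commitment/postcondition monotonicity via \RuleName{acG} plus \RuleName{acModalMP}, assumption antitonicity via \RuleName{assumptionWeak} with its compositionality condition discharged by \RuleName{acG} from the premise $\Asymb_2 \rightarrow \Asymb_1$, and \RuleName{acBoxesDist} obtained by currying together with two rounds of the monotonicity machinery in one direction and two instances of \RuleName{acMono} in the other. The one caveat is that your \RuleName{acWeak}-internalization detour is a dead end on its own---the inner box produced by \RuleName{acWeak} still carries the original assumption $\Asymb_1$, so propositionally weakening $\Asymb_1 \rightarrow \Psymb_1$ to $\Asymb_2 \rightarrow \Psymb_1$ inside it and ``folding back'' can only reproduce a box with assumption $\Asymb_1$---but since you ultimately route the assumption change through \RuleName{assumptionWeak}, exactly as the paper does (which skips \RuleName{acWeak} entirely and just pads with $\true$ via \RuleName{trueN} before applying \RuleName{assumptionWeak}), the derivation goes through.
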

\begin{proof}
	The proof is by derivation in the calculus.
	The sequent-style deduction is justified since sequent-style rules can be derived in a Hoare-style calculus.

	\begin{itemize}[leftmargin=1em, itemindent=-1em]
		\itemsep.5em
		\item[] \RuleName{acMono}:
		Let $\psi \equiv (\Asymb_2 \wedge \Csymb_2 \rightarrow \true) \wedge (\Asymb_2 \wedge \true \rightarrow \Asymb_1)$ in the following,
		where \RuleName{prop} marks propositional reasoning.%
		\vspace*{-1em}%
		\begin{prooftree}
			\Axiom{$\Asymb_2 \rightarrow \Asymb_1$}
			
			\RuleNameLeft{prop}
			\UnaryInf{$\true \wedge \psi$}
		
			\RuleNameLeft{acG}
			\UnaryInf{$[ \pconst ] \acpair{\true, \psi} \true$}
		
			\RuleNameLeft{WL}
			\UnaryInf{$[ \pconst ] \acpair{\Asymb_1, \Csymb_1} \Psymb_1 \rightarrow [ \pconst ] \acpair{\true, \psi} \true$}
		
			\Axiom{$\Csymb_1 \rightarrow \Csymb_2$}
		
			\Axiom{$\Psymb_1 \rightarrow \Psymb_2$}
		
			\RuleNameRight{andR}
			\BinaryInf{$(\Csymb_1 \rightarrow \Csymb_2) \wedge (\Psymb_1 \rightarrow \Psymb_2)$}
		
			\RuleNameRight{acG}
			\UnaryInf{$[ \pconst ] \acpair{\Asymb_1, \Csymb_1 \rightarrow \Csymb_2} (\Psymb_1 \rightarrow \Psymb_2)$}
		
			\RuleNameRight{acModalMP}
			\UnaryInf{$[ \pconst ] \acpair{\Asymb_1, \Csymb_1} \Psymb_1 \rightarrow [ \pconst ] \acpair{\Asymb_1, \Csymb_2} \Psymb_2$}
		
			\RuleNameRight{trueN}
			\UnaryInf{$[ \pconst ] \acpair{\Asymb_1, \Csymb_1} \Psymb_1 \rightarrow [ \pconst ] \acpair{\Asymb_1 \wedge \true, \Csymb_2 \wedge \true} \Psymb_2$}
		
			\RuleNameRight{andR}
			\BinaryInf{$[ \pconst ] \acpair{\Asymb_1, \Csymb_1} \Psymb_1 \rightarrow [ \pconst ] \acpair{\true, \psi} \true \wedge [ \pconst ] \acpair{\Asymb_1 \wedge \true, \Csymb_2 \wedge \true} \Psymb_2$}
		
			\RuleNameRight{assumptionWeak}
			\UnaryInf{$[ \pconst ] \acpair{\Asymb_1, \Csymb_1} \Psymb_1 \rightarrow [ \pconst ] \acpair{\Asymb_2, \Csymb_2 \wedge \true} \Psymb_2$}
		
			\RuleNameRight{trueN}
			\UnaryInf{$[ \pconst ] \acpair{\Asymb_1, \Csymb_1} \Psymb_1 \rightarrow [ \pconst ] \acpair{\Asymb_2, \Csymb_2} \Psymb_2$}
		\end{prooftree}
		
		\item[] \RuleName{acBoxesDist}:
		The implication ($\rightarrow$) can be easily derived using rule \RuleName{acMono}.
		The other direction is derived below,
		where \RuleName{prop} marks propositional reasoning.
		The proof uses currying \RuleName{curry}, which can can be easily derived by \RuleName{prop}.%
		\vspace*{-1em}%
		\begin{prooftree}
			\Axiom{$*$}

			\RuleNameLeft{prop}
			\UnaryInf{$\Csymb_1 \rightarrow (\Csymb_2 \rightarrow \Csymb_1 \wedge \Csymb_2)$}

			\Axiom{$*$}

			\RuleNameRight{prop}
			\UnaryInf{$\Psymb_1 \rightarrow (\Psymb_2 \rightarrow \Psymb_1 \wedge \Psymb_2)$}

			\RuleNameRight{acMono}
			\BinaryInf{$[ \pconst ] \acpair{\Asymb, \Csymb_1} \Psymb_1 \rightarrow [ \pconst ] \acpair{\Asymb, \Csymb_2 \rightarrow \Csymb_1 \wedge \Csymb_2} (\Psymb_2 \rightarrow \Psymb_1 \wedge \Psymb_2)$}

			\RuleNameRight{acModalMP}
			\UnaryInf{$[ \pconst ] \acpair{\Asymb, \Csymb_1} \Psymb_1 \rightarrow ( [ \pconst ] \acpair{\Asymb, \Csymb_2} \Psymb_2 \rightarrow [ \pconst ] \acpair{\Asymb, \Csymb_1 \wedge \Csymb_2} (\Psymb_1 \wedge \Psymb_2) )$}

			\RuleNameRight{curry}
			\UnaryInf{$[ \pconst ] \acpair{\Asymb, \Csymb_1} \Psymb_1 \wedge [ \pconst ] \acpair{\Asymb, \Csymb_2} \Psymb_2 \rightarrow [ \pconst ] \acpair{\Asymb, \Csymb_1 \wedge \Csymb_2} (\Psymb_1 \wedge \Psymb_2)$}
		\end{prooftree}
		\qedhere
	\end{itemize}
\end{proof}

\subsubsection{Algebra of Traces} 

\rref{fig:traceAlgebra} gives simple algebraic laws for step-wise simplification of trace terms.
In contrast to the schematic algebra of traces in our previous report \citeDLCHP,
the laws in \rref{fig:traceAlgebra} are flat axioms without side conditions.
The axioms use that a symbolic representation of (co)finite sets can be given and finitely axiomatized, especially in axiom \RuleName{projCut}, axiom \RuleName{projIn}, and axiom \RuleName{projNotIn}.

\renewcommand{\rulesep}{\\[.4em]}
\begin{figure}[h]
	\begin{minipage}{\textwidth}
		\begin{calculus}
			\startAxiom{concatDist}
				$(\fsymb[trace] \cdot \fsymb[trace, alt]) \downarrow \cset = \fsymb[trace] \downarrow \cset \cdot \fsymb[trace, alt] \downarrow \cset$
			\stopAxiom
			\startAxiom{projCut}
				$(\fsymb[trace] \downarrow \cset') \downarrow \cset = \fsymb[trace] \downarrow (\cset' \cap \cset)$
			\stopAxiom
			\startAxiom{projNeutral}
				$\epsilon \downarrow \cset = \epsilon$
			\stopAxiom
			\startAxiom{val}
				$\val{\comItem{\ch{}, \fsymb[real], \fsymb[real, alt]}} = \fsymb[real]$
			\stopAxiom
			\startAxiom{time}
				$\stamp{\comItem{\ch{}, \fsymb[real], \fsymb[real, alt]}} = \fsymb[real, alt]$
			\stopAxiom
			\startAxiom{chan}
				$\chan{\comItem{\ch{}, \fsymb[real], \fsymb[real, alt]}} = \ch{}$
			\stopAxiom
		\end{calculus}
		\begin{calculus}
			\startAxiom{concatAssoc}
				$(\fsymb[trace]_1 \cdot \fsymb[trace]_2) \cdot \fsymb[trace]_3 = \fsymb[trace]_1 \cdot (\fsymb[trace]_2 \cdot \fsymb[trace]_3)$
			\stopAxiom
			\startAxiom{concatNeutral}
				$\fsymb[trace] \cdot \epsilon = \fsymb[trace] = \epsilon \cdot \fsymb[trace]$
			\stopAxiom
			\startAxiom{projIn}
				$\ch{} \in \cset \rightarrow \comItem{\ch{}, \fsymb[real], \fsymb[real, alt]} \downarrow \cset = \comItem{\ch{}, \fsymb[real], \fsymb[real, alt]}$
			\stopAxiom
			\startAxiom{projNotIn}
				$ch{} \not\in \cset \rightarrow \comItem{\ch{}, \fsymb[real], \fsymb[real, alt]} \downarrow \cset = \epsilon$
			\stopAxiom
			\startAxiom{nonNegative}
				$\len{\fsymb[trace]} \ge 0$
			\stopAxiom
			\startAxiom{unroll}
				$\len{\fsymb[trace] \cdot \comItem{\ch{}, \fsymb[real], \fsymb[real, alt]}} = \len{\fsymb[trace]} + 1$
			\stopAxiom
		\end{calculus}
		
		\vspace{.5em}
		\begin{calculus}
			\startAxiom{accessBase}
				$\len{\fsymb[trace]} = \fsymb[int] \rightarrow \at{(\fsymb[trace] \cdot \comItem{\ch{}, \fsymb[real], \fsymb[real, alt]})}{\fsymb[int]} = \comItem{\ch{}, \fsymb[real], \fsymb[real, alt]}$
			\stopAxiom
			\startAxiom{accessInd}
				$\len{\fsymb[trace]} > \fsymb[int] \rightarrow \at{(\fsymb[trace] \cdot \comItem{\ch{}, \fsymb[real], \fsymb[real, alt]})}{\fsymb[int]} = \at{\fsymb[trace]}{\fsymb[int]}$
			\stopAxiom
		\end{calculus}
	\end{minipage}
	\caption{Axiomatic algebra of traces}
	\vspace*{-2em}
	\label{fig:traceAlgebra}
\end{figure}

\subsubsection{Axiomatization of (Co)finite Sets}

Strictly speaking, the calculus in \rref{fig:calculus} still has schematic occurrences of (co)finite sets.
As suggested by \rref{rem:finite},
this is easily fixed using symbolic (co)finite sets together with a non-schematic axiomatization.
The class $C(\setAtoms_1, \ldots, \setAtoms_n)$ of (co)finite sets over the (co)finite sets $\setAtoms_1, \ldots, \setAtoms_n$ of atoms,
has the following syntax
\begin{equation*}
	\setsymb_1, \setsymb_2 \cceq \{ e_i \} \mid \botset \mid \topset^1 \mid \ldots \mid \topset^n \mid \setsymb_1 \cap \setsymb_2 \mid \setsymb_1 \setminus \setsymb_2 \text{,}
\end{equation*}
where $\expr_i \in \setAtoms_i$ is any atom for any $1 \le i \le n$,
symbol $\botset$ represents the empty set,
$\topset^i$ represents all atoms of $\setAtoms_i$ for $1 \le i \le n$, 
and~$\cap$ and $\setminus$ are set intersection and set difference, respectively.
Other operators like union $\cup$ can be defined.
The class occurring in \dLCHP is $C(\Chan, \RVar, \NVar, \TVar)$.

For $C(\setAtoms_1, \ldots, \setAtoms_n)$, 
any $C(\subAtoms_1, \ldots, \subAtoms_k)$ with $\{ \subAtoms_1, \ldots, \subAtoms_k \} \subseteq \{ \setAtoms_1, \ldots \setAtoms_n \}$
forms a boolean algebra with binary operator $\cap$,
neutral element $\topset = \topset^1 \cup \ldots \cup \topset^k$ \wrt $\cap$,
where $\topset^i$ represents $\subAtoms_i$,
and the unary operation $\topset \setminus \usarg$ of parameter~$\usarg$.
Laws for the boolean algebra can be adopted to axiomatize $C(\subAtoms_1, \ldots, \subAtoms_k)$.

In formulas, (co)finite sets can be compared $\setsymb_1 = \setsymb_2$, and we include the element relation $\expr \in \cset$.
Axioms for the element relation over (co)finite sets unroll the relation into a finite conjunction as follows:
\begin{align*}
 	&\neg \expr \in \botset 
		&& \expr \in (\{ \expr[alt] \} \cap \setsymb) \leftrightarrow \expr = \expr[alt] \wedge \expr \in \setsymb \\
	&\expr_i \in \topset^i
		&& \expr \in ( \setsymb_1 \setminus \setsymb_2 ) \leftrightarrow \expr \in \setsymb_1 \wedge \neg \expr \in \setsymb_2 \\
	&\expr_i \not\in \topset^j \text{ for } i \neq j
\end{align*}

Equality is axiomatized in terms of the extensionality principle as usual:
\begin{equation*}
	\setsymb_1 = \setsymb_2 \leftrightarrow \fa{e} ( e \in \setsymb_1 \leftrightarrow e \in \setsymb_2)
\end{equation*}

\renewcommand{\doi}[1]{doi: \href{https://doi.org/#1}{\nolinkurl{#1}}}
\bibliographystyle{splncs04}
\bibliography{platzer,literature}

\else\fi 

\end{document}